\documentclass[pdflatex,sn-mathphys-num]{sn-jnl}

\usepackage{amsmath, amssymb, amsfonts}
\usepackage{mathrsfs}
\usepackage{mathtools}
\usepackage{graphicx}
\usepackage{multirow}
\usepackage{booktabs}
\usepackage{manyfoot}
\usepackage{appendix}
\usepackage{xcolor}
\usepackage{textcomp}
\usepackage{upgreek}
\usepackage{xfrac}
\usepackage{marginnote}
\usepackage{hyperref} 
\usepackage[mathscr]{euscript}
\usepackage{cleveref}
\usepackage{adjustbox}
\usepackage{stmaryrd}
\usepackage[T1]{fontenc}
\usepackage[utf8]{inputenc}
\usepackage[scaled]{inconsolata}
\usepackage{multicol}
\usepackage{longtable}
\usepackage{MnSymbol}
\usepackage{comment}
\usepackage{soul}
\usepackage{pdflscape}
\usepackage[font=footnotesize, tableposition=top]{caption}
\usepackage{array}
\usepackage{bbm}
\usepackage{refcount}
\usepackage{tikz}
\usetikzlibrary{cd, external, decorations.pathmorphing, fit, backgrounds, tikzmark, calc, shapes.misc, positioning}
\usepackage{tikz-cd}
\usepackage{algorithm}
\usepackage{algorithmicx}
\usepackage{algpseudocode}
\usepackage{listings}
\usepackage{lstautogobble}
\usepackage{lstlinebgrd}
\usepackage{helvet}
\usepackage{courier}
\usepackage{type1cm}
\usepackage{makeidx}
\usepackage[bottom]{footmisc}
\usepackage[most]{tcolorbox}
\tcbuselibrary{theorems, skins, breakable}

\definecolor{keywordcolor}{rgb}{0.7, 0.1, 0.1}
\definecolor{tacticcolor}{rgb}{0.0, 0.1, 0.6}
\definecolor{commentcolor}{rgb}{0.4, 0.4, 0.4}
\definecolor{symbolcolor}{rgb}{0.0, 0.1, 0.6}
\definecolor{sortcolor}{rgb}{0.1, 0.5, 0.1}
\definecolor{attributecolor}{rgb}{0.7, 0.1, 0.1}

\tcbset{
  minimaliststyle/.style={
    enhanced,
    breakable,
    colback=white,
    colframe=darkgray!10,
    fonttitle=\bfseries,
    coltitle=black,
    rounded corners,
    boxed title style={
      colback=gray!10,
      colframe=gray!50,
      coltext=black,
      rounded corners,
    },
    before skip=8pt,
    after skip=8pt,
  }
}

\newcommand{\mathcolorbox}[2]{\colorbox{#1}{$#2$}}

\newcommand{\Lean}{\textsf{Lean4}}

\newcommand{\NN}{\ensuremath{\mathbb{N}}}
\newcommand{\ZZ}{\ensuremath{\mathbb{Z}}}

\newcommand{\ox}{\ensuremath{\otimes}}

\newcommand{\McA}{\mathcal{A}}

\newcommand{\mcA}{\mathcal{A}}
\newcommand{\mcB}{\mathcal{B}}

\newcommand{\mcM}{\mathcal{M}}

\newcommand{\mfp}{\ensuremath{\mathfrak{p}}}

\DeclareMathOperator{\im}{\ensuremath{im}}

\newcommand{\id}{\ensuremath{\mathsf{id}}}

\newcommand{\quotient}[2]{\left.\raisebox{0.8ex}{$#1$}\!\middle/\!\raisebox{-0.8ex}{$#2$}\right.}

\newcommand{\decompose}{\ensuremath{\mathtt{decomp}}}
\newcommand{\recompose}{\ensuremath{\mathtt{recomp}}}
\newcommand{\hloc}[2]{{#1}_{\left(#2\right)}}
\newcommand{\hloca}[2]{\hloc{#1}{#2}}

\DeclareMathOperator{\proj}{\ensuremath{Proj}}

\DeclareMathOperator{\spec}{\ensuremath{Spec}}
\newcommand{\rad}[1]{\sqrt{#1}}
\DeclareMathOperator{\potion}{\ensuremath{\vartheta}}
\newcommand{\leanmath}[1]{\ensuremath{\footnotesize{\mathtt{#1}}}}
\newcommand{\sfs}{\ensuremath{\mathsf{s}}}
\newcommand{\overbar}[1]{\mkern 1.6mu\overline{\mkern-1.6mu#1\mkern-1.6mu}\mkern 1.6mu}
\newcommand{\mybar}[1]{\ensuremath{\overbar{#1}}}

\newcommand{\scrF}{\ensuremath{\mathscr{F}}}

\newtheorem{theorem}{Theorem}
\newtheorem{proposition}[theorem]{Proposition}
\newtheorem{lemma}[theorem]{Lemma}
\newtheorem{corollary}[theorem]{Corollary}
\newtheorem{example}{Example}
\newtheorem{remark}{Remark}
\newtheorem{definition}{Definition}
\newtheorem{construction}{Construction}

\title{Formalizing multi-graded Brenner-Schröer Proj schemes and dilatations of rings in Lean4}

\author*[3,4]{\fnm{Arnaud} \sur{Mayeux}}\email{mayeux@wisc.edu}
\author[1,2]{\fnm{Jujian} \sur{Zhang}}

\affil[1]{\orgname{Axiom Math},  \country{USA}}

\affil[2]{\orgname{Imperial College London},  \country{UK}}

 \affil[3] {\orgname{The Hebrew University of Jerusalem}, \country{IL}}

\affil[4] {\orgname{University of Wisconsin-Madison}, \country{USA}}

\abstract{
We present a formalization in Lean4 of some multi-graded algebraic geometry constructions, focusing on the Brenner--Schröer Proj construction and algebraic dilatations of rings. Multi-graded Proj schemes, defined from rings graded by more general monoids than $\mathbb{N}$ or $\mathbb{Z}$, have recently attracted increasing attention and play an important role in several areas of modern algebraic geometry.

Our work follows the algebraic approach developed in the literature and provides a formal implementation of multi-graded Proj within the Lean4 theorem prover. In addition, we formalize dilatations of rings, an operation in commutative algebra closely related to localization and to blowup constructions.

This article gives a comprehensive account of the definitions, main results, and design choices underlying the formalization. It is intended both as documentation of the development and as a foundation for future extensions in formalized algebraic geometry. The corresponding code is made publicly available, supporting further developments in the formalization of advanced geometric structures.
}

\keywords{multi-graded Brenner-Schröer Proj, algebraic dilatations, Proj, Lean4}

\begin{document}
\maketitle

\tableofcontents

\section{Introduction}
\subsection{Overview and objectives of the formalization}
Multi-graded algebraic geometry refers to constructions in algebraic geometry built on rings graded by monoids or groups more general than $\mathbb{N}$ or $\mathbb{Z}$. In this work, we implement such constructions in Lean 4 \cite{Mou21}. The main focus of this work is the Brenner--Schröer Proj construction, introduced by Brenner and Schröer in \cite{BS03}. This construction has recently been studied in several works \cite{KSU21, MR24} and is poised to become a classical topic in algebraic geometry. Multi-graded Proj schemes are used in several areas of contemporary algebraic geometry, including Lie theory and geometric representation theory \cite[§2.3.7]{BV25} as well as the minimal model program \cite{Ya26}. If $A$ is a ring graded by the natural numbers and regarded as graded by the integers, then Grothendieck's $\operatorname{Proj}(A)$ \cite{Gr61} and Brenner--Schröer's $\operatorname{Proj}(A)$ \cite{BS03} are canonically identified \cite{BS03}.
In this work, we provide an implementation of the Brenner--Schröer construction of multi-graded Proj schemes within the Lean 4 theorem prover \cite{Mou21}, following the algebraic treatment in Mayeux-Riche \cite{MR24}.
We also formalize dilatations of rings, a fundamental operation in commutative algebra analogous to localization (cf. \cite{stacks-project, M24, DMdS23}) and related to blowups and part of the broader area of multi-graded algebraic geometry. Dilatations form a process that adds fractions to rings, refining localizations. They allow one to control that both the numerator and denominator are prescribed \cite{M24}.

Part of this formalization was announced in the short note \cite{MZ} and already discussed in the thesis of Jujian Zhang. The code is available on GitHub \cite{projconstruction}. 
 This article presents a complete account of the formalization and serves as a guide for future work building on these fundamental constructions.
 \subsection{Relation to other works}
 We now discuss the methods and related works used in this paper. 
The Grothendieck $\mathbb{N}$-graded Proj construction \cite{Gr61} was formalized in Lean by Zhang in \cite{Z23} and is now in mathlib. The work \cite{Z23} provided the first example of non-affine schemes ever formalized in a theorem prover.

In contrast to \cite{Gr61, Z23}, where Proj is first defined as a set of prime ideals, in the present work we define multi-graded Proj directly by gluing, following the approach of \cite{MR24} (itself a minor variation of the approach of \cite{BS03}), without passing through a set-theoretic construction involving graded ideals. 

The present development also relies on and is closely inspired by the pioneering formalization of graded algebra in Lean carried out by Zhang and Wieser in \cite{WZ22}.

Of course, our work builds on the implementation of the foundations of the theory of schemes in Lean \cite{Bual}, which is now part of mathlib.
\subsection{Mathematical description of multi-graded Proj schemes}
We now describe the technical core of the mathematical content formalized in this paper. We also formalize related results, but the main achievement of this formalization effort concerns the following material.

We recommend that readers familiarize themselves with the structure of this section before proceeding with the remainder of the paper, as it provides a roadmap for the material developed in subsequent sections.

Let $A$ be a commutative unital ring graded by a finitely generated abelian group $\iota$.

A (multiplicative) submonoid $S$ of $A$ is homogeneous if every element of $S$ is homogeneous. Equivalently a submonoid of $A$ is homogeneous if it is generated by homogeneous elements. 
In this situation, the localization $A_S$ of $A$ with respect to $S$ is canonically $\iota$-graded.
Given a homogeneous (multiplicative) submonoid $S \subset A$, we will denote by $\overline{S}$ the homogeneous submonoid consisting of homogeneous divisors of elements in $S$. Note that we have a canonical isomorphism of graded rings
 $A_{ \overline{S}} \cong A_{S}.
$

\begin{definition}[\cite{BS03,MR24}] \label{defi:intro1}
 Let $S$ be a homogeneous subset of $A$. We denote by $\deg(S)$ the subset of $M$ defined as 
$\deg(S):= \{ m \in \iota : \exists s \in S , s  \in A_m \}. $
\end{definition}
 
Note that $\deg(\{0\})=\iota$.
More generally, if $S$ is a homogeneous submonoid of $A$, then $\deg(S)$ is a submonoid of $\iota$.

\begin{definition}[\cite{BS03,MR24}] \label{defi:intro2}
Let $S$ be a homogeneous (multiplicative) submonoid of $A$. 
 We put $\iota[S]=\iota[S \rangle^{\mathrm{gp}}= M[S \rangle- M[S \rangle$, the subgroup of $M$ generated by $\deg(S)$.
\end{definition}

Formalizations of Definitions \ref{defi:intro1} and \ref{defi:intro2} are documented below in Definition \ref{def:deg}.

\begin{definition}[\cite{BS03,MR24}] \label{defi:intro3}
 A homogeneous submonoid $S$ of $A$ is called \emph{$\iota$-relevant} (or just \emph{relevant} if $\iota$ is clear from the context) if for any $m$ in $\iota$ there exists $n \in \mathbf{Z}_{> 0}$ such that $nm$ belongs to $\iota[\overline{S}]$, i.e. if $\iota/(\iota[\overline{S}])$ is a torsion abelian group.
\end{definition}

Formalization of Definition \ref{defi:intro3} is documented below in Definition \ref{def:relev} (Definition \ref{def:relev} also formalizes related notions of mathematical interest used in \cite{BS03, MR24}).

\begin{example} For example, let $A= \mathbb{Z}[X,Y]$ graded by $\mathbb{Z}^2$ with $\deg(X)=(1,0)$ and $\deg(Y)=(0,1)$. The homogeneous submonoid $S$ generated by $X$ in $A$ is not relevant because $\iota[\overline{S}]= \mathbb{Z} \times 0 \subset \mathbb{Z}^2$. The homogeneous submonoid $T$ generated by $XY$ in $A$ is relevant because $\iota[\overline{T}]= \mathbb{Z}^2$. \end{example}

Let $S$ be a homogeneous submonoid of $A$. 
The degree-$0$ part $(A_S)_0$ of the localization $A_{S}$ is denoted $A_{(S)}$ and is called the \emph{potion} of $A$ with respect to $S$. 
We have a canonical identification
$A_{(\overline{S})} \cong A_{(S)}$.
If $S$ and $T$ are submonoids of $A$, we will denote by $ST$ the submonoid of $A$ generated by $S \cup T$, i.e. $ST = \{st : s \in S, \, t \in T \}$. Of course, $ST$ is homogeneous if $S$ and $T$ are.
The following is the key result that makes the Proj construction work.

\begin{proposition}[\cite{MR24}] \label{prop:magicintro}
Let $S$ and $T$ be homogeneous finitely generated submonoids of $A$. 
\begin{enumerate}
\item 
\label{it:magic-10}
We have a canonical homomorphism of potion rings 
$A_{(S )} \to  A_{(ST )}$.
\item 
\label{it:magic-30}
Assume that $S$ is relevant. The morphism of schemes 
$
\mathrm{Spec} (A_{(ST )}) \to \mathrm{Spec} (A_{(S)})
$
induced by the ring homomorphism in (1) is an open immersion of schemes.
\end{enumerate}
\end{proposition}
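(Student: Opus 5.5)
Part (1) is formal. Since $S \subset ST$, the localization map $A \to A_{ST}$ sends every element of $S$ to a unit, so by the universal property of localization it factors through a ring homomorphism $A_S \to A_{ST}$, $a/s \mapsto a/s$. This map preserves the $\iota$-grading, since the degree of $a/s$ is $\deg a - \deg s$ on both sides. Restricting to degree-$0$ parts gives $A_{(S)} \to A_{(ST)}$.

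For part (2), the plan is to show that $A_{(ST)}$ is a localization of $A_{(S)}$ at a single element; a principal open immersion $\mathrm{Spec}(R_f) \to \mathrm{Spec}(R)$ then does the job. Write $T$ as generated by homogeneous $t_1,\dots,t_k$ and put $t = t_1\cdots t_k$, of degree $d = \deg t$. Up to replacing $S$ by $\overline S$, which does not change $A_{(S)}$, $A_{(ST)}$ or relevance, relevance gives $n>0$ with $nd \in \iota[\overline S]$. Hence we can write $nd = \deg s_1 - \deg s_2$ with $s_1, s_2 \in \overline S$ homogeneous, using that $\deg(\overline S)$ is a monoid so the group it generates is the set of differences.

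Now set
\[
f = \frac{t^n s_2}{s_1} \in A_{(S)},
\]
which has degree $0$ and is a fraction over $s_1 \in \overline S$. We check that $A_{(S)} \to A_{(ST)}$ identifies $A_{(ST)}$ with $A_{(S)}[f^{-1}]$.

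First, $f$ becomes invertible in $A_{(ST)}$, with inverse $s_1/(t^n s_2)$. This is a degree-$0$ element, and its denominator lies in $\overline{ST}$; note $\overline{ST}\supseteq \overline S\,\overline T$, and $A_{(\overline{ST})} = A_{(ST)}$. So we get $\phi : A_{(S)}[f^{-1}] \to A_{(ST)}$.

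Next, $\phi$ is surjective. A general element of $A_{(ST)}$ is $a/(s t')$ with $t'\in T$, and $t'$ divides $t^m$ for $m$ large. Rewriting, we may take the element to be $a/(s t^{nm})$ with $\deg a = \deg s + nm d$. Multiplying numerator and denominator by $s_2^m$ gives
\[
\frac{a\, s_2^m}{s\, s_1^m}\cdot f^{-m},
\]
and the first factor lies in $A_{(S)}$ by a degree check.

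Finally, $\phi$ is injective. If $x/f^m$ maps to $0$, then some element of $ST$, which we may take of the form $s' t^{N}$, kills the numerator of $x$ in $A$. Since $t^N$ divides $t^{nN'}$, multiplying by suitable powers of $s_2$ turns this into $f^{N'}$ killing $x$ in $A_{(S)}$, up to units coming from $S$.

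The main obstacle is this bookkeeping. The surjectivity and injectivity computations must correctly handle the passage between $S$ and $\overline S$, and the fact that $s_1$ only divides an element of $S$ rather than lying in $S$. I would first prove $A_{(\overline S)} \cong A_{(S)}$ cleanly and then work entirely with saturated monoids. Once the isomorphism $A_{(ST)} \cong A_{(S)}[f^{-1}]$ is established, the induced morphism is the standard open immersion $D(f)\hookrightarrow \mathrm{Spec}(A_{(S)})$.
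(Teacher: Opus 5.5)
Your proof is correct. It rests on the same core idea as the paper's argument (Theorem~\ref{potion-loc-iso}, Theorem~\ref{exists-finite-potion-generator} and Corollary~\ref{spec-potionToMul-is-open-immersion}): realize $A_{(ST)}$ as a localization of $A_{(S)}$ at degree-zero fractions $t^n s'/s$ with $s,s'\in\overline{S}$, which relevance supplies. Your surjectivity and injectivity computations match the paper's almost line by line.

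The difference is in how you package the inverted elements. The paper uses a ``potion generator'': for \emph{each} generator $t_i$ of $T$ it chooses its own $n_i, s_i, s'_i$. It then proves $\potion(ST)\cong\potion(S)_{\sfs(T')}$, where $\sfs(T')$ is the submonoid generated by all the fractions $t_i^{n_i}s'_i/s_i$. The open immersion follows only at the end, from $\sfs(T')$ being finitely generated.

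You instead multiply the generators into a single $t$ and invert one element $f$. This lands you directly on a principal open $D(f)$, so the final step is immediate. You also replace $S$ by $\overline{S}$ once and for all, whereas the paper keeps $S$ and transports fractions back through $\leanmath{equivBarPotion}$.

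The paper's formulation buys flexibility. The isomorphism holds for any $T$ admitting a potion generator, finitely generated or not. Moreover, potion generators of $R$ and $T$ combine by disjoint union with $\sfs(R'\oplus T')=\sfs(R')\sfs(T')$. That identity is exactly what makes the mixing isomorphism $\potion(ST)\otimes_{\potion(S)}\potion(SR)\cong\potion(RST)$, and hence the cocycle condition for the gluing, essentially formal.

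Your single-element version is more economical for the open-immersion statement itself. Reused in the gluing, however, it would require comparing localizations at different single elements (e.g.\ $f_Tf_R$ versus the element built from the generators of $TR$), so it needs some extra bookkeeping there.
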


The proof of Proposition \ref{prop:magicintro}$\,$(\ref{it:magic-30}) consists in showing that $A_{(ST)}$ can be identified with the localisation of $A_{(S)}$ with respect to a certain finitely generated submonoid. Proposition \ref{prop:magicintro} corresponds to Theorem \ref{potion-loc-iso} (and preliminary material) below.

We denote by $\mathcal{F}_A$ the set of all relevant homogeneous submonoids of $A$ which are finitely generated as submonoids of $(A,\times)$.

\begin{definition}[\cite{BS03,MR24}] \label{defi:intropro}
Let
$\mathcal{F} \subset \mathcal{F}_A$ be a subset.
 For each $S \in \mathcal{F}$, let $D_{\dagger}(S)$ be the spectrum of the potion $A_{(S)}$. 
 If $S,T \in \mathcal{F}$, the affine scheme $D_{\dagger}(ST)$ identifies canonically with an open subscheme of $D_{\dagger}(S)$. For each $S,T \in \mathcal{F}$, we have equalities
$
 D_{\dagger}({S S}) = D_{\dagger}(S) $ and $ D_{\dagger}({ST})=D_{\dagger}({TS}).
$
 Moreover, for each triple $S,T,U \in \mathcal{F}$, we have
$
 D_{\dagger}({ST} )\cap D_{\dagger}({SU}) = D_{\dagger}({TS}) \cap D_{\dagger}({TU}).
$
Now, by glueing, from these data we obtain a scheme $\mathrm{Proj}_{\mathcal{F} } (A)$ and, for each $S \in \mathcal{F}$, an open immersion $\varphi_S : D_{\dagger}(S) \to \mathrm{Proj}_{\mathcal{F}} (A)$, such that
$
\mathrm{Proj}_{\mathcal{F} } (A) = \bigcup_{S \in \mathcal{F} } \varphi_S( D_{\dagger}(S)).
$
In practice, we will often identify $D_{\dagger}(S) $ and $\varphi_S (D_{\dagger}(S))$. 
In the case when $\mathcal{F} = \mathcal{F}_A$,
 the scheme $\mathrm{Proj}_{\mathcal{F}_A} (A)$ is the Brenner-Schröer multi-graded Proj schemes.
\end{definition}

The formalization of Definition \ref{defi:intropro} is documented below in Construction \ref{const:proj} and Definition \ref{defi:full}.

\subsection{Relation to other multi-graded constructions and limit of the present work}

Recall that multi-graded algebraic geometry refers to constructions in algebraic geometry involving rings graded by commutative monoids or groups more general than $\mathbb{N}$ or $\mathbb{Z}$.

The multi-graded Proj construction also makes sense for multi-graded quasi-coherent algebras; see \cite{MR24}. Blowups of schemes (possibly multi-centered, cf. \cite{MR24, Ta26}) are defined via Proj of multi-graded quasi-coherent algebras. To define the multi-graded Proj of a quasi-coherent algebra, one may glue the Proj constructions of graded rings locally. However, as a limitation of the present work, we do not carry out this global case. We plan to formalize it in a future project, but doing so would require developing a substantial amount of foundational material on schemes in mathlib, going beyond the formalization of multi-graded algebraic geometry.  However, this is not out of reach. We believe that, within a few years, the methods developed in the present paper, combined with new material on quasi-coherent sheaves in mathlib, will make it possible to formalize global multi-graded Proj constructions. This limitation provides yet another motivation for pursuing the formalization of the Stacks Project \cite{stacks-project}.

Another closely related notion is that of affine blowups, also called dilatations. Dilatations of schemes are likewise defined locally via dilatations of rings. They are open subschemes of projective blowups \cite{M24, MR24}. In the present work, we formalize dilatations of rings.

 Although the formalization of Proj presented in this paper is very general and involves non-affine schemes, it can still be viewed as a local construction (its input is a ring, rather than a scheme or a quasi-coherent algebra) underlying a more general Proj construction. Similarly, dilatations of rings or affine schemes can be regarded as local special (and necessary) cases of the more general global construction of dilatations of schemes. In future independent works, one may envision a complete formalization of multi-graded algebraic geometry in Lean.

\subsection{Content}

Section \ref{sec:general-setup-graded} is about the general setup we use throughout the paper about graded rings. In this work, a graded ring is a commutative unital ring $A$ together with a grading $A= \bigoplus_{i \in \iota} A_i$ where $\iota$ is an abelian monoid $(A_i \cdot A_j \subset A_{i+j})$. From now on, we assume that $\iota$ is an abelian group in this introduction.  A multi-plicative submonoid of a graded ring is called homogeneous if its elements are homogeneous. An homogeneous submonoid $S$ is called relevant if the group generated by the degrees of the homogeneous divisors of $S$ is a torsion subgroup of $\iota$. Section \ref{sec:relevant-homogeneous-submonoid} is about homogeneous and relevant submonoids. In sections \ref{sec:homogeneous-localization} and \ref{sec:graded-loc}, we study localization of graded rings. Section \ref{sec:graded-tensor} is devoted to tensor product of graded rings and related technical results. Section \ref{sec:potions} formalizes the definition of potions. Potions are rings defined as degree zero part of homogeneous localizations. They are at the heart of the Proj construction. We establish many results on potions in sections \ref{sec:potions}, \ref{sec:glu-proj} and \ref{sec:enlarg-fam-potion}, these are preliminary to define Proj schemes. The definition of Brenner-Schröer Proj schemes is in section \ref{sec:glu-proj}. The functoriality of the Proj construction is established in section \ref{sec:functo}.  Section \ref{sec:dil-ring} formalizes the definition and the universal property of dilatations of rings. 

This paper also contains the formalization of several facts on multi-graded algebra, not strictly required to define the definition of the multi-graded Proj schemes, but are part of the theory of multi-graded Proj schemes. In particular, we formalized material on tensor products of graded rings, lemmas on relevant elements, and the ideal generated by relevant elements. These results will help to apply the theory in relation to applications mentioned in \cite{BS03,MR24}. Finally, we invite the interested reader to consult the full content of the present paper for a complete account of the formalization presented here.

\subsection{Acknowledgment} 
We thank the referee for a careful reading of the manuscript and for providing valuable comments that helped improve the exposition of the formalized results. We also thank the referee for pointing out several issues and typos in the mathematical presentation. This work has received funding from ISF grant 1577/23 (Einstein Institute of Mathematics, The Hebrew University of Jerusalem).

\section{General Setup}\label{sec:general-setup-graded}
A ring $R$ is said to be graded by a monoid $\iota$ if $R \cong \bigoplus_{i\in\iota} R_i$ where 
$R_i$ are subgroups of $R$ such that $R_i R_j \subseteq R_{i + j}$. 
When working with graded rings on paper, the two rings $R$ and $\bigoplus_{i\in\iota}R_i$ are often identified. 
However, during formalisation, in order to have an ergonomic framework for graded rings,  
we need a way to be able to talk about the two rings as different objects (typically for type theoretical reasons) while
maintaining the ability to switch between the two with ease. 
A full discussion of how graded ring is implemented in \Lean~can be found in \cite{WZ22}. 
The most general setup is as following:

\begin{lstlisting}[caption={Graded ring and modules}, mathescape=true, extendedchars=true]
variable {A M ιA ιM σA σM : Type*} 
variable [Ring A] [AddCommGroup M] [Module A M]
variable ($\mathcal{A}$ : ιA → σA) ($\mathcal{M}$ : ιM → σM)
variable [DecidableEq ιA] [AddMonoid ιA] [DecidableEq ιM]  [VAdd ιA ιM]
variable [SetLike σA A] [SetLike σM M]
variable [AddSubgroupClass σA A] [AddSubgroupClass σM M] 
variable [GradedRing $\mathcal{A}$] 
variable [DirectSum.Decomposition $\mathcal{M}$] [SetLike.GradedSMul $\mathcal{A}$ $\mathcal{M}$]
\end{lstlisting}
\lstinline[mathescape=true]|[SetLike σA A]| and \lstinline[mathescape=true]|[AddSubgroupClass σA A]| together assert that the terms of \lstinline|σA| are subgroups of $A$.
In particular, for each term $i$ of \lstinline[mathescape=true]|$\iota$A|, $\mathcal{A}_i$ is a subgroup of $A$. 
In the general setup, we do not use a concrete type like \lstinline[mathescape=true]|$\mcA$ : ιA → AddSubgroup A|; this is to avoid code duplication:
for example if we assume $A$ to be an $R$-algebra, by using \lstinline[mathescape=true]|[AddSubgroupClass σA A]|, we can specialize \lstinline|σA| to be $R$-submodules of $A$ and 
realize $A$ as a graded algebra without the needs to duplicate any general result.
\lstinline[mathescape=true]|[GradedRing $\mathcal{A}$]| is an abbreviation for \lstinline[mathescape=true]|[DirectSum.Decomposition $\mathcal{A}$]| and 
\lstinline[mathescape=true]|[SetLike.GradedMonoid $\mathcal{A}$]| where \lstinline[mathescape=true]|[SetLike.GradedMonoid $\mathcal{\McA}$]| is an abbreviation for 
\lstinline[mathescape=true]|[SetLike.GradedOne $\mcA$]| asserting that $1$ has grade zero and \lstinline[mathescape=true]|[SetLike.GradedMul $\mcA$]| asserting that $\mcA_i\mcA_j \subseteq \mcA_{i+j}$.
Similarly, \lstinline[mathescape=true]|[SetLike.GradedSMul $\mcA$ $\mcM$]| asserts that $\mcA_i \cdot \mcM_j \subseteq \mcM_{i+j}$ where $i + j$ is provided by \lstinline|[VAdd ιA ιM]|. 
The general setup here is versatile: by allowing \lstinline|ιA| and \lstinline|ιM| to be different types, we can have graded rings and modules that are not graded by the same monoid --- for example the ring is graded by $\NN$ and the module by $\ZZ$.

\subsection{Graded Ring Homomorphism}

Let $\iota$ be a monoid. Suppose $A$ and $B$ are two $\iota$-graded rings with grading $\mcA$ and $\mcB$ respectively.

\begin{lstlisting}[extendedchars=true]
variable {A B : Type*} [Ring A] [Ring B]
variable {ι : Type*} [DecidableEq ι] [AddMonoid ι]
variable {σA σB : Type*} [SetLike σA A] [SetLike σB B] 
variable [AddSubgroupClass σA A] [AddSubgroupClass σB B]
variable (𝒜 : ι → σA) [GradedRing 𝒜]
variable (ℬ : ι → σB) [GradedRing ℬ]
\end{lstlisting}
\begin{definition}\label{graded-ring-hom}
  A \emph{graded ring homomorphism} from $A$ to $B$ is a ring homomorphism $f : A \to B$ such that for all $i \in \iota$, 
  we have $f(\mcA_i) \subseteq \mcB_i$.
\begin{lstlisting}[caption={Graded ring homomorphism}, label={lst:graded-ring-hom}]
structure GradedRingHom extends RingHom A B where
  map_mem' : ∀ {i} {x : A}, 
    x ∈ 𝒜 i → toFun x ∈ ℬ i

scoped[Graded] infix:25 "→+*" => GradedRingHom
\end{lstlisting}
\begin{flushright}
  \cite[\lstinline|GradedRingHom| file location:{\footnotesize \textsf{Grading/GradedRingHom.lean}}]{projconstruction}
\end{flushright}
\end{definition}
In Listing \ref{lst:graded-ring-hom}, we keep the same notation for ring homomorphisms --- \lstinline[mathescape=true]|$\mcA$ →+* $\mcB$| is a graded ring homomorphism from $A$ to $B$.

To continue the philosophy that there should be easy ways to switch between the internally graded ring $A$ and the externally graded ring $\bigoplus_{i\in\iota} \mcA_i$, 
for a graded ring homomorphism $f : A \to B$, we define the corresponding ring homomorphism $\bigoplus_{i\in\iota}\mcA_i \to \bigoplus_{i\in\iota}\mcB_i$ by
\[
f_{\oplus} : \begin{tikzcd}[column sep=huge]
  \bigoplus_{i\in\iota} \mcA_i \arrow[r, "\recompose"] & A \arrow[r, "f"] & B \arrow[r, "\decompose"] & \bigoplus_{i\in\iota} \mcB_i.
\end{tikzcd}
\]
\begin{lstlisting}
def directSum (f : GradedRingHom 𝒜 ℬ) : (⨁ i, 𝒜 i) →+* (⨁ i, ℬ i) :=
  RingHom.comp (DirectSum.decomposeRingEquiv ℬ) <| 
    f.comp (DirectSum.decomposeRingEquiv 𝒜).symm
\end{lstlisting}

\begin{lemma}{}{}
  The kernel of a graded ring homomorphism $f : A \to B$ is a homogeneous ideal of $A$.
\begin{flushright}
\cite[\lstinline|GradedRingHom.ker| file location: {\footnotesize\textsf{Module/Graded/RingHom.lean}}]{jjaassoonn_dimensiontheory}
\end{flushright}
\end{lemma}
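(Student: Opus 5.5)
To show $\ker f$ is homogeneous, I will use the standard criterion: an ideal $I$ of a graded ring is homogeneous if and only if, for every $x \in I$, each homogeneous component $x_i = \decompose(x)_i$ also lies in $I$. In mathlib this is the characterization of \lstinline|Ideal.IsHomogeneous| via \lstinline|decompose|. So I fix $x \in A$ with $f(x) = 0$ and $i \in \iota$, and I aim to prove $f(x_i) = 0$.

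Write $x = \sum_{j} x_j$ with $x_j \in \mcA_j$, a finite sum. Since $f$ is additive, $0 = f(x) = \sum_j f(x_j)$. By the defining property \lstinline|map_mem'| of Definition \ref{graded-ring-hom}, each $f(x_j)$ lies in $\mcB_j$. Hence $\sum_j f(x_j)$ is a decomposition of $0$ into components of pairwise distinct degrees in $B$.

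By uniqueness of decomposition in the graded ring $B$, the element of $\bigoplus_j \mcB_j$ with components $f(x_j)$ recomposes to $0$. It is therefore the zero element, since $\decompose$ is a bijection. In particular $f(x_i) = 0$.

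Equivalently, one can phrase this as $\decompose_{\mcB}(f(x))_i = f(\decompose_{\mcA}(x)_i)$, that is, $f$ commutes with taking the $i$-th component. This is essentially the statement that $f_\oplus$ defined above is compatible with projections, and applying it to $f(x)=0$ gives the result.

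The main obstacle is formal rather than mathematical: establishing this commutation lemma, $(f x)_i = f(x_i)$, cleanly. The plan is to prove it by induction on $x$ via \lstinline|DirectSum.Decomposition.inductionOn|. The zero case and the additive case are immediate. For a homogeneous $x \in \mcA_j$, both sides are computed by \lstinline|decompose_of_mem_same| or \lstinline|decompose_of_mem_ne|, according to whether $i=j$, using $f(x)\in\mcB_j$. Once this lemma is available, the homogeneity of the kernel follows in one line.
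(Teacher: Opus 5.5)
Your proposal is correct and follows essentially the same approach as the paper, whose proof is the single identity $f(x_i) = \bigl(f_{\oplus}(\bigoplus_i x_i)\bigr)_i = \decompose(f(x))_i$ applied to $f(x)=0$, i.e.\ that $f$ commutes with taking homogeneous components. Your uniqueness-of-decomposition argument, and the planned induction on $x$ proving that commutation lemma, simply make explicit why this identity holds.
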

\begin{proof}{}
  Suppose $x = \sum_i x_i$ is mapped to $0$ by $f$ where each $x_i \in \mcA_i$. 
  We need to show $f(x_i)$ is zero as well. $f(x_i) = \left(f_{\oplus}\left( \bigoplus_i x_i \right)\right)_i = \decompose(f(x))_i = 0$.
\begin{lstlisting}[caption={Kernel of a graded ring homomorphism}, label={lst:ker-graded-ring-hom}]
def ker (f : GradedRingHom 𝒜 ℬ) : HomogeneousIdeal 𝒜 where
  __ := RingHom.ker f
  is_homogeneous' i x (hx : _ = 0) := show _ = 0 by
    simp [apply_decompose, ← decompose_apply, hx, RingHom.mem_ker] at hx ⊢
\end{lstlisting}
\end{proof}

\begin{remark}{}{}
Similarly, we define the notion of a graded ring isomorphism and a grade algebra homomorphism.
\begin{lstlisting}[extendedchars=true]
structure GradedRingEquiv extends RingEquiv A B where
  map_mem' : ∀ {i : ι} {x : A}, x ∈ 𝒜 i → toFun x ∈ ℬ i
  inv_mem' : ∀ {i : ι} {y : B}, y ∈ ℬ i → invFun y ∈ 𝒜 i

scoped[Graded] infix:25 "≃+*" => GradedRingEquiv

structure GradedAlgHom extends A →ₐ[R] B, GradedRingHom 𝒜 ℬ

scoped[Graded] notation:25 𝒜 " →ₐ[" R "] " ℬ => GradedAlgHom (R := R) 𝒜 ℬ
\end{lstlisting}
\end{remark}

\subsection{Lemmas about Homogeneous elements} 
The following lemma is used repetitively in the formalisation of multi-graded Proj construction.
\begin{lemma}\label{exists-homogeneous-of-dvd}
  Let $A$ be a commutative $\iota$-graded ring where $\iota$ is an abelian group. Suppose $a$ and $c$ are homogeneous elements of $A$ such that $a \mid c$.
  Then there exists a \emph{homogeneous} element $b$ such that $ab = c$.
\begin{flushright}
\cite[\lstinline|exists_homogeneous_of_dvd| file location: {\footnotesize\textsf{ForMathlib/SetLikeHomogeneous.lean}}]{projconstruction}
\end{flushright}
\end{lemma}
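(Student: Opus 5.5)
Let $a \in \mcA_i$ and $c \in \mcA_j$ be homogeneous, and fix some $b' \in A$ with $ab' = c$; in general $b'$ is not homogeneous. My plan is to replace $b'$ by its homogeneous component of degree $j - i$. This degree makes sense because $\iota$ is a group. So I set $b := \decompose(b')_{j-i} \in \mcA_{j-i}$ and show that $ab = c$.

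The key step is to compute the degree-$j$ component of $ab'$ in two ways. First, since $c$ is homogeneous of degree $j$, we have $\decompose(c)_j = c$. Second, write $b' = \sum_k b'_k$ with $b'_k \in \mcA_k$. Then
\[
ab' = \sum_k a b'_k, \qquad a b'_k \in \mcA_{i+k}.
\]
Hence the degree-$j$ component of $ab'$ is the sum of those $ab'_k$ with $i + k = j$. Because $\iota$ is a group, addition by $i$ is injective, so the only such index is $k = j - i$. Therefore
\[
\decompose(ab')_j = a\, b'_{j-i} = ab.
\]
Combining the two computations gives $c = \decompose(c)_j = \decompose(ab')_j = ab$.

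In the formalization, I expect this to be a direct application of the mathlib lemma describing the decomposition of a product with a homogeneous left factor (\lstinline|DirectSum.coe_decompose_mul_of_left_mem|, or its variant for cancellative index types). That lemma gives $(ab')_{n} = a \cdot b'_{n - i}$ when $i \le n$ in the appropriate sense, and for a group this condition is automatic. I would then rewrite $c$ as its own degree-$j$ component using \lstinline|DirectSum.decompose_of_mem_same|.

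The main obstacle is bookkeeping rather than mathematics. We must arrange the index arithmetic so that $i + (j - i) = j$ is recognized, which needs commutativity or \lstinline|add_sub_cancel|. We must also pick the correct form of the decomposition lemma (left or right multiplication, and the subtraction-based versus canonically-ordered versions). The degenerate case $c = 0$ requires no separate treatment, since the argument above covers it.
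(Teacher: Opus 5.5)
Your proof is correct and is essentially the paper's own argument: take any $b'$ with $ab'=c$, compare degree-$j$ components, and use that $i+k=j$ forces $k=j-i$ in a group, which gives $a\,b'_{j-i}=c$. For the Lean side, the right product-decomposition lemma for a group index is the cancellative one, \texttt{DirectSum.coe\_decompose\_mul\_add\_of\_left\_mem}, not the canonically ordered variant with the hypothesis $i \le n$.
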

\begin{proof}{}
  Let $b = \dots + b_0 + b_1 + \dots $ be an arbitrary element of $A$ such that $ab = c$.
  Suppose $a$ has degree $i$ and $c$ has degree $j$, by looking at the $j$-th coordinate of the equaltion $ab = c$, 
  we see that $a b_{j - i} = c$ as well.
\end{proof}

\section{Relevance}\label{relevant-homogeneous-submonoid}\label{sec:relevant-homogeneous-submonoid}
In this section, we assume that $A$ is a commutative $\iota$-graded rings where $\iota$ is an abelian group.
We develop the notion a relevant homogeneous submonoid of $A$. 

\subsection{Homogeneous Submonoid}

A homogeneous submonoid of $A$ is a submonoid such that every element is homogeneous. Equivalently, it is a submonoid generated by homogeneous elements. Since this equivalent characterization is more convenient for our purposes, we adopt the following definition.

\begin{definition}{(Homogeneous Submonoid)}{}
  A \emph{homogeneous submonoid} of a graded ring $A$ is a submonoid $S$ of $A$ such that $S$ can be generated by homogeneous elements.
\begin{lstlisting}[caption={Homogeneous submonoid}]
structure HomogeneousSubmonoid extends Submonoid A where
  homogeneous_gen : ∃ (s : Set A),
    toSubmonoid = Submonoid.closure s ∧ ∀ x ∈ s, SetLike.IsHomogeneousElem 𝒜 x
\end{lstlisting}
\begin{flushright}
\cite[\lstinline|HomogeneousSubmonoid| file location: {\footnotesize\textsf{HomogeneousSubmonoid/Basic.lean}}]{projconstruction}
\end{flushright}
\end{definition}
By definition, if a set $s$ only contains homogeneous elements, then the submonoid $\left\langle  s \right\rangle$ generated by $s$ is a homogeneous submonoid.
\begin{lstlisting}
def closure (s : Set A) (hs : ∀ x ∈ s, SetLike.Homogeneous 𝒜 x) : HomogeneousSubmonoid 𝒜 where
  __ := Submonoid.closure s
  homogeneous_gen := ⟨s, rfl, hs⟩
\end{lstlisting}
The submonoid $\{1\}$ is a homogeneous submonoid of any graded ring $A$.

\begin{lemma}{}{}
  Let $S$ be a homogeneous submonoid of $A$ and $\Phi : A \to B$ be a graded ring homomorphism, then
  $\Phi_{\star}S := \left\langle \Phi(S) \right\rangle$ is a homogeneous submonoid of $B$.
\begin{flushright}
\cite[\lstinline|HomogeneousSubmonoid.map| file location: {\footnotesize\textsf{HomogeneousSubmonoid/Basic.lean}}]{projconstruction}  
\end{flushright}
\end{lemma}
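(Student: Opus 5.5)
The definition of a homogeneous submonoid only asks for some set of homogeneous generators. So the natural plan is to produce an explicit generating set of homogeneous elements for $\Phi_{\star}S = \left\langle \Phi(S)\right\rangle$.

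By the definition of a homogeneous submonoid, there is a set $s \subseteq A$ with $S = \langle s\rangle$ and every $x \in s$ homogeneous. I would take $\Phi(s)$ as the candidate generating set for $\Phi_\star S$. Two things then need to be verified.

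\emph{Homogeneity of the generators.} Let $y \in \Phi(s)$, say $y = \Phi(x)$ with $x \in s$ and $x \in \mcA_i$. By Definition \ref{graded-ring-hom}, $\Phi(\mcA_i) \subseteq \mcB_i$, so $y \in \mcB_i$ and $y$ is homogeneous.

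\emph{Equality of submonoids.} We need $\langle \Phi(S)\rangle = \langle \Phi(\langle s\rangle)\rangle = \langle \Phi(s)\rangle$. This is the standard fact that the image of a monoid closure under a monoid homomorphism is the closure of the image. Since $\Phi$ is multiplicative and $\Phi(1)=1$, the set $\Phi(\langle s\rangle)$ is already a submonoid, and it equals $\langle \Phi(s)\rangle$.
\begin{itemize}
\item The inclusion $\supseteq$ is immediate, because $\Phi(s) \subseteq \Phi(\langle s\rangle)$ and the latter is a submonoid.
\item For $\subseteq$, induct on the membership of $x$ in $\langle s\rangle$. Elements of $s$ map into $\Phi(s)$, $1$ maps to $1$, and products map to products.
\end{itemize}
In Lean this is \lstinline|Submonoid.map_closure| (or \lstinline|MonoidHom.map_mclosure|), applied to the underlying monoid homomorphism of $\Phi$. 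One could define $\Phi_\star S$ directly as \lstinline|Submonoid.map| of $S$, which is then literally $\langle\Phi(S)\rangle$.

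No step is a genuine obstacle. The only mildly fiddly point is bookkeeping: rewriting $S$ as $\langle s\rangle$ using the equation provided by the \lstinline|homogeneous_gen| field before applying the closure–image lemma, and coercing $\Phi$ to a monoid homomorphism so that the lemma applies.
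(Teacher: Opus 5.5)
Your proposal is correct and follows the paper's own argument: take a homogeneous generating set $s$ of $S$, observe that $\Phi(s)$ consists of homogeneous elements because $\Phi$ is graded, and note that $\langle \Phi(S)\rangle$ is generated by $\Phi(s)$. You also spell out the closure-of-image step, which the paper only states in one line and realizes in Lean via \lstinline|S.toSubmonoid.map Φ| with generating set \lstinline|Φ '' s|.
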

\begin{proof}{}
  Suppose $S$ is generated by $s$ where $s$ is a set of homogeneous elements, $\left\langle \Phi(S) \right\rangle$ is generated by $\Phi(s)$.
  \begin{lstlisting}
def map (Φ : 𝒜 →+* ℬ) (S : HomogeneousSubmonoid 𝒜) : HomogeneousSubmonoid ℬ where
  toSubmonoid := S.toSubmonoid.map Φ
  homogeneous_gen := by
    obtain ⟨s, hs, h⟩ := S.homogeneous_gen
    refine ⟨Φ '' s, ?_, ?_⟩
    ...
  \end{lstlisting}
\end{proof}

\begin{lemma}{}{}
Let $S$ be a homogeneous submonoid, the set $\mybar{S}$ of homogeneous divisors of elements in $S$ is another homogeneous submonoid.
\begin{lstlisting}
def bar : HomogeneousSubmonoid 𝒜 where
  carrier := {x | SetLike.Homogeneous 𝒜 x ∧ ∃ y ∈ S, x ∣ y}
  mul_mem' := ...
  homogeneous_gen := ...
\end{lstlisting}
\begin{flushright}
\cite[\lstinline|HomogeneousSubmonoid.bar| file location: {\footnotesize\textsf{HomogeneousSubmonoid/Basic.lean}}]{projconstruction}  
\end{flushright}
\end{lemma}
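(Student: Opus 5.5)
We must show that $\mybar{S}$, the set of homogeneous $x$ dividing some $y \in S$, is a submonoid and that it is generated by homogeneous elements. The plan is to verify the submonoid axioms directly, and then take $\mybar{S}$ itself as its own generating set.

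First, the unit. The element $1$ is homogeneous of degree $0$, by the \lstinline|GradedOne| part of \lstinline|GradedRing|. It divides $1 \in S$, so $1 \in \mybar{S}$.

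Second, closure under multiplication. Let $x_1, x_2 \in \mybar{S}$, with $x_1 \mid y_1$ and $x_2 \mid y_2$ for some $y_1, y_2 \in S$.
\begin{itemize}
\item The product $x_1x_2$ is homogeneous, because $\mcA_i\mcA_j \subseteq \mcA_{i+j}$ (\lstinline|GradedMul|).
\item Writing $y_1 = x_1 b_1$ and $y_2 = x_2 b_2$, we get $y_1y_2 = (x_1x_2)(b_1b_2)$ by commutativity, so $x_1x_2 \mid y_1y_2$.
\item Since $S$ is a submonoid, $y_1y_2 \in S$.
\end{itemize}
Hence $x_1x_2 \in \mybar{S}$. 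No use of Lemma \ref{exists-homogeneous-of-dvd} is needed here, since the cofactor is not required to be homogeneous.

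Third, the field \lstinline|homogeneous_gen|. We take the generating set to be the carrier $\mybar{S}$ itself. We then need two facts.
\begin{itemize}
\item \lstinline|Submonoid.closure| of the carrier equals the submonoid. This is \lstinline|Submonoid.closure_eq| applied to the submonoid just built.
\item Every element of the carrier is homogeneous, which holds by definition.
\end{itemize}

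The only mild obstacle is in Lean rather than in the mathematics. The submonoid is being defined in the same structure that must reference it, so I would first construct the bare \lstinline|Submonoid| and then invoke \lstinline|closure_eq| on it. One could optionally remark that $S \subseteq \mybar{S}$, since each $s \in S$ is homogeneous and divides itself, but this is not part of the statement.
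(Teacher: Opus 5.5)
Your proposal is correct and follows the paper's argument: $1$ divides $1 \in S$, and if $x_1 \mid y_1$ and $x_2 \mid y_2$ with $y_1, y_2 \in S$, then $x_1x_2$ is homogeneous and divides $y_1y_2 \in S$. The paper handles homogeneous generation simply by observing that $\mybar{S}$ contains only homogeneous elements. Your choice of the carrier itself as the generating set, via \lstinline|closure_eq|, is just the explicit form of that observation.
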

\begin{proof}{}
  Obviously, $1\in\mybar S$ and $\mybar{S}$ contains only homogeneous elements.  
  Let $x$ and $y$ be two homogeneous elements in $\mybar{S}$, suppose $x$ divides $a_{x} \in S$ and $y$ divides $a_{y} \in S$.
  Then $xy$ is also homogeneous and divides $a_{x}a_{y} \in S$.
\end{proof}

\begin{construction}{}{}
    The collection of all homogeneous submonoids of a graded ring $A$ forms a commutative monoid.
    The multiplication is defined by the pointwise multiplication and the identity is the trivial homogeneous submonoid $\{1\}$.
    The product of two homogeneous submonoids $S$ and $T$ is still homogeneous because $ST$ is generated by the set $S \cup T$ 
    of homogeneous elements.

  We note that the following useful equalities: 
  \begin{itemize}
\item For any homogeneous submonoid $S$, $S \cdot S = S$.
\item If $s$ and $t$ are two sets of homogeneous elements,
    the closure of $s\cup t$ is equal to $\left\langle  s \right\rangle\left\langle  t \right\rangle$.
\item For any homogeneous submonoids $S$ and $T$ of $A$, if $\Phi : A \to B$ is a graded ring homomorphism, 
$\Phi_{\star}\left( ST \right) = \Phi_{\star}(S)\Phi_{\star}(T)$.
  \end{itemize}
\end{construction}

\begin{lstlisting}[extendedchars=true]
lemma mul_toSubmonoid (S T : HomogeneousSubmonoid 𝒜) : (S * T).toSubmonoid = S.toSubmonoid * T.toSubmonoid := rfl

lemma mul_self (S : HomogeneousSubmonoid 𝒜) : S * S = S 

instance : CommMonoid (HomogeneousSubmonoid 𝒜) where
  mul_assoc R S T:= toSubmonoid_injective _ <| mul_assoc _ _ _
  mul_comm S T :=  toSubmonoid_injective _ <| mul_comm _ _
  one_mul _ := toSubmonoid_injective _ <| one_mul _
  mul_one _ := toSubmonoid_injective _ <| mul_one _
  
lemma map_mul (Φ : 𝒜 →+* ℬ) (S T : HomogeneousSubmonoid 𝒜)  : (S * T).map Φ = S.map Φ * T.map Φ :=
\end{lstlisting}

\begin{flushright}
\cite[\lstinline|HomogeneousSubmonoid.instCommMonoid_1| file location: {\footnotesize\textsf{HomogeneousSubmonoid/Basic.lean}}]{projconstruction}
\end{flushright}

\subsection{Relevant Homogeneous Submonoid}

\begin{definition}{}{} \label{def:deg}
    Let $S$ be a homogeneous submonoid, we use $\deg(S)$ to denote the additive submonoid $\iota$ containing the degrees of elements in $S$,
    that is, $i \in \deg(S)$ if and only if there exists an $x\in S$ such that $x$ is homogeneous of degree $i$.

    We denote $\iota[S]$ to be the subgroup of $\iota$ generated by $\deg(S)$.
\begin{lstlisting}[extendedchars=true, caption={$\deg(S)$}]
def deg : AddSubmonoid ι where
  carrier := {i | ∃ x ∈ S, x ∈ 𝒜 i}
  add_mem' := ...
  zero_mem' := ...

def agrDeg : AddSubgroup ι := 
  AddSubgroup.closure S.deg

scoped notation:max ι"["S"]" => agrDeg (ι := ι) S
\end{lstlisting}
\begin{flushright}
\cite[\lstinline|deg| and \lstinline|agrDeg| file location: {\footnotesize\textsf{HomogeneousSubmonoid/Basic.lean}}]{projconstruction}
\end{flushright}
\end{definition}

\begin{remark}{}{}
  If $0$ is in $S$, $\deg S$ is equal to $\iota$.
  This is not problematic, because if $S$ contains $0$, in the context of localizations, $A_{S}$ is the trivial ring.
\end{remark}

\begin{remark}{}{}
As a set, $\iota[S]$ is the set of elements of the form $i - j$ where $i$ and $j$ are in $\deg(S)$.
\end{remark}

\begin{definition}{(Relevance)}{} \label{def:relev}
  \begin{itemize}
    \item A homogeneous submonoid $S$  is \emph{relevant} if for all $i \in \iota$, there exists a positive natural number $n$ such that 
  $n\cdot i$ is in $\iota\left[ \mybar S \right]$.
    \item A set of homogeneous elements $\left\{ a_i | i \in I \right\}$ is relevant if the homogeneous submonoid generated by $\left\{ a_i | i \in I \right\}$ is relevant.
    \item A homogeneous element $a$ is relevant if the set $\{a \}$ is relevant. 
    \item The homogeneous ideal of $A$ generated by the set of relevant homogeneous elements of $A$ is denoted as $A_{\dagger}$.
  \end{itemize}
\begin{lstlisting}[caption={Relevant homogeneous submonoid}, extendedchars=true]
def IsRelevant : Prop := ∀ (i : ι), ∃ (n : ℕ), 0 < n ∧ n • i ∈ ι[S.bar]

abbrev SetIsRelevant (s : Set A) (hs : ∀ i ∈ s, SetLike.Homogeneous 𝒜 i) : Prop :=
  HomogeneousSubmonoid.closure s hs |>.IsRelevant

abbrev ElemIsRelevant (a : A) (ha : SetLike.Homogeneous 𝒜 a) : Prop :=
  HomogeneousSubmonoid.closure {a} (by simpa) |>.IsRelevant

def dagger : HomogeneousIdeal 𝒜 where
  __ := Ideal.span { x | ∃ (h : SetLike.Homogeneous 𝒜 x), ElemIsRelevant x h }
  is_homogeneous' := Ideal.homogeneous_span _ _ (by rintro x ⟨h, _⟩; exact h)

scoped postfix:max "†" => dagger
\end{lstlisting}
\begin{flushright}
\cite[\lstinline|IsRelevant|, \lstinline|SetIsRelevant|, \lstinline|ElemIsRelevant|, and \lstinline|dagger| file location: {\footnotesize\textsf{HomogeneousSubmonoid/\{Relevant, Dagger\}.lean}}]{projconstruction}
\end{flushright}
\end{definition}

\begin{remark}{}{}
  A more succinct way to say that a homogeneous submonoid $S$ is relevant is that the quotient $\quotient{\iota}{\iota\left[ \mybar S \right]}$ is 
  a torsion abelian group.
  Hence, when $\iota$ is finitely generated, $S$ is relevant if and only if $\quotient{\iota}{\iota\left[ \mybar S \right]}$ is finite
  if and only if $\iota\left[ \mybar S \right]$ is subgroup of finite index.
\end{remark}

We begin with several lemmas about relevant homogeneous submonoids.
\begin{lemma}{}{relevant-mul}
  If $S$ and $T$ are two relevant homogeneous submonoids, then $ST$ is also relevant.
\begin{flushright}
\cite[\lstinline|IsRelevant.mul| file location: {\footnotesize\textsf{HomogeneousSubmonoid/Relevant.lean}}]{projconstruction}
\end{flushright}
\end{lemma}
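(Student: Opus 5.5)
The plan is to show that $\mybar{S} \subseteq \mybar{ST}$, and to deduce relevance of $ST$ from relevance of $S$ alone; the hypothesis on $T$ is not needed.

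\textbf{Step 1: $\mybar{S}\subseteq\mybar{ST}$.} Let $x\in\mybar{S}$. By definition $x$ is homogeneous and there is $y\in S$ with $x\mid y$. Since $1\in T$, we have $y=y\cdot 1\in ST$. Hence $x$ is a homogeneous divisor of an element of $ST$, so $x\in\mybar{ST}$.

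\textbf{Step 2: $\iota[\mybar{S}]\subseteq\iota[\mybar{ST}]$.} Taking degrees, Step 1 gives $\deg(\mybar{S})\subseteq\deg(\mybar{ST})$, since a witness $x\in\mybar S\cap\mcA_i$ is also a witness in $\mybar{ST}$. The operation $\iota[-]$ is defined as the closure of $\deg(-)$ in the sense of Definition \ref{def:deg}, and closure is monotone. Therefore $\iota[\mybar{S}]\subseteq\iota[\mybar{ST}]$.

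\textbf{Step 3: conclude.} Fix $i\in\iota$. Relevance of $S$ gives some $n>0$ with $n\cdot i\in\iota[\mybar S]$. By Step 2, $n\cdot i\in\iota[\mybar{ST}]$, so the same $n$ witnesses relevance of $ST$ at $i$.

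There is no real obstacle. The only step needing some bookkeeping is Step 1 in the formalization, where one has to unfold the carrier of $S*T$ as the pointwise product of submonoids, as in \lstinline|mul_toSubmonoid|. There, membership of $y$ in $ST$ can be obtained from \lstinline|Submonoid.mul_mem_mul| together with $1\in T$, or more directly from $S\le S\sqcup T$.
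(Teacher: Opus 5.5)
Your proof is correct, but it argues differently from the paper. The paper uses both hypotheses. It picks $m$ with $m\cdot i\in\iota[\mybar{S}]$ and $n$ with $n\cdot i\in\iota[\mybar{T}]$. It writes $m\cdot i=a-b$ and $n\cdot i=c-d$ with $a,b\in\deg(\mybar{S})$ and $c,d\in\deg(\mybar{T})$. It then notes that $a+c$ and $b+d$ lie in $\deg(\mybar{ST})$, because a product of homogeneous divisors of $y\in S$ and $z\in T$ divides $yz\in ST$. Hence $(m+n)\cdot i\in\iota[\mybar{ST}]$.

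You instead use $1\in T$ to get $S\le ST$, hence $\iota[\mybar{S}]\le\iota[\mybar{ST}]$, and you keep the same exponent $n$. This is the upward-closedness of relevance that the paper proves shortly afterwards (the lemma \texttt{IsRelevant.ofLE}), applied to $S\le ST$; in that instance it amounts to the surjection $\iota/\iota[\mybar{S}]\to\iota/\iota[\mybar{ST}]$.

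Your route is shorter and proves more: $ST$ is relevant as soon as one of $S$, $T$ is. The paper's route is symmetric in $S$ and $T$ and records the additive compatibility $\deg(\mybar{S})+\deg(\mybar{T})\subseteq\deg(\mybar{ST})$. However, it uses the second hypothesis unnecessarily and hides the fact that it is redundant.
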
 
\begin{proof}{}
  Let $i \in \iota$, since $S$ is relevant, there exists a positive natural number $m$ such that $m\cdot i$ is in $\iota\left[\mybar S\right]$.
  Since $T$ is relevant, there exists a positive natural number $n$ such that $n\cdot i$ is in $\iota\left[\mybar T\right]$.
  One can show that $(m + n)\cdot i$ is in $\iota\left[ \mybar{ST} \right]$:
  since $m\cdot i \in \iota\left[\mybar S\right]$, we can find $a$ and $b$ in $\deg\left( \mybar S \right)$ such that $m\cdot i = a - b$.
  Similarly, we can find $c$ and $d$ in $\deg\left( \mybar T \right)$ such that $n\cdot i = c - d$.
  Then $(a + c) - (b + d) = (m + n)\cdot i$ is in $\iota\left[ \mybar{ST} \right]$.
\end{proof}

\begin{lemma}{}{} 
  Let $f : A\to B$ be a graded ring homomorphism and $S$ be a relevant homogeneous submonoid of $A$, $f_\star S = \left\langle f(S) \right\rangle$ 
  is a relevant homogeneous submonoid of $B$.
\begin{flushright}
\cite[\lstinline|IsRelevant.map| file location: {\footnotesize\textsf{HomogeneousSubmonoid/Relevant.lean}}]{projconstruction}
\end{flushright}
\end{lemma}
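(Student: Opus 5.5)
The plan is to show that $\iota\left[\mybar S\right] \subseteq \iota\left[\mybar{f_\star S}\right]$, where $\mybar{f_\star S}$ is taken in $B$. Relevance of $f_\star S$ then follows at once. Indeed, given $i \in \iota$, the relevance of $S$ provides $n > 0$ with $n\cdot i \in \iota\left[\mybar S\right]$. By the inclusion, the same $n$ works for $f_\star S$.

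Since $\iota\left[\,\cdot\,\right]$ is the subgroup generated by degrees, the inclusion of subgroups reduces to an inclusion of degree sets:
\[
\deg\left(\mybar S\right) \subseteq \deg\left(\mybar{f_\star S}\right).
\]
Here we use that \lstinline|AddSubgroup.closure| is monotone. Take $i \in \deg\left(\mybar S\right)$, witnessed by a homogeneous $x \in \mybar S$ of degree $i$ with $x \mid y$ for some $y \in S$; write $y = xz$. We claim $f(x)$ witnesses $i \in \deg\left(\mybar{f_\star S}\right)$. There are three things to check:
\begin{itemize}
\item $f(x) \in \mathcal{B}_i$, because $f$ is graded (Definition \ref{graded-ring-hom}). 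In particular $f(x)$ is homogeneous.
\item $f(x) \mid f(y)$, since $f(y) = f(x)f(z)$.
\item $f(y) \in f_\star S$, since $f_\star S$ is generated by $f(S)$.
\end{itemize}
Together these give $f(x) \in \mybar{f_\star S}$ of degree $i$.

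I expect no real obstacle. The only point needing care is that $\deg$ is defined via membership in some $\mathcal{A}_i$ rather than a chosen degree function. This matters in degenerate cases such as $x = 0$, where $x$ lies in every $\mathcal{A}_i$. The argument above uses only $x \in \mathcal{A}_i \Rightarrow f(x) \in \mathcal{B}_i$, so it handles those cases uniformly. In Lean, the main bookkeeping is unfolding \lstinline|bar|, \lstinline|deg| and \lstinline|map|, and then applying \lstinline|AddSubgroup.closure_mono|.
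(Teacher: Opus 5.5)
Your proposal is correct and follows essentially the same route as the paper: reduce to $\deg\left(\mybar S\right) \subseteq \deg\left(\mybar{f_\star S}\right)$, then show that for a homogeneous $x \in \mybar S$ of degree $i$ dividing some $y \in S$, the element $f(x)$ has degree $i$ and divides $f(y) \in f_\star S$. Your remark that the argument uses only $x \in \mathcal{A}_i \Rightarrow f(x) \in \mathcal{B}_i$, and so also covers degenerate cases such as $x = 0$, is a sound addition.
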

\begin{proof}{}
  It is sufficient show that $\iota\left[\mybar S\right] \le \iota\left[ \mybar{f_\star S} \right]$; or equivalently 
  $\deg\left( \mybar S \right) \le \deg\left( \mybar{f_\star S} \right)$.
  Let $i \in \deg\left( \mybar S \right)$. There exists an $x \in \mybar S$ such that $x$ is homogeneous of degree $i$.
  Hence, there exists an $y \in S$ such that $x$ divides $y$.
  Since $f$ is graded, $f(x)$ also has degree $i$ and $f(y) \in f_\star S$ is homogeneous and $f(x)$ divides $f(y)$. 
  Therefore, $f(x)$ is in $\mybar{f_\star S}$ and $i$ is in $\deg\left( \mybar{f_\star S} \right)$.
\end{proof}

\begin{lemma}{}{}
  If $T \le S$ are two relevant homogeneous submonoids where $T$ is relevant, then $S$ is also relevant.
\begin{flushright}
\cite[\lstinline|IsRelevant.ofLE| file location: {\footnotesize\textsf{HomogeneousSubmonoid/Relevant.lean}}]{projconstruction}
\end{flushright}
\end{lemma}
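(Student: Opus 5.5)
The plan is to reduce relevance of $S$ to relevance of $T$ by comparing the groups $\iota[\mybar T]$ and $\iota[\mybar S]$. The key observation is that taking homogeneous divisors is monotone. If $T \le S$, then every homogeneous divisor of an element of $T$ is a homogeneous divisor of an element of $S$, because that element also lies in $S$. Hence $\mybar T \le \mybar S$ as homogeneous submonoids.

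From this inclusion we next get $\deg(\mybar T) \subseteq \deg(\mybar S)$. Indeed, if $i \in \deg(\mybar T)$, it is witnessed by some $x \in \mybar T$ with $x \in \mcA_i$. That same $x$ lies in $\mybar S$, so it witnesses $i \in \deg(\mybar S)$. Since $\iota[-]$ is the subgroup generated by $\deg(-)$ (Definition \ref{def:deg}), monotonicity of subgroup closure gives $\iota[\mybar T] \le \iota[\mybar S]$.

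Finally, we unfold Definition \ref{def:relev}. Let $i \in \iota$. By relevance of $T$, there is $n > 0$ with $n\cdot i \in \iota[\mybar T]$, and hence $n \cdot i \in \iota[\mybar S]$. This same $n$ witnesses the relevance condition for $S$ at $i$.

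There is no serious obstacle; the only point needing care is the monotonicity $\mybar T \le \mybar S$. In Lean this would be a lemma like \lstinline|bar_mono|, proved by destructuring membership in \lstinline|bar| and reusing the homogeneity proof together with the witness $y \in T \subseteq S$. The remaining steps follow from \lstinline|AddSubgroup.closure_mono|.
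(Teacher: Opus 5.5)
Your proof is correct and is essentially the paper's argument. The paper phrases it as the surjection $\iota/\iota[\mybar T] \to \iota/\iota[\mybar S]$ carrying a torsion group onto a torsion group, which rests on exactly the inclusion $\iota[\mybar T] \le \iota[\mybar S]$ that you derive explicitly from $\mybar T \le \mybar S$, and your ``the same $n$ works'' step is that torsion statement written out elementwise.
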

\begin{proof}
  Since $T$ is relevant, $\quotient{\iota}{\iota\left[ \mybar{T} \right]}$ is a torsion abelian group.
  Since $T \le S$, we have a surjection $\quotient{\iota}{\iota\left[ \mybar{T} \right]} \to \quotient{\iota}{\iota\left[ \mybar{S} \right]}$.
  Hence, $\quotient{\iota}{\iota\left[ \mybar{S} \right]}$ is also a torsion abelian group.
\end{proof}

\begin{theorem}\label{elemIsRelevant-iff}
  Suppose $\iota$ is a finitely generated abelian group. 
  Then $a \in A$ is a relevant homogeneous element if and only if 
  there exists elements $x_1, \dots, x_n$ in $A$ where each $x_i$ is homogeneous of degree $d_i$ such that
  the subgroup generated by $\left\{ d_i \right\}$ is of finite index in $\iota$ and there exists a natural number $k$ such that $a^k = \prod_{i=1}^{n}x_i$.
\begin{lstlisting}[extendedchars=true]
lemma elemIsRelevant_iff [AddGroup.FG ι] (a : A) (ha : SetLike.Homogeneous 𝒜 a) :
    ElemIsRelevant a ha ↔
    ∃ (n : ℕ) (x : Fin n → A) (d : Fin n → ι)
      (_ : ∀ (i : Fin n), x i ∈ 𝒜 (d i)),
      (AddSubgroup.closure (Set.range d)).FiniteIndex ∧
      (∃ (k : ℕ), ∏ i : Fin n, x i = a ^ k) := ...
\end{lstlisting}
\begin{flushright}
\cite[\lstinline|elemIsRelevant_iff| file location: {\footnotesize\textsf{HomogeneousSubmonoid/Relevant.lean}}]{projconstruction}
\end{flushright}
\end{theorem}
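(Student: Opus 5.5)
Both directions reduce to unwinding the definitions. Put $S = \langle a\rangle$. Then $\mybar{S}$ is the set of homogeneous elements dividing some power $a^k$, and $a$ is relevant exactly when $\iota/\iota[\mybar S]$ is torsion. Since $\iota$ is finitely generated, the remark after Definition \ref{def:relev} says this is equivalent to $\iota[\mybar S]$ having finite index in $\iota$. So the work is to compare "$\iota[\mybar S]$ has finite index" with the existence of a factorization $a^k=\prod x_i$ into homogeneous factors whose degrees generate a finite-index subgroup.

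\emph{($\Leftarrow$)} Suppose $a^k = \prod_{i=1}^n x_i$ with $x_i \in \mcA_{d_i}$ and $\langle d_1,\dots,d_n\rangle$ of finite index. Each $x_i$ is homogeneous and divides $a^k \in S$, so $x_i \in \mybar S$ and hence $d_i \in \deg(\mybar S)$. Therefore $\langle d_i\rangle \le \iota[\mybar S]$. A subgroup containing a finite-index subgroup itself has finite index, so $\iota[\mybar S]$ has finite index and $a$ is relevant.

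\emph{($\Rightarrow$)} Assume $\iota[\mybar S]$ has finite index.
\begin{enumerate}
\item \emph{Finite generation.} Subgroups of finitely generated abelian groups are finitely generated. Since $\iota[\mybar S]$ is the closure of $\deg(\mybar S)$, it is generated by finitely many elements $e_1,\dots,e_m \in \deg(\mybar S)$. (One can take generators from the generating set, for instance by noting that some finite subset of $\deg(\mybar S)$ already generates a finitely generated subgroup.)
\item \emph{Witnesses.} For each $e_j$, choose a homogeneous $y_j \in \mcA_{e_j}$ and an exponent $k_j$ with $y_j \mid a^{k_j}$.
\item \emph{Common power.} Put $K=\sum_j k_j$. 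Then $\prod_j y_j \mid a^K$, because $y_j \mid a^{k_j}$ for each $j$ and the divisibilities multiply.
\item \emph{Homogeneous cofactor.} Lemma \ref{exists-homogeneous-of-dvd} gives a homogeneous $z$ with $a^K = z\prod_j y_j$. If $a^K$ is zero, either take $z=0$ or check this case separately. Note $a^K$ is homogeneous, so the lemma applies.
\item \emph{Conclusion.} Take $x_1,\dots,x_m = y_1,\dots,y_m$ and $x_{m+1}=z$, with $d_{m+1}=\deg z$. The degrees $d_i$ contain all the $e_j$, so $\langle d_i\rangle \supseteq \iota[\mybar S]$ and therefore has finite index. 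Finally $\prod x_i = a^K$.
\end{enumerate}

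The main obstacle is step 1, together with the bookkeeping in Lean. One must extract a \emph{finite} generating family lying inside $\deg(\mybar S)$, using that closures of sets in a Noetherian $\ZZ$-module are attained by finite subsets (e.g. via \lstinline|AddSubgroup.FG| and \lstinline|Submodule.fg_of_fg_map|-style lemmas, or a Noetherian argument). One must then assemble the indexed family over \lstinline|Fin (m+1)|.
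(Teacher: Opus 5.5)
Your proof is correct and follows the paper's argument essentially step for step. The converse direction uses $x_i \in \mybar{\langle a\rangle}$ to get $\langle d_i\rangle \le \iota[\mybar{\langle a\rangle}]$, and the forward direction uses a finite generating set, a common power $a^K$, and a homogeneous cofactor from Lemma \ref{exists-homogeneous-of-dvd}; your step 1 also makes explicit a point the paper passes over, namely that the finitely many generators must be chosen inside $\deg(\mybar{\langle a\rangle})$ (not merely in $\iota[\mybar{\langle a\rangle}]$) so that homogeneous witnesses of those degrees dividing powers of $a$ exist.
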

\begin{proof}{}
  For the forward implication, suppose $a$ is a relevant homogeneous element, we see that $\iota\left[ \mybar{\left\langle a \right\rangle} \right]$ 
  is a subgroup of finite index in $\iota$. Therefore, $\iota\left[ \mybar{\left\langle a \right\rangle} \right]$ is finitely generated.
  Let $\left\{ i_1,\dots,i_N \right\} \subseteq \iota$ be a generating set of $\iota\left[ \mybar{\left\langle a \right\rangle} \right]$.
  For each $i_k \in \left\{ i_1, \dots, i_n \right\}$, we see that there exists an $x_k \in A$ homogeneous of degree $i_k$ such that $x_l$ divides $a^n_k$ for some $n_j$.
  We can take $K = \sum_{k=1}^N n_k$, and we see that $\prod_{k=1}^{N}x_k$ divides $a^K$.
  Hence, by Lemma \ref{exists-homogeneous-of-dvd}, there exists a homogeneous element $b$ of degree $j$ such that 
  $\left( \prod_{k=1}^{N} x_k\right) b = a^K$.
  All there remains is to show that the subgroup generated by $\left\{ i_1, \dots, i_N, j \right\}$ is of finite index in $\iota$.
  Since $\iota\left[ \mybar{\left\langle a \right\rangle} \right] = \left\langle i_1, \dots, i_N \right\rangle$ is a subgroup of finite index in $\iota$,
  and $\left\langle i_1, \dots, i_N, j \right\rangle$ is at least as large as, if not larger than, $\left\langle i_1, \dots, i_N \right\rangle$, 
  it is also a subgroup of finite index in $\iota$.

  Conversely, suppose $a^k = \prod_{i=1}^n x_i$ such that $x_i$ is homogeneous of degree $d_i$ and the subgroup generated by $\left\{ d_i \right\}$ is of finite index in $\iota$.
  It is sufficient to show that $\left\langle d_1,\dots, d_n \right\rangle$ is smaller than or equal to $\iota\left[ \mybar{\left\langle a \right\rangle} \right]$.
  This is true because for any $i$, $d_i$ is in $\deg\left( \mybar{\left\langle a \right\rangle} \right)$: 
    due to the factorization $\prod_i x_i = a^k$, $x_i$ has degree $d_i$ and is a homogeneous divisor of $a^k \in \left\langle a \right\rangle$.
\end{proof}

\begin{corollary}{}{}
  If $a$ and $b$ are two relevant homogeneous elements, $ab$ is also a relevant homogeneous element.
\begin{flushright}
\cite[\lstinline|ElemIsRelevant.mul| file location: {\footnotesize\textsf{HomogeneousSubmonoid/Relevant.lean}}]{projconstruction}
\end{flushright}
\end{corollary}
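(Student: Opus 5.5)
The plan is to apply Theorem \ref{elemIsRelevant-iff} to both directions, or more directly to argue at the level of submonoids. Note that the corollary sits right after Theorem \ref{elemIsRelevant-iff}, which assumes $\iota$ is finitely generated. I would nonetheless give a direct argument valid for any abelian group $\iota$: show that $\iota\left[\mybar{\langle a\rangle}\right]$ and $\iota\left[\mybar{\langle b\rangle}\right]$ are both contained in $\iota\left[\mybar{\langle ab\rangle}\right]$ up to a suitable multiple. Relevance then transfers as in the proof of Lemma \ref{relevant-mul}.

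The key step concerns degrees. Let $x \in \mybar{\langle a\rangle}$ be homogeneous of degree $i$, say $x \mid a^k$. Then $x b^k$ is homogeneous and divides $a^k b^k = (ab)^k$, so $\deg(x) + k\deg(b) \in \deg\left(\mybar{\langle ab\rangle}\right)$. Moreover $b^k$ itself divides $(ab)^k$ and is homogeneous of degree $k\deg(b)$, so $k\deg(b) \in \deg\left(\mybar{\langle ab\rangle}\right)$. Subtracting gives $\deg(x) = i \in \iota\left[\mybar{\langle ab\rangle}\right]$.

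Hence $\deg\left(\mybar{\langle a\rangle}\right) \subseteq \iota\left[\mybar{\langle ab\rangle}\right]$, and therefore $\iota\left[\mybar{\langle a\rangle}\right] \le \iota\left[\mybar{\langle ab\rangle}\right]$. This already uses only the relevance of $a$; we do not even need $b$ relevant. Now, given $i \in \iota$, relevance of $a$ yields $n>0$ with $n\cdot i \in \iota\left[\mybar{\langle a\rangle}\right] \subseteq \iota\left[\mybar{\langle ab\rangle}\right]$, so $ab$ is relevant. Homogeneity of $ab$ is immediate.

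Alternatively, in the setting of Theorem \ref{elemIsRelevant-iff}, I would take factorizations $a^k = \prod x_i$ and $b^l = \prod y_j$ and write
\[
(ab)^{kl} = \Bigl(\prod_i x_i\Bigr)^{l}\Bigl(\prod_j y_j\Bigr)^{k}.
\]
The list of factors then contains all the $x_i$, so its degree subgroup contains a finite-index subgroup and is itself of finite index.

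The only mildly delicate point is the bookkeeping of degrees under the grading hypotheses. One must check that $x b^k$ is homogeneous of degree $\deg x + k\deg b$, which follows from \texttt{SetLike.GradedMul}. I expect no real obstacle; the main care is handling the case where elements are zero, but homogeneity of $0$ in every degree causes no problem here.
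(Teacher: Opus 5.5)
Your proof is correct, but it takes a different route from the paper's. The paper gives no separate proof and presents the statement as a corollary of Theorem~\ref{elemIsRelevant-iff}. Its intended argument is therefore your ``alternative'' one: combine the finite factorizations $a^k=\prod x_i$ and $b^l=\prod y_j$ into a factorization of a power of $ab$, and note that the degree subgroup still has finite index. That route carries the standing hypothesis that $\iota$ is finitely generated. In that version you should arrange $k,l\ge 1$, e.g.\ by replacing $a^k=\prod x_i$ with $a^{k+1}=a\prod x_i$. Otherwise, when $l=0$, the $x_i$ do not appear in your list of factors at all.

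Your main argument bypasses Theorem~\ref{elemIsRelevant-iff} entirely and works for an arbitrary abelian group $\iota$. It also proves a stronger one-sided statement: $ab$ is relevant as soon as $a$ is relevant and $b$ is merely homogeneous. It can be shortened further. The detour through $xb^k$ and the subtraction is unnecessary, because $x\mid a^k\mid (ab)^k$ already gives $x\in\mybar{\langle ab\rangle}$. Hence $\mybar{\langle a\rangle}\subseteq\mybar{\langle ab\rangle}$, and so $\iota[\mybar{\langle a\rangle}]\le\iota[\mybar{\langle ab\rangle}]$. This is exactly the mechanism of the paper's monotonicity lemma (if $T\le S$ and $T$ is relevant, then $S$ is relevant), applied after passing to bars.

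What each route buys: the factorization route meshes with the rest of the formalization, since finite families of homogeneous factors are what Theorems~\ref{elemIsRelevant-iff} and~\ref{relevant-elem-in-tensor-product} manipulate. Yours is more elementary and hypothesis-free.
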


\begin{corollary}\label{graded-ring-hom-image-relevant}
  If $f : A \to B$ is a graded ring homomorphism and $a$ is a relevant homogeneous element in $A$, $f(a)$ is a relevant homogeneous element in $B$.
\begin{flushright}
\cite[\lstinline|GradedRingHom.map_relevant| file location: {\footnotesize\textsf{HomogeneousSubmonoid/Dagger.lean}}]{projconstruction}
\end{flushright}
\end{corollary}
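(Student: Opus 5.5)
The statement to prove is Corollary~\ref{graded-ring-hom-image-relevant}: if $f : A \to B$ is a graded ring homomorphism and $a$ is a relevant homogeneous element of $A$, then $f(a)$ is a relevant homogeneous element of $B$.

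There are two natural routes; I would take the structural one. By definition, $a$ is relevant exactly when the homogeneous submonoid $\langle a\rangle = \mathrm{closure}\{a\}$ is relevant.

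First, $f(a)$ is homogeneous, since $f(\mcA_i)\subseteq\mcB_i$. Next, I claim $f_\star\langle a\rangle = \langle f(a)\rangle$ as homogeneous submonoids of $B$. The image under a monoid homomorphism of the closure of a set is the closure of the image of that set. Here the image of $\{a\}$ is $\{f(a)\}$, so the two submonoids agree; concretely, both equal $\{f(a)^k : k\in\NN\}$.

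The earlier lemma on images of relevant submonoids (\lstinline|IsRelevant.map|) says $f_\star S$ is relevant whenever $S$ is. Applying it with $S=\langle a\rangle$ shows that $\langle f(a)\rangle$ is relevant, which is precisely the statement that $f(a)$ is a relevant element.

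As a cross-check, or as an alternative when $\iota$ is finitely generated, Theorem~\ref{elemIsRelevant-iff} gives the same result. Write $a^k=\prod x_i$ with $x_i\in\mcA_{d_i}$ and $\langle d_i\rangle$ of finite index. Applying $f$ gives $f(a)^k=\prod f(x_i)$ with $f(x_i)\in\mcB_{d_i}$, and the degree set $\{d_i\}$ is unchanged, so the finite-index condition still holds. The first route has the advantage of not needing finite generation.

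The only step with any real content is the identification $f_\star\langle a\rangle=\langle f(a)\rangle$, and it is routine (\lstinline|Submonoid.map_closure| together with \lstinline|Set.image_singleton|). The main formal obstacle is bookkeeping: the homogeneity proof carried by \lstinline|ElemIsRelevant| must match the one produced by \lstinline|map|. I would handle this by proving equality of the underlying submonoids and transporting relevance along it, since relevance depends only on the carrier.
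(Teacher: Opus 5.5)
Your argument is correct, and both of your routes work. The paper gives no written proof of this corollary. It is listed right after Theorem~\ref{elemIsRelevant-iff}, alongside the surjective-radical result whose proof explicitly uses that theorem. This suggests the intended argument is your cross-check: push a factorization $a^k=\prod_i x_i$ through $f$ and observe that $f(x_i)\in\mathcal{B}_{d_i}$, so the degree family, and hence the finite-index condition, is unchanged.

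Your primary route is different. It combines the earlier lemma that $f_\star S$ is relevant whenever $S$ is with the identity $f_\star\langle a\rangle=\langle f(a)\rangle$. What this buys is generality: it needs nothing about $\iota$ beyond being an abelian group, whereas Theorem~\ref{elemIsRelevant-iff} assumes $\iota$ finitely generated. Its only cost is the small identification of underlying submonoids (closure commutes with images, plus extensionality of homogeneous submonoids through their underlying submonoid), which you have already isolated correctly. The characterization route, by contrast, is a one-line computation once that theorem is available. It also sits naturally in the finitely generated setting that the subsequent radical comparison for $A_\dagger$ needs anyway.
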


\begin{corollary}\label{dagger-map-le}
  Let $f : A \to B$ be a graded ring homomorphism, $f_{\star}\left( A_{\dagger} \right)$ is smaller than or equal to $B_{\dagger}$.
\begin{flushright}
\cite[\lstinline|GradedRingHom.map_dagger_le| file location: {\footnotesize\textsf{HomogeneousSubmonoid/Dagger.lean}}]{projconstruction}
\end{flushright}
\end{corollary}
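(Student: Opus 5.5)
The idea is to reduce the statement to Corollary \ref{graded-ring-hom-image-relevant}, using the fact that $A_{\dagger}$ is generated by relevant homogeneous elements. Here $f_{\star}(A_{\dagger})$ denotes the ideal of $B$ generated by the image $f(A_{\dagger})$, i.e. \lstinline|Ideal.map f|. Since $B_{\dagger}$ is an ideal, the standard characterization $\mathrm{map}\, f\, I \le J \iff I \le f^{-1}(J)$ (the Galois connection \lstinline|Ideal.map_le_iff_le_comap|) applies. It therefore suffices to show that $A_{\dagger} \le f^{-1}(B_{\dagger})$.

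Now $A_{\dagger}$ is by definition the ideal span of the set
\[
\{x \in A : x \text{ homogeneous and relevant}\},
\]
and $f^{-1}(B_{\dagger})$ is an ideal of $A$. By the universal property of spans (\lstinline|Ideal.span_le|), it is enough to check that every relevant homogeneous element $x$ of $A$ satisfies $f(x) \in B_{\dagger}$.

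Fix such an $x$. Since $f$ is a graded ring homomorphism, $f(x)$ is homogeneous of the same degree. Corollary \ref{graded-ring-hom-image-relevant} then says that $f(x)$ is a relevant homogeneous element of $B$. Hence $f(x)$ lies in the generating set of $B_{\dagger}$, and so $f(x) \in B_{\dagger}$ by \lstinline|Ideal.subset_span|.

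None of these steps is a genuine obstacle; the argument is essentially formal once the corollary is available. The only point needing care is bookkeeping in the formalization. The homogeneity witness attached to $f(x)$ must be the one produced from the homogeneity of $x$, so that the dependent predicate \lstinline|ElemIsRelevant (f x) h| matches the form in which the corollary is stated. One also has to pass between the \lstinline|HomogeneousIdeal| structure and its underlying \lstinline|Ideal| when applying the \lstinline|map|/\lstinline|comap| adjunction.
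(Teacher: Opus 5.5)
Your proposal is correct and follows essentially the argument the paper intends. The paper gives no separate proof and presents the statement as an immediate consequence of Corollary \ref{graded-ring-hom-image-relevant}; your reduction (map/comap adjunction, then the span property of $A_{\dagger}$, then that corollary applied to each relevant homogeneous generator) is exactly that argument made explicit.
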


\begin{corollary}{}{}
  Let $f : A \to B$ be a surjective graded ring homomorphism, the radical ideals of $f_{\star}\left( A_{\dagger} \right)$ and $B_{\dagger}$ are equal :
  \[
  \rad{f_{\star}\left( A_{\dagger} \right)} = \rad{B_{\dagger}}.
  \]
\begin{lstlisting}
lemma radical_dagger_eq_of_surjective (surj : Function.Surjective f) :
    ((𝒜 †).toIdeal.map f).radical = (ℬ †).toIdeal.radical := by
\end{lstlisting}
\begin{flushright}
\cite[\lstinline|GradedRingHom.radical_dagger_eq_of_surjective| file location: {\footnotesize\textsf{HomogeneousSubmonoid/Dagger.lean}}]{projconstruction}
\end{flushright}
\end{corollary}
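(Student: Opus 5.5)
We prove the two inclusions separately; the one $\rad{f_{\star}(A_{\dagger})} \subseteq \rad{B_{\dagger}}$ is immediate from Corollary \ref{dagger-map-le}. That corollary gives $f_{\star}(A_{\dagger}) \le B_{\dagger}$, and taking radicals is monotone. So the content lies in the reverse inclusion $\rad{B_{\dagger}} \subseteq \rad{f_{\star}(A_{\dagger})}$. Since the right-hand side is a radical ideal, it suffices to show $B_{\dagger} \subseteq \rad{f_{\star}(A_{\dagger})}$. It is then enough to check this on the generators of $B_{\dagger}$, namely the relevant homogeneous elements $b \in B$. For each such $b$ we will show that some power $b^k$ lies in $f_{\star}(A_{\dagger})$.

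Fix a relevant homogeneous $b \in B$ of degree $j$. Since $f$ is surjective and graded, we can lift homogeneous elements to homogeneous elements. Indeed, if $f(a) = b$ with $b \in \mcB_j$, decompose $a = \sum_i a_i$. Then $b = f(a) = \sum_i f(a_i)$ with $f(a_i) \in \mcB_i$, and comparing $j$-components gives $f(a_j) = b$. Record this as a small lemma: surjective graded maps are surjective on each homogeneous piece.

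Lifting $b$ itself to a homogeneous $a$ is not enough, because $a$ need not be relevant. The divisors witnessing relevance of $b$ need not lift to divisors of $a$, since the kernel interferes. We therefore use the characterization of Theorem \ref{elemIsRelevant-iff}, which assumes $\iota$ is finitely generated; if the statement is meant in full generality, we would instead argue directly with $\deg(\mybar{\langle b\rangle})$, as sketched below. By relevance, for each $d$ in a suitable generating family of $\iota[\mybar{\langle b\rangle}]$ there are homogeneous $y_d$ of degree $d$ and $z_d$ with $y_d z_d = b^{n_d}$. As in the proof of Theorem \ref{elemIsRelevant-iff}, we gather these into one factorization $b^k = \prod_{i=1}^n y_i$ with $y_i$ homogeneous of degree $d_i$, such that the $d_i$ generate a subgroup $H$ with $\iota/H$ torsion.

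Now lift each $y_i$ to a homogeneous $x_i \in \mcA_{d_i}$ and put $a' := \prod_i x_i$. This element is homogeneous, and $f(a') = b^k$. By the converse direction of the relevance characterization, $a'$ is relevant: each $x_i$ is a homogeneous divisor of $a'$, so $H \le \iota[\mybar{\langle a'\rangle}]$, and $\iota/H$ torsion makes $\iota/\iota[\mybar{\langle a'\rangle}]$ torsion. This is the same argument as in Lemma IsRelevant.ofLE. Hence $a' \in A_{\dagger}$, so $b^k = f(a') \in f_{\star}(A_{\dagger})$, and therefore $b \in \rad{f_{\star}(A_{\dagger})}$.

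The main obstacle is the step producing a single factorization $b^k = \prod y_i$ whose factors have degrees generating a co-torsion subgroup. The naive approach of lifting $b$ and hoping it stays relevant fails. Collecting finitely many divisors needs either finite generation of $\iota$ or a direct argument. For the direct argument, for each $i \in \iota$ write $n i = \sum_l \pm e_l$ with the $e_l \in \deg(\mybar{\langle b\rangle})$, and then multiply the corresponding divisor factorizations. This keeps the product of lifts a homogeneous divisor of a lift of a power of $b$. Homogeneity of cofactors is handled by Lemma \ref{exists-homogeneous-of-dvd}.
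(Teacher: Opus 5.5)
With $\iota$ finitely generated, your argument is the paper's proof. Corollary~\ref{dagger-map-le} gives one inclusion. For a relevant homogeneous $b$, Theorem~\ref{elemIsRelevant-iff} gives $b^k=\prod_i y_i$; you lift the factors and use the converse direction to place the product of the lifts in $A_\dagger$. The paper uses silently that the lifts can be chosen homogeneous of degree $d_i$. Your component comparison proves this, and it is needed for the converse direction to apply.

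The part that fails is your hedge for arbitrary $\iota$. Your ``direct argument'' gives a separate factorization of some power of $b$ for each $i\in\iota$. To conclude, you need a \emph{single} $a'$ with $f(a')=b^k$ whose divisor degrees generate a co-torsion subgroup. When $\iota$ is not finitely generated this can require infinitely many factors. The gap cannot be repaired, because the statement is false without finite generation.

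For a counterexample, take $\iota=\bigoplus_{j\ge 0}\mathbb{Z}e_j$ and $A=\mathbb{Z}[X_i,Y_i : i\ge 1]$ with $\deg X_i=e_i$ and $\deg Y_i=e_0-e_i$. Let $B=A/(X_iY_i-X_1Y_1 : i\ge 2)$ with the quotient grading; the relations are homogeneous of degree $e_0$.
\begin{itemize}
\item In $A$, let $a$ be nonzero and homogeneous. Any divisor of a power of $a$ involves only the finitely many variables occurring in $a$; compare degrees in each variable, using that $A$ is a domain. So $\iota[\mybar{\langle a\rangle}]$ lies in a finitely generated subgroup, which cannot be co-torsion in $\iota$. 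Hence $a$ is not relevant, and $A_\dagger=0$.
\item In $B$, the image $b$ of $X_1Y_1$ is divisible by the image of every $X_i$. So $\iota[\mybar{\langle b\rangle}]=\iota$ and $b\in B_\dagger$.
\item $b$ is not nilpotent: send all variables to $1$ in $\mathbb{Z}$.
\end{itemize}
Thus $b$ lies in $\rad{B_\dagger}$ but not in $\rad{f_\star(A_\dagger)}$. Finite generation of $\iota$ is essential; the paper assumes it implicitly by invoking Theorem~\ref{elemIsRelevant-iff}. You should state it as a hypothesis rather than suggest it can be removed.
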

\begin{proof}{}
  By Corollary \ref{dagger-map-le}, we have $f_{\star}\left( A_{\dagger} \right) \subseteq B_{\dagger}$, hence we only need to show 
  $\rad{B_{\dagger}} \le \rad{f_{\star}\left( A_{\dagger} \right)}$. 
  Equivalently, we need to show that for any relevant homogeneous element $b \in B$, $b$ is in $\rad{f_{\star}\left( A_{\dagger} \right)}$.
  Since $b$ is relevant, by Theorem \ref{elemIsRelevant-iff}, there exists a positive natural number $k$ such that $b^k = \prod_{i=1}^{n} x_i$ where each $x_i$ is homogeneous of degree $d_i$ and 
  $\left\langle  d_i \right\rangle$ is a subgroup of finite index.
  Since $f$ is surjective, we can write $b$ as $f(a)$ for some $a \in A$ and $f(a_i) = x_i$ for each $i = 1,\dots, n$. 
  In order to show that $b = f(a)$ is in $\rad{f_{\star}\left( A_{\dagger} \right)}$, we show that $f(a)^k = f\left(\prod_i a_i\right)$ is in $f_{\star}\left( A_{\dagger} \right)$.
  This is true, because $\prod_i a_i$ is relevant by Theorem \ref{elemIsRelevant-iff}.
\end{proof}

\section{Homogeneous Localization}\label{sec:homogeneous-localization}
In this section, let $R$ be a commutative ring and $A$ a commutative graded $R$-algebra. Let $x$ be a submonoid of $A$. 
\begin{lstlisting}[extendedchars=true]
variable {ι R A : Type*}
variable [CommRing R] [CommRing A] [Algebra R A]
variable (𝒜 : ι → Submodule R A)
variable (x : Submonoid A)
\end{lstlisting}

In general, unless $x$ contains only homogeneous elements, the localized ring $A_x$ is not graded\footnote{When $x$ is not the complement of 
a prime ideal, the better notation for localized ring is $x^{-1}A$. But we use this notation to keep align with homogeneous localization defined below}. 
However, we can still investigate set $A_{(x)}$ of elements of degree zero in the localized ring. 

\begin{construction}
      Consider the type $\mathcolorbox{yellow!10}{P}$ of triples of $(i, a, b)$ where $a$ and $b$ are 
  homogeneous elements of the degree $i$ and $b$ is in $x$, we have a function $\mathcolorbox{blue!10}{\mathsf{frac} : P \to A_{x}}$ defined by
  $(i, a, b) \mapsto =\frac a b$.
  We define the homogeneous localization $A_{(x)}$ to be the quotient $\quotient{P}{\sim}$ where \sethlcolor{red!10}\hl{two pairs $p$ and $q$
  are equivalent if and only if $\mathsf{frac}(p) = \mathsf{frac}(q)$}. 
  The function $\mathsf{frac}$ descends to an embedding $\mathcolorbox{green!10}{\iota: \hloca A{x} \hookrightarrow A_{x}}$.
\begin{lstlisting}[linebackgroundcolor={%
  \ifnum\value{lstnumber}=1\color{yellow!10}\fi
  \ifnum\value{lstnumber}=2\color{yellow!10}\fi
  \ifnum\value{lstnumber}=3\color{yellow!10}\fi
  \ifnum\value{lstnumber}=4\color{yellow!10}\fi
  \ifnum\value{lstnumber}=6\color{blue!10}\fi
  \ifnum\value{lstnumber}=7\color{blue!10}\fi
  \ifnum\value{lstnumber}=8\color{blue!10}\fi
  \ifnum\value{lstnumber}=13\color{green!10}\fi
  \ifnum\value{lstnumber}=14\color{green!10}\fi
  \ifnum\value{lstnumber}=15\color{green!10}\fi},
  mathescape=true]
structure NumDenSameDeg where
  deg : ι
  (num den : 𝒜 deg)
  den_mem : (den : A) ∈ x

def embedding (p : NumDenSameDeg 𝒜 x) : 
    Localization x :=
  Localization.mk p.num ⟨p.den, p.den_mem⟩

def HomogeneousLocalization : Type _ :=
  Quotient <| (*@\sethlcolor{red!10}\hl{Setoid.ker <| embedding $\mcA$ x}@*)

def val (y : HomogeneousLocalization 𝒜 x) : 
    Localization x :=
  Quotient.liftOn' y (embedding 𝒜 x) ...
\end{lstlisting}
\begin{flushright}
\cite[\href{https://leanprover-community.github.io/mathlib4_docs/Mathlib/RingTheory/GradedAlgebra/HomogeneousLocalization.html\#HomogeneousLocalization}{\lstinline|HomogeneousLocalization|}]{mathlib4docs}
\end{flushright}
\end{construction}

We choose not to define $\hloc A x$ as a subring of the localized ring $A_x$. 
In the current approach, for any fraction $f \in \hloc A x$, we can use the \lstinline|induction| tactic to obtain a pair $(i, a, b)$ such that $f = [(i, a ,b)]$. 
Hence, the numerator, the denominator, and the degree of them are easily available and more organized.

\begin{theorem}\label{homogeneous-localization-local-ring}
  Let $\mfp$ be a prime ideal of $A$, the homogeneous localization $\hloc A {\mfp}$ is a local ring.
\begin{flushright}
\cite[\href{https://leanprover-community.github.io/mathlib4_docs/Mathlib/RingTheory/GradedAlgebra/HomogeneousLocalization.html\#HomogeneousLocalization.isLocalRing}{\lstinline|HomogeneousLocalization.isLocalRing|}]{mathlib4docs}
\end{flushright}
\end{theorem}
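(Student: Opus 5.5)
Here $A_{(\mathfrak p)}$ means the homogeneous localization at the submonoid $x = A \setminus \mathfrak p$ (the prime complement). Its elements are classes of triples $(i,a,b)$ with $a,b \in \mathcal{A}_i$ and $b \notin \mathfrak p$. The plan is to use the standard criterion: a nontrivial commutative ring is local as soon as, for every element $f$, either $f$ or $1-f$ is a unit. So we check nontriviality and then this dichotomy.

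\emph{Nontriviality.} Since $\mathfrak p$ is prime, $0 \notin x$, so the localization $A_x$ is nontrivial. The embedding $\iota : A_{(\mathfrak p)} \hookrightarrow A_x$ is an injective ring homomorphism, hence $A_{(\mathfrak p)}$ is nontrivial too. Concretely, $[(0,1,1)] \neq [(0,0,1)]$ because $1 \neq 0$ in $A_x$.

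\emph{The unit criterion.} Take $f = [(i,a,b)]$. We split into cases according to whether $a \in \mathfrak p$.
\begin{itemize}
\item If $a \notin \mathfrak p$, then $(i,b,a)$ is a valid triple, since $a$ is homogeneous of degree $i$ and lies in $x$. Its class is an inverse of $f$: the identity $\tfrac ab\cdot\tfrac ba = 1$ holds in $A_x$, and it transfers to $A_{(\mathfrak p)}$ by injectivity of $\iota$.
\item If $a \in \mathfrak p$, we write $1-f = [(i, b-a, b)]$, using $1 = [(i,b,b)]$. Here $b - a \in \mathcal{A}_i$, and $b-a \notin \mathfrak p$, since otherwise $b = (b-a)+a \in \mathfrak p$. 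So by the first case $1-f$ is a unit.
\end{itemize}

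The only genuinely delicate point is the bookkeeping in the quotient presentation. We need to represent $1$ in degree $i$ as $[(i,b,b)]$ rather than $[(0,1,1)]$, and to know that subtraction of classes with common degree and denominator is computed componentwise. Both facts follow by applying $\iota$ and computing in $A_x$, because equality of classes is by definition equality of their images under $\mathsf{frac}$. In Lean I would handle this with the \texttt{induction} principle described above and \texttt{val\_injective}-style rewriting, and then conclude with \texttt{LocalRing.of\_isUnit\_or\_isUnit\_one\_sub\_self}.
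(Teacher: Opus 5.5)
Your proof is correct, but it takes a somewhat different route from the paper's. The paper never checks the dichotomy inside $A_{(\mathfrak p)}$ directly. It uses that $A_{\mathfrak p}$ is already local, and that the embedding $A_{(\mathfrak p)} \hookrightarrow A_{\mathfrak p}$ reflects units: $f$ is a unit in $A_{(\mathfrak p)}$ iff its image is a unit in $A_{\mathfrak p}$, the inverse of $[(i,a,b)]$ being $[(i,b,a)]$. Since the image of $1-f$ is $1$ minus the image of $f$, the ``$f$ or $1-f$ is a unit'' property of $A_{\mathfrak p}$ then pulls back.

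You instead prove the dichotomy directly from primality. You split on whether the numerator lies in $\mathfrak p$, and in the bad case observe that $1-f=[(i,b-a,b)]$ with $b-a\notin\mathfrak p$. Both arguments rest on the same computation: a class whose numerator lies outside $\mathfrak p$ is invertible with inverse $[(i,b,a)]$.

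The paper's version is shorter and more conceptual; it is an instance of the general fact that a ring with a unit-reflecting homomorphism to a local ring is local. However, its nontrivial direction is left implicit. That direction is that a unit $a/b$ of $A_{\mathfrak p}$ must have $a\notin\mathfrak p$, so that $[(i,b,a)]$ is a legitimate triple, and it relies on knowing the units of $A_{\mathfrak p}$.

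Your argument is self-contained and never uses locality of $A_{\mathfrak p}$. The price is the small bookkeeping step of writing $1$ as $[(i,b,b)]$ and computing $1-f$ as a single triple. You also make the nontriviality check explicit, which the paper omits.
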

\begin{proof}{}
Since $A_{\mfp}$ is a local ring, it is sufficient to prove that for any fraction $f \in \hloc A{\mfp}$, $f$ is a unit in $A_{\mfp}$ if and only if $f$ is a unit in $\hloc A{\mfp}$ as well.
This is because, we write $f=[(i, a, b)]$, the inverse of $f$ is $[i, b, a]$.
\end{proof}

Homogeneous localization of a graded module can be defined similarly.

\begin{lemma}\label{hloc-map}
  Let $\phi : A \to B$ be a graded ring homorphism, suppose $P$ is a submonoid of $A$ and $Q$ is a submonoid of $B$ are submonoids such that $\phi^{-1}(Q) \le P$.
  Then $\phi$ induces a ring homomorphism $\hloc A P \to \hloc B Q$.
\begin{flushright}
\cite[HomogeneousLocalization.map]{mathlib4docs}
\end{flushright}
\end{lemma}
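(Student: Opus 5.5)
Since $\hloc A P$ is defined as a quotient of the type of triples $(i,a,b)$ (with $a,b \in \mcA_i$ and $b \in P$), I will construct the map on triples first and then check that it descends to the quotient. I will then verify the ring homomorphism axioms. The natural candidate is
\[
[(i,a,b)] \longmapsto [(i,\phi(a),\phi(b))].
\]
This is well typed on degrees because $\phi$ is graded (Definition \ref{graded-ring-hom}), so $\phi(a),\phi(b) \in \mcB_i$.

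\textbf{Denominator condition.} We also need $\phi(b) \in Q$. As literally stated, the hypothesis $\phi^{-1}(Q) \le P$ points the wrong way for this: it says elements mapping into $Q$ lie in $P$, not that $P$ maps into $Q$. I will therefore read the hypothesis as $P \le \phi^{-1}(Q)$, i.e.\ $\phi(P) \subseteq Q$. This is the form used for \lstinline|HomogeneousLocalization.map| in mathlib, and I take the displayed inequality to be a typo for it. With this reading, $\phi(b)\in Q$ holds for every $b\in P$.

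\textbf{Well-definedness.} Rather than comparing representatives by hand, I will use the universal property of ordinary localization. Since $\phi(P)\subseteq Q$, every element of $P$ becomes a unit in $B_Q$. So $\phi$ induces a ring homomorphism $\tilde\phi : A_P \to B_Q$ with $a/b \mapsto \phi(a)/\phi(b)$ (mathlib's \lstinline|Localization.map|). Suppose $(i,a,b)\sim(j,a',b')$, meaning $a/b = a'/b'$ in $A_P$. Applying $\tilde\phi$ gives $\phi(a)/\phi(b)=\phi(a')/\phi(b')$ in $B_Q$, which is exactly the equivalence of the image triples. Hence the map descends to $\hloc A P \to \hloc B Q$. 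Moreover, the square formed with the embeddings $\hloc A P \hookrightarrow A_P$ and $\hloc B Q \hookrightarrow B_Q$ commutes by construction.

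\textbf{Ring structure.} Since $\hloc B Q \hookrightarrow B_Q$ is injective, it suffices to check that the composite $\hloc A P \to \hloc B Q \hookrightarrow B_Q$ preserves $0$, $1$, $+$ and $\times$. By the commuting square, this composite equals $\tilde\phi$ composed with the embedding $\hloc A P \hookrightarrow A_P$. Both of those are ring homomorphisms: the embedding respects the operations by the very definition of the ring structure on $\hloc A P$, which transports the operations of $A_P$ (for instance $[(i,a,b)]\cdot[(j,c,d)] = [(i+j,ac,bd)]$). So the composite is a ring homomorphism, and therefore so is the map itself.

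The only genuine obstacle is the denominator condition, namely getting the inclusion in the right direction. Once that is settled, everything else reduces to functoriality of ordinary localization together with injectivity of the embedding.
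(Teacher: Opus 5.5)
Your proposal is correct and takes essentially the paper's route. The paper also obtains the map by restricting the induced localization map $A_P \to B_Q$ to degree-zero parts, and its Lean construction is your representative-wise map $[(i,a,b)] \mapsto [(i,\phi(a),\phi(b))]$ descended through the quotient. You are also right that the hypothesis must read $P \le \phi^{-1}(Q)$: this is the form \texttt{comap\_le} used in the formal definition, so the displayed $\phi^{-1}(Q) \le P$ is a typo.
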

\begin{proof}{}
  $\phi$ induces a map $A_{P} \to A_{Q}$; since $\phi$ preserves degrees, we can restrict the map to the degree zero part of the localized rings.
\end{proof}
\begin{remark}{}
  Since the homogeneous localization is not implemented as a subring of the localized ring, we can not restrict $\phi$ to the degree zero part; we have to construct the map by hand:
\begin{lstlisting}[extendedchars=true]
def HomogeneousLocalization.map (g : A →+* B)
    (comap_le : P ≤ Q.comap g) (hg : ∀ i, ∀ a ∈ 𝒜 i, g a ∈ ℬ i) :
    HomogeneousLocalization 𝒜 P →+* HomogeneousLocalization ℬ Q where
  toFun := Quotient.map' (fun x ↦ ⟨x.1, ⟨_, hg _ _ x.2.2⟩, ⟨_, hg _ _ x.3.2⟩, comap_le x.4⟩) ...
  map_add' := ...
  map_mul' := ...
  map_zero' := ...
  map_one' := ...
\end{lstlisting}
Luckily, with \lstinline|NumDenSameDeg| structure, the numerator and the denominator are easily accessible.
\end{remark}

\section{Localization of Graded Ring and Modules}\label{graded-loc}\label{sec:graded-loc}
In section \ref{sec:homogeneous-localization}, we see that not every localized ring is graded. 
If we restrict our attention to homogeneous submonoids, we can define the grading of the localized rings and modules.
In this section, we assume the indexing set for the grading of $A$ is an additive group. 
Let $S$ be a homogeneous submonoid of $A$, we construct the quotient grading for the localized ring $A_S$. 
We will adopt a similar approach to the construction of homogeneous localization in section \ref{sec:homogeneous-localization}.

\begin{construction}\label{grading-localization}
Let $\mathcolorbox{yellow!10}{P_i}$ denote the set of qudraples $(a, b, d_a, d_b)$ where $d_a$ and $d_b$ are in $\iota$ such that $d_a - d_b = i$, 
$a \in A$ is homogeneous of degree $d_a$ and $b \in S$ is homogeneous of degree $d_b$.
For each $i \in \iota$, we give $P_i$ a group structure by: 
\begin{itemize}
  \item The zero element is $(0, 1, 0, 0)$.
  \item The result of adding $(a, b, d_a, d_b)$ and $(a', b', d_{a'}, d_{b'})$ is defined as
$(ab' + a'b, bb', d_a + d_{a'}, d_b + d_{b'})$.
  \item The negation of $(a, b, d_a, d_b)$ is defined as $(-a, b, d_a, d_b)$.
\end{itemize}
Hence, we have \sethlcolor{red!10}\hl{a group homomorphism $\mathsf{frac}: P_i \to A_{S}$ defined by $(a, b, d_a, d_{b}) \mapsto \frac{a}{b}$}.
Consider the quotient formed by $\quotient{P_{i}}{\sim}$ where \sethlcolor{green!10}\hl{two quadruples $p$ and $q$ are equivalent if and only if $\mathsf{frac}(p) = \mathsf{frac}(q)$}.
We take \sethlcolor{olive!10}\hl{the $i$-th grading $A_{S,i}$ of $A_S$ to be the range of the group homomorphism} \sethlcolor{purple!15}\hl{$\mathsf{frac} : \quotient{P_{i}}{\sim} \to A_{S}$}.
\begin{lstlisting}[linebackgroundcolor={%
  \ifnum\value{lstnumber}=10\color{red!10}\fi
  \ifnum\value{lstnumber}=11\color{red!10}\fi
  \ifnum\value{lstnumber}=12\color{red!10}\fi
  \ifnum\value{lstnumber}=13\color{red!10}\fi
  \ifnum\value{lstnumber}=14\color{red!10}\fi
  \ifnum\value{lstnumber}=15\color{red!10}\fi
  \ifnum\value{lstnumber}=17\color{green!10}\fi
  \ifnum\value{lstnumber}=18\color{green!10}\fi
  \ifnum\value{lstnumber}=19\color{green!10}\fi
  \ifnum\value{lstnumber}=21\color{purple!15}\fi
  \ifnum\value{lstnumber}=22\color{purple!15}\fi
  \ifnum\value{lstnumber}=23\color{purple!15}\fi
  \ifnum\value{lstnumber}=24\color{purple!15}\fi
  \ifnum\value{lstnumber}=26\color{olive!10}\fi
  \ifnum\value{lstnumber}=27\color{olive!10}\fi
  \ifnum\value{lstnumber}=28\color{olive!10}\fi
}, caption={Grading of localized rings}]
structure (*@\sethlcolor{yellow!10}\hl{PreLocalizationGrading}@*) (i : ι) where
  num : A
  den : S.toSubmonoid
  degNum : ι
  degDen : ι
  num_mem : num ∈ 𝒜 degNum
  den_mem : (den : A) ∈ 𝒜 degDen
  deg_frac_eq : degNum - degDen = i

def val (i : ι) : 
    (S.PreLocalizationGrading i) →+ 
    Localization S.toSubmonoid where
  toFun x := Localization.mk x.num x.den
  map_zero' := Localization.mk_zero 1
  map_add' := by simp [Localization.add_mk]

def addCon (i : ι) : 
    AddCon (S.PreLocalizationGrading i) := 
  AddCon.ker (val S i)

def emb (i : ι) : 
    (addCon S i).Quotient →+ 
    Localization S.toSubmonoid :=
  AddCon.lift _ (val ..) le_rfl

def LocalizationGrading (i : ι) : 
    AddSubgroup (Localization S.toSubmonoid) := 
  (PreLocalizationGrading.emb S i).range
\end{lstlisting} 
To construct the decomposition ring homomorphism $A_{S} \to \bigoplus_i A_{S, i}$, we first construct a ring homomorphism $A \to \bigoplus_{i} A_{S, i}$ and check that 
every element in $S$ is sent to an invertible element. 
The ring homomorphism is defined as the following:
\[
\begin{tikzcd}[column sep=large]
A \arrow[r, "\mathcolorbox{gray!10}{\decompose}"] & \bigoplus_{i\in\iota} A_i \arrow[r, "\mathcolorbox{brown!10}{{\bigoplus x \mapsto \frac x 1}}"] & \bigoplus_{i\in\iota} A_{S,i}.
\end{tikzcd}
\]

\begin{lstlisting}[extendedchars=true, linebackgroundcolor={%
  \ifnum\value{lstnumber}=5\color{brown!10}\fi
  \ifnum\value{lstnumber}=6\color{brown!10}\fi
  \ifnum\value{lstnumber}=7\color{brown!10}\fi
  \ifnum\value{lstnumber}=8\color{brown!10}\fi
  \ifnum\value{lstnumber}=9\color{brown!10}\fi
  \ifnum\value{lstnumber}=10\color{brown!10}\fi
  \ifnum\value{lstnumber}=11\color{brown!10}\fi
  \ifnum\value{lstnumber}=12\color{brown!10}\fi
  \ifnum\value{lstnumber}=13\color{brown!10}\fi
  },
  linebackgroundwidth=370pt,
  linebackgroundsep=-24pt]
noncomputable def decomposition :
    Localization S.toSubmonoid →+* ⨁ i : ι, S.LocalizationGrading i :=
  IsLocalization.lift (M := S.toSubmonoid) (S := Localization S.toSubmonoid)
    (g := RingHom.comp
      (DirectSum.toSemiring (fun i ↦
        (DirectSum.of (fun i ↦ S.LocalizationGrading i) i : 
          S.LocalizationGrading i →+ ⨁ i, S.LocalizationGrading i).comp
        (⟨⟨fun x ↦ ⟨Localization.mk x.1 1,
          ⟨AddCon.mk' _ ⟨x.1, 1, i, 0, x.2, SetLike.GradedOne.one_mem, by simp⟩, rfl⟩⟩, 
            ...⟩, ...⟩ : 
            𝒜 i →+ S.LocalizationGrading i))
            ... 
            ...) 
      (*@\sethlcolor{gray!10}\hl{(DirectSum.decomposeRingEquiv $\mcA$).toRingHom}@*))
  ... --- check that elements in (*@$S$@*) are sent to invertible elements
\end{lstlisting}
To summarize, an element $x \in A_{S}$ has degree $i$ if and only if there exists $m,n\in\iota$ such that $m - n = \iota$ and 
$x = \frac{a}{b}$ for some $a \in A$ with degree $m$ and $b \in S$ with degree $n$.
\begin{lstlisting}[extendedchars=true]
lemma mem_localizationGrading_iff (x : Localization S.toSubmonoid) (i : ι) :
    x ∈ S.LocalizationGrading i ↔
    ∃ (m n : ι) (_ : m - n = i) (a : 𝒜 m) (b : 𝒜 n) (hb : b.1 ∈ S.toSubmonoid),
    x = Localization.mk a.1 ⟨b, hb⟩ := by
  constructor
  · rintro ⟨x, rfl⟩
    obtain ⟨x, rfl⟩ := AddCon.mk'_surjective x
    exact ⟨x.degNum, x.degDen, x.deg_frac_eq, ⟨x.num, x.num_mem⟩, ⟨x.den, x.den_mem⟩,
      x.den.2, rfl⟩
  · rintro ⟨m, n, rfl, ⟨a, ha⟩, ⟨b, hb⟩, hb', rfl⟩
    exact ⟨AddCon.mk' _ ⟨a, ⟨b, hb'⟩, m, n, ha, hb, rfl⟩, rfl⟩
\end{lstlisting}
\begin{flushright}
\cite[file location: {\footnotesize\textsf{Grading/Localization.lean}}]{projconstruction}
\end{flushright}
\end{construction}

If $M$ is a graded $A$-module also with $\iota$ as the indexing set for the grading,
we can construct the localized module $M_{S}$ as a graded $A_{S}$-module in the same way as Construction \ref{grading-localization}.



As graded rings, we can always enlarge the submonoid $S$ to $\mybar S$ and have the rings $A_S$ isomorphic to $A_{\mybar S}$:
\begin{theorem}\label{loc-iso-bar}
  The localized rings $A_{S}$ and $A_{\mybar S}$ are isomorphic as graded rings.
\begin{flushright}
\cite[\lstinline|localizationEquivLocalizationBar| file location: {\footnotesize\textsf{HomogeneousSubmonoid/IsoBar.lean}}]{projconstruction}
\end{flushright}
\end{theorem}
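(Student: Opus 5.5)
We construct the isomorphism from the universal property of localization in both directions and then check that it respects the gradings of Construction \ref{grading-localization}. Since $S \le \mybar S$ (every $s\in S$ is a homogeneous divisor of itself), the identity of $A$ sends $S$ into $\mybar S$. This gives a canonical ring homomorphism $\alpha : A_S \to A_{\mybar S}$, $\frac{a}{s}\mapsto \frac a s$.

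For the inverse, we use the universal property of $A_{\mybar S}$ applied to $A \to A_S$. It suffices to show that every $x\in\mybar S$ becomes a unit in $A_S$. By definition $x$ divides some $y\in S$, and Lemma \ref{exists-homogeneous-of-dvd} gives a homogeneous $b$ with $xb=y$. Then $x\cdot \frac{b}{y}=1$ in $A_S$, so $\frac{x}{1}$ is invertible. This yields $\beta : A_{\mybar S}\to A_S$ with $\beta\!\left(\frac{a}{x}\right)=\frac{ab}{xb}$. The composites $\beta\circ\alpha$ and $\alpha\circ\beta$ restrict to the canonical map on $A$. Uniqueness in the universal property (\lstinline|IsLocalization.ringHom_ext|) then shows they are identities, so $\alpha$ is a ring isomorphism.

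It remains to check that $\alpha$ and $\alpha^{-1}=\beta$ preserve the gradings; we use the characterization \lstinline|mem_localizationGrading_iff|. Suppose $z\in A_{S,i}$ with $z=\frac{a}{s}$, where $a$ has degree $m$, $s\in S$ has degree $n$ and $m-n=i$. Then $\alpha(z)=\frac a s$ with $s\in\mybar S$ of degree $n$, so $\alpha(z)\in A_{\mybar S,i}$. Conversely, suppose $z=\frac a x\in A_{\mybar S,i}$ with $x\in\mybar S$ of degree $n$, $a$ of degree $m$ and $m-n=i$. Choose $b$ homogeneous of degree $k$ with $xb=y\in S$, again by Lemma \ref{exists-homogeneous-of-dvd}. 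Then $\beta(z)=\frac{ab}{y}$, the numerator has degree $m+k$, the denominator $y$ has degree $n+k$, and $(m+k)-(n+k)=i$. Hence $\beta(z)\in A_{S,i}$.

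The main obstacle is the bookkeeping in this last step. We must make sure that $\beta$ applied to \lstinline|Localization.mk a ⟨x,_⟩| really evaluates to \lstinline|Localization.mk (a*b) ⟨y,_⟩|. This comes from \lstinline|IsLocalization.lift_mk'| together with the identity $\frac{1}{1}\cdot\left(\frac{x}{1}\right)^{-1}=\frac{b}{y}$. We also need the degree of $b$ to be extracted constructively from the homogeneity witness. Once the explicit formula for $\beta$ is available, the degree computation is the routine one above, and we package the result as a \lstinline|GradedRingEquiv|.
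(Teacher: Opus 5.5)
Your proposal is correct and follows essentially the same route as the paper: the forward map comes from $S \le \mybar S$, the inverse comes from the universal property (using Lemma~\ref{exists-homogeneous-of-dvd} to invert elements of $\mybar S$ in $A_S$), and gradedness is checked on fractions by rewriting $\frac{a}{x}$ as $\frac{ab}{xb}$ with $xb \in S$. Your appeal to uniqueness in the universal property to show the composites are identities fills in a step the paper asserts without detail.
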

\begin{proof}{}
  Since $S \le \mybar{S}$, we have a ring homomorphism $\leanmath{localizationToLocalizationBar}: A_{S} \to A_{\mybar S}$.
  Let $s \in \mybar S$, there exists an $y \in S$ such that $s$ divides $y$.
  
  In order to descend the map $A \to A_S$ to a map $A_{\mybar S} \to A_S$, we need to check each element $s \in \mybar{S}$ is invertible in $A_{S}$.
  By~\cref{exists-homogeneous-of-dvd}, there exists a homogeneous element $z$ such that $sz = y$.
  Therefore, the fraction $\frac s 1 \in A_{S}$ has an inverse $\frac{z}{y} \in A_{S}$. 
  Hence, we have a well-defined ring homomorphism $\leanmath{localizationBarToLocalization}: A_{\mybar S} \to A_{S}$. 
  The two ring homomorphisms are inverses of each other.

  We show that the ring isomorphism $A_{S} \to A_{\mybar S}$ is a graded:
  suppose $x \in A_{S}$ is homogeneous of degree $i$, by~\cref{grading-localization}, $x$ can be written as $\frac{a}{b}$
  with $a$ homogeneous of degree $m$ and $b \in S$ homogeneous of degree $n$ such that $m - n = i$.
  Thus image of $x$ in $A_{S}$ is still $\frac a b$ where $b$ is seen as an element of $\mybar S$; hence $x$ still has degree $i$ in $A_{\mybar S}$.
  Similarly, if $x \in A_{\mybar S}$ is homogeneous of degree $i$, then it can be written as $\frac{a}{b}$ where $a$ is homogeneous of degree $m$ and 
  $b \in \mybar S$ is homogeneous of degree $n$ such that $m - n = i$. 
  Suppose $b$ divides $y \in S$, then there exists a homogeneous element $z$ such that $bz = y$.
  Therefore, the image of $x$ in $A_{S}$ is $\frac{a z}{b z}$, which is still homogeneous of degree $i$.

\begin{lstlisting}[extendedchars=true, caption={$A_S \cong A_{\mybar S}$ as graded rings}]
def localizationToLocalizationBar : Localization S.toSubmonoid →+* Localization S.bar.toSubmonoid :=
IsLocalization.lift
  (M := S.toSubmonoid) (R := A) (S := Localization S.toSubmonoid) (g := algebraMap _ _) ...

def localizationBarToLocalization : Localization S.bar.toSubmonoid →+* Localization S.toSubmonoid :=
IsLocalization.lift
  (M := S.bar.toSubmonoid) (R := A) (S := Localization S.bar.toSubmonoid) (g := algebraMap _ _) ...

def localizationEquivLocalizationBar : S.LocalizationGrading ≃+* S.bar.LocalizationGrading where
    __ := S.localizationToLocalizationBar
  invFun := S.localizationBarToLocalization
  left_inv := ...
  right_inv := ...
  map_mem' := ...
  inv_mem' := ...
\end{lstlisting}
\end{proof}

\section{Tensor Product of Graded Rings}\label{graded-tensor}\label{sec:graded-tensor}
In this section, we assume $A$ and $B$ are graded commutative $R$-algebras where $A$ is graded by $\iota_{A}$ and $B$ is graded by $\iota_B$. 
We will realize $A \ox_R B$ as an $\iota_A \times \iota_B$-graded $R$-algebra.
\begin{lstlisting}[extendedchars=true]
variable {ιA ιB R A B : Type*}
variable [DecidableEq ιA] [AddCommGroup ιA]
variable [DecidableEq ιB] [AddCommGroup ιB]
variable [CommRing R] [CommRing A] [Algebra R A] [CommRing B] [Algebra R B]
variable (𝒜 : ιA → Submodule R A) (ℬ : ιB → Submodule R B)
variable [GradedAlgebra 𝒜] [GradedAlgebra ℬ]
\end{lstlisting}

\begin{construction}{(Graded Tensor Product)}
  For any $i \in \iota_A$ and $j \in \iota_B$, the $(i,j)$-th graded piece of the tensor product $A \ox_R B$ is the 
  $A_i \ox_R B_j$. Since type theoretically, we need $A_i \ox_R B_j$ as a submodule of $A \ox_R B$, we use the range 
  of the linear map $A_i \ox_R B_j \to A \ox_R B$.
  \begin{lstlisting}[caption={Graded Tensor Product}]
noncomputable def gradingByProduct : ιA × ιB → Submodule R (A ⊗[R] B) := fun p ↦
  LinearMap.range (TensorProduct.map (𝒜 p.1).subtype (ℬ p.2).subtype)

scoped infix:min "⊗" => gradingByProduct
  \end{lstlisting}
We can check that $1$ has degree $(0, 0)$ and that the product of an element of degree $(i, j)$ and an element of degree $(i', j')$ has degree $(i + i', j + j')$.
To show that \lstinline|gradingByProduct| is indeed a grading, we construct a linear map 
  \[
  \begin{tikzcd}
  A \ox_R B \arrow[r, "\mathcolorbox{yellow!10}{\cong}"] & {\bigoplus_i\!A_i \; \ox_R \; \bigoplus_j\!B_j} \arrow[r, "\bigoplus \ox"] & \bigoplus_{(i, j)} A_i \ox_R B_j
  \end{tikzcd},
  \]
  where the second map is defined as 
  \[
    \mathcolorbox{red!10}{
    (0,\dots,0,\underset{\mathclap{\substack{\uparrow \\ \text{$i$-th index}}}}{a},0,\dots,0) \ox 
    (0,\dots,0,\underset{\mathclap{\substack{\uparrow \\ \text{$j$-th index}}}}{b},0,\dots,0) \mapsto 
    (0,\dots,0,\underset{\mathclap{\substack{\uparrow \\ \text{$(i, j)$-th index}}}}{a \ox b},0,\dots,0)}.
  \]
\begin{lstlisting}[linebackgroundcolor={%
  \ifnum\value{lstnumber}=3\color{red!10}\fi
  \ifnum\value{lstnumber}=4\color{red!10}\fi
  \ifnum\value{lstnumber}=5\color{red!10}\fi
  \ifnum\value{lstnumber}=6\color{red!10}\fi
  \ifnum\value{lstnumber}=7\color{red!10}\fi
  \ifnum\value{lstnumber}=8\color{red!10}\fi
  \ifnum\value{lstnumber}=9\color{red!10}\fi
  \ifnum\value{lstnumber}=11\color{yellow!10}\fi
  \ifnum\value{lstnumber}=12\color{yellow!10}\fi
  \ifnum\value{lstnumber}=13\color{yellow!10}\fi},
  linebackgroundsep=-8pt,
  linebackgroundwidth=400pt, extendedchars=true]
noncomputable def decompositionByProduct : (A ⊗[R] B) →ₗ[R] (⨁ (p : ιA × ιB), (𝒜 ⊗ ℬ) p) :=
TensorProduct.lift
  (DirectSum.toModule _ _ _ fun i ↦
    { toFun a := DirectSum.toModule _ _ _ fun j ↦
      { toFun b := DirectSum.lof R (ιA × ιB) (fun p => (𝒜 ⊗ ℬ) p) (i, j) ⟨a.1 ⊗ₜ b.1, ⟨a ⊗ₜ b, rfl⟩⟩
        map_add' := ...
        map_smul' := ... }
      map_add' := ...
      map_smul' := ...  }) 
    ∘ₗ
  TensorProduct.map 
    (DirectSum.decomposeLinearEquiv 𝒜 |>.toLinearMap) 
    (DirectSum.decomposeLinearEquiv ℬ |>.toLinearMap)
\end{lstlisting}
\begin{flushright}
\cite[file location: {\footnotesize\textsf{Grading/TensorProduct.lean}}]{projconstruction}
\end{flushright}
\end{construction}

\begin{theorem}{(Tensor product of relevant element)}\label{relevant-tensor-relevant}
  If $x$ is a relevant homogeneous element in $A$ and $y$ is a relevant homogeneous element in $B$, 
  $x \ox y$ is a relevant homogeneous element in $A \ox_R B$.
\begin{flushright}
\cite[\lstinline|TensorProduct.tmul_elemIsRelevant| file location: {\footnotesize\textsf{Grading/TensorProduct.lean}}]{projconstruction}
\end{flushright}
\end{theorem}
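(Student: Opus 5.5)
We prove the statement directly from Definition~\ref{def:relev}, avoiding Theorem~\ref{elemIsRelevant-iff} so that no finite generation hypothesis on $\iota_A$ or $\iota_B$ is needed. Let $x \in A_i$ and $y \in B_j$. Then $x \ox y$ is homogeneous of degree $(i,j)$ for the grading \lstinline|gradingByProduct|. Fix an arbitrary $(p,q) \in \iota_A \times \iota_B$. We must find $N>0$ with $N\cdot(p,q) \in (\iota_A\times\iota_B)\bigl[\mybar{\langle x\ox y\rangle}\bigr]$.

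\textbf{Transferring degrees.} The key observation is that homogeneous divisors pass through the tensor product. Suppose $a \in A_{d}$ is a homogeneous divisor of $x^k$, say $a a' = x^k$; by Lemma~\ref{exists-homogeneous-of-dvd} we may take $a'$ homogeneous, of degree $ki-d$. Then
\[
(a\ox 1)(a'\ox y^k) = x^k \ox y^k = (x\ox y)^k ,
\]
so $a \ox 1$ is a homogeneous divisor of $(x\ox y)^k$, of degree $(d,0)$. Hence $(d,0) \in \deg\bigl(\mybar{\langle x\ox y\rangle}\bigr)$ for every $d \in \deg(\mybar{\langle x\rangle})$. Symmetrically, $(0,e)$ lies in the same set for every $e \in \deg(\mybar{\langle y\rangle})$, using $(x^k\ox b')(1\ox b)$. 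Passing to generated subgroups gives
\[
\iota_A[\mybar{\langle x\rangle}]\times \iota_B[\mybar{\langle y\rangle}] \subseteq (\iota_A\times\iota_B)\bigl[\mybar{\langle x\ox y\rangle}\bigr].
\]
This is the image of the product of the two subgroups under the additive maps $d\mapsto (d,0)$ and $e \mapsto (0,e)$.

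\textbf{Concluding.} Since $x$ is relevant, there is $m>0$ with $m p \in \iota_A[\mybar{\langle x\rangle}]$. Since $y$ is relevant, there is $n>0$ with $n q\in\iota_B[\mybar{\langle y\rangle}]$. Take $N = mn$. Then $mn\, p$ and $mn\, q$ lie in the respective subgroups, so $N\cdot(p,q)$ lies in their product, and therefore in $(\iota_A\times\iota_B)[\mybar{\langle x\ox y\rangle}]$.

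\textbf{Main obstacle.} The mathematics is light; the difficulty is formal. One must show that the homogeneity of $a\ox 1$ at degree $(d,0)$ is witnessed in the range of $A_d\ox_R B_0 \to A\ox_R B$, using $1 \in B_0$. One must also manage the closure of the image in \lstinline|AddSubgroup.closure|, which we would handle with \lstinline|AddSubgroup.closure_induction| applied separately to each factor, composed with the inclusion homomorphisms $\iota_A \to \iota_A\times\iota_B$ and $\iota_B \to \iota_A\times\iota_B$.
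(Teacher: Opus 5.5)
Your proof is correct and is essentially the paper's argument. The paper uses the surjection $\iota_A/M\times\iota_B/N\to(\iota_A\times\iota_B)/X$, where $M=\iota_A[\overline{\langle x\rangle}]$, $N=\iota_B[\overline{\langle y\rangle}]$ and $X=(\iota_A\times\iota_B)[\overline{\langle x\otimes y\rangle}]$, together with the fact that a product of torsion groups is torsion; this is your inclusion $M\times N\subseteq X$ followed by your explicit multiplier $mn$. Your identity $(a\otimes 1)(a'\otimes y^k)=(x\otimes y)^k$ also supplies the step the paper leaves implicit, namely why that surjection is well defined, and the homogeneity of $a'$ is not needed for it.
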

\begin{proof}{}
  Let us denote $M$ to be the subgroup $\iota_{A}\left[ \mybar{\left\langle x \right\rangle} \right]$ of $\iota_A$, 
  $N$ to be the subgroup $\iota_{B}\left[ \mybar{\left\langle y \right\rangle} \right]$ of $\iota_B$ and
  $X$ to be the subgroup $\left( \iota_A \times \iota_{B} \right)\left[ \mybar{\left\langle x\ox y \right\rangle} \right]$.
  We will show that $\quotient{\iota_A \times \iota_B}{X}$ is torsion.
  Let $e : \quotient{\iota_{A}}{M} \times \quotient{\iota_{B}}{N} \to \quotient{\iota_A\times\iota_B}{X}$ be the surjective group homomorphism induced by
  the two group homomorphisms $\iota_A \to \iota_A \times \iota_B \to \quotient{\iota_A \times \iota_B}{X}$ and 
  $\iota_B \to \iota_A \times \iota_B \to \quotient{\iota_A \times \iota_B}{X}$.
  Hence, we only need to show that $\quotient{\iota_A}{M} \times \quotient{\iota_{B}}{N}$ is torsion.
  Since $x$ is relevant, $\quotient{\iota_A}{M}$ is torsion, and since $y$ is relevant, $\quotient{\iota_B}{N}$ is torsion; consequently, their product is torsion.
  
\end{proof}

\begin{theorem}{}\label{relevant-elem-in-tensor-product}
  Suppose $\iota_A$ and $\iota_{B}$ are both finitely generated abelian groups.
  For any relevant homogeneous element $x \in A \ox_R B$, 
  there exists $s^{A}_1, \dots, s^{A}_n$ in $A$ and $s^{B}_1, \dots, s^{B}_n$ in $B$ such that each $s^{A}_i$ and $s^{B}_j$ are relevant homogeneous elements 
  and for some natural number $k$, 
  \[
  x^k = \sum_{i=1}^n s^{A}_i \ox s^{B}_i.
  \]
\begin{flushright}
\cite[\lstinline|TensorProduct.elemIsRelevant_of_exists| file location: {\footnotesize\textsf{Grading/TensorProduct.lean}}]{projconstruction}
\end{flushright}
\end{theorem}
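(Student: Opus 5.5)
The plan is to reduce everything to the characterisation of relevant elements in Theorem~\ref{elemIsRelevant-iff}, applied once in $A\ox_R B$ (graded by the finitely generated group $\iota_A\times\iota_B$) and then twice in the factors. Let $x$ be relevant and homogeneous. Theorem~\ref{elemIsRelevant-iff} gives homogeneous elements $y_1,\dots,y_m\in A\ox_R B$ with $y_l$ of degree $(a_l,b_l)$, and a natural number $k$ such that $x^k=\prod_{l=1}^m y_l$. Moreover the subgroup $H=\langle (a_l,b_l)\rangle$ has finite index in $\iota_A\times\iota_B$.

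\textbf{Step 1: expand each factor into pure tensors.} Since $y_l$ lies in the $(a_l,b_l)$-th graded piece, it is in the range of $A_{a_l}\ox_R B_{b_l}\to A\ox_R B$. By induction on the tensor product, $y_l=\sum_{r\in R_l}\alpha_{l,r}\ox\beta_{l,r}$ for a finite index set $R_l$, with $\alpha_{l,r}\in A_{a_l}$ and $\beta_{l,r}\in B_{b_l}$. Note that the degree of each $\alpha_{l,r}$ is the fixed $a_l$, independent of $r$.

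\textbf{Step 2: distribute the product.} Distributivity over choice functions $\rho\in\prod_l R_l$ (\lstinline|Finset.prod_univ_sum|) together with multiplicativity of pure tensors gives
\[
x^k=\sum_{\rho}\Bigl(\prod_{l}\alpha_{l,\rho(l)}\Bigr)\ox\Bigl(\prod_l \beta_{l,\rho(l)}\Bigr).
\]
We then set $s^A_\rho=\prod_l\alpha_{l,\rho(l)}$ and $s^B_\rho=\prod_l\beta_{l,\rho(l)}$, and reindex the finite set of $\rho$'s by $\{1,\dots,n\}$.

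\textbf{Step 3: relevance of the factors.} Each $s^A_\rho$ is a product (with exponent $1$) of homogeneous elements of degrees $a_1,\dots,a_m$. By the converse direction of Theorem~\ref{elemIsRelevant-iff} in $A$, it suffices that $\langle a_1,\dots,a_m\rangle$ has finite index in $\iota_A$. This subgroup is the image of $H$ under the surjective projection $\iota_A\times\iota_B\to\iota_A$. The image of a finite-index subgroup under a surjection has finite index, since the projection induces a surjection $(\iota_A\times\iota_B)/H\to\iota_A/\pi(H)$. The same argument applies to $s^B_\rho$ in $B$. Homogeneity of $s^A_\rho$ and $s^B_\rho$ is clear, since each is a product of homogeneous elements.

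The main obstacle is bookkeeping rather than mathematics. First, extracting from membership in \lstinline|gradingByProduct| an explicit finite sum of pure tensors with homogeneous components requires an induction on \lstinline|TensorProduct| inside \lstinline|LinearMap.range|. Second, the product-of-sums expansion indexed by dependent choice functions must be reindexed into the form $\sum_{i=1}^n s^A_i\ox s^B_i$ demanded by the statement. For the first point I would prove a separate lemma: every element of $(\mathcal A\otimes\mathcal B)(a,b)$ is of the form $\sum_{r\in R} \alpha_r\ox\beta_r$ with $\alpha_r\in\mathcal A_a$ and $\beta_r\in\mathcal B_b$. For the second point I would state the conclusion over an arbitrary finite index type before converting to \lstinline|Fin n|.
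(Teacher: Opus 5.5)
Your proposal is correct and follows the paper's proof step for step. You apply Theorem~\ref{elemIsRelevant-iff} in $A\ox_R B$, project the finite-index subgroup of degrees onto each factor, and expand each $y_l$ into pure tensors of fixed bidegree. You then distribute the product and apply the converse of Theorem~\ref{elemIsRelevant-iff} in $A$ and in $B$. Your explicit argument that the projection of a finite-index subgroup has finite index, via the induced surjection of quotients, fills in the step the paper compresses into ``Hence''.
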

\begin{proof}{}
  Since $x$ is relevant, by~\cref{elemIsRelevant-iff}, 
    we can find elements $y_1, \dots, y_N$ in $A \ox_R B$ such that each $y_i$ has degree 
      $\left(d^{A}_i, d^{B}_i\right) \in \iota_{A} \times \iota_{B}$ where $\left\langle \left(d^{A}_i, d^{B}_i\right)\middle| i = 1,\dots,N \right\rangle$
      is a subgroup of finite index in $\iota_{A} \times \iota_{B}$ and $x^k = \prod_i y_i$.
    Hence, $\left\langle d^{A}_{i} \middle| i = 1,\dots, N \right\rangle$ and $\left\langle d^{B}_{i} \middle| i = 1,\dots, N \right\rangle$ are subgroups of finite index in $\iota_A$ and $\iota_B$ respectively.
  Since each $y_i$ has degree $\left(d^{A}_i, d^{B}_i\right)$, we can write each $y_i$ as $\sum_{j \in J_i} s^{A}_{j} \ox s^{B}_j$ where $J_i$ is
  a finite set of indices and for any $j \in J_i$, $s^{A}_{j}$ and $s^{B}_{j}$ as degree $d^{A}_i$ and $d^{B}_i$ respective.
  Hence,
  \[
  \begin{aligned}
  x^k &= \prod_{i=1}^N y_i \\
  &= \prod_{i=1}^N \left( \sum_{j \in J_i} s^{A}_{j} \ox s^{B}_j \right) \\
  &= \sum_{\left(j_1, \dots, j_N \middle| j_i \in J_i \right)} \left(\prod_{i = 1}^N s^{A}_{j_i} \right) \ox \left(\prod_{i = 1}^N s^{B}_{j_i} \right)
  \end{aligned}.
  \]
  By~\cref{elemIsRelevant-iff} again, all the products $\prod_{i = 1}^N s^{A}_{j_i}$ and $\prod_{i = 1}^N s^{B}_{j_i}$ are relevant homogeneous elements. Thus, the theorem holds.
\end{proof}

\begin{theorem}{}{}
  The range of the linear map $A_{\dagger} \ox_R B_{\dagger} \to A \ox_R B$ is an ideal of $A \ox_R B$, and we have
  \[
  \rad{\left( A\ox_R B \right)_{\dagger}} = \rad{\im\left( A_{\dagger} \ox_R B_{\dagger} \to A \ox_R B \right)}.
  \]
\begin{flushright}
\cite[\lstinline|TensorProduct.rad_dagger| file location: {\footnotesize\textsf{Grading/TensorProduct.lean}}]{projconstruction}
\end{flushright}
\end{theorem}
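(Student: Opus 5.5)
The plan has two parts: first show that the image $I := \im(A_\dagger \ox_R B_\dagger \to A \ox_R B)$ is an ideal, then prove the two inclusions of radicals. Throughout, $\iota_A$ and $\iota_B$ are assumed finitely generated, as in Theorem \ref{relevant-elem-in-tensor-product}.

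\textbf{$I$ is an ideal.} The image $I$ is the $R$-span of the pure tensors $a \ox b$ with $a \in A_\dagger$ and $b \in B_\dagger$. It is enough to check closure under multiplication by pure tensors $a' \ox b'$. This is immediate: $(a'\ox b')(a\ox b) = a'a \ox b'b$, and $a'a \in A_\dagger$, $b'b\in B_\dagger$ because both are ideals. Closure under sums and $R$-scalars holds because $I$ is the range of a linear map. So $I$ is a submodule closed under multiplication by generators, hence an ideal.

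\textbf{Inclusion $\rad{I} \subseteq \rad{(A\ox_R B)_\dagger}$.} It suffices to show $I \subseteq (A\ox_R B)_\dagger$. Since $A_\dagger$ is spanned by relevant homogeneous elements and $B_\dagger$ likewise, $I$ is generated as an ideal by the pure tensors $x \ox y$ with $x \in A$ and $y\in B$ relevant homogeneous. To see this, expand $a = \sum_i \alpha_i x_i$ and $b = \sum_j \beta_j y_j$, so that $a\ox b = \sum_{i,j} (\alpha_i\ox\beta_j)(x_i\ox y_j)$. By Theorem \ref{relevant-tensor-relevant}, each such $x\ox y$ is relevant homogeneous, hence lies in $(A\ox_R B)_\dagger$.

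\textbf{Inclusion $\rad{(A\ox_R B)_\dagger} \subseteq \rad{I}$.} It suffices to show that every relevant homogeneous $z\in A\ox_R B$ lies in $\rad{I}$. By Theorem \ref{relevant-elem-in-tensor-product}, $z^k = \sum_i s^A_i \ox s^B_i$ with each $s^A_i$, $s^B_i$ relevant homogeneous. Then $s^A_i\in A_\dagger$ and $s^B_i \in B_\dagger$, so $z^k\in I$, hence $z\in\rad{I}$. Since $\rad{I}$ is an ideal containing the generators of $(A\ox_R B)_\dagger$, it contains $(A\ox_R B)_\dagger$ and therefore its radical.

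\textbf{Main obstacle.} Mathematically there is little difficulty; the work lies in the first step and the generator reductions. One has to describe the range of a tensor map of submodule inclusions as a span of pure tensors, and then run span induction twice (once over $A_\dagger$, once over $B_\dagger$) to reduce statements to generators. I would first prove a lemma that $I = \mathrm{Ideal.span}\{x\ox y : x, y \text{ relevant homogeneous}\}$, and then use it for both inclusions.
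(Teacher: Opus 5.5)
Your proposal is correct and follows the paper's proof step for step: the ideal property comes from $(a'\ox b')(a\ox b)=a'a\ox b'b$, the inclusion $\rad{(A\ox_R B)_\dagger}\subseteq\rad{I}$ from Theorem~\ref{relevant-elem-in-tensor-product}, and the reverse inclusion $I\subseteq(A\ox_R B)_\dagger$ from Theorem~\ref{relevant-tensor-relevant}. Your expansion $a\ox b=\sum_{i,j}(\alpha_i\ox\beta_j)(x_i\ox y_j)$ spells out the reduction to relevant generators that the paper compresses into ``this is precisely Theorem~\ref{relevant-tensor-relevant}'', and you are right that the finite generation of $\iota_A,\iota_B$ required by Theorem~\ref{relevant-elem-in-tensor-product} is an implicit hypothesis of the statement.
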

\begin{proof}{}
  We first check that the range of the linear map $A_{\dagger} \ox_R B_{\dagger} \to A \ox_R B$ is an ideal of $A \ox_R B$.
  Suppose $a \ox b \in A_{\dagger} \ox_R B_{\dagger}$ and $a' \ox b' \in A \ox_R B$ is in $\im\left( A_{\dagger} \ox_R B_{\dagger} \to A \ox_R B \right)$. Since $a \in A_{\dagger}$, $aa'$ is in $A_{\dagger}$, and since $b \in B_{\dagger}$, $bb'$ is in $B_{\dagger}$.
  Therefore, $(a' \ox b')(a \ox b)$ is in the range of the linear map $A_{\dagger} \ox_R B_{\dagger} \to A \ox_R B$.

  We then demonstrate $\rad{\left( A\ox_R B \right)_{\dagger}} \le \rad{\im\left( A_{\dagger} \ox_R B_{\dagger} \to A \ox_R B \right)}$.
  It is sufficient to show that every relevant homogeneous element $x \in A \ox_R B$ is in the range. 
  By~\cref{relevant-elem-in-tensor-product}, we can find relevant homogeneous elements $s^{A}_1, \dots, s^{A}_n$ in $A$ and $s^{B}_1, \dots, s^{B}_n$ in $B$ such that $x^k = \sum s_{i}^{A} \ox s_{i}^{B}$ for some natural number $k$. 
  Since each $s^{A}_{i}$ is in $A_{\dagger}$ and each $s^{B}_{i}$ is in $B_{\dagger}$, we have $\sum s^{A}_{i} \ox s^{B}_{i}$ is in $\im\left( A_{\dagger} \ox_R B_{\dagger} \to A \ox_R B \right)$ and consequently, $x$ is the radical.

  For the other direction, we demonstrate $\im\left( A_{\dagger} \ox_R B_{\dagger} \to A \ox_R B \right) \le \left( A\ox_R B \right)_{\dagger}$.
  Since $\left( A\ox_R B \right)_{\dagger}$ is an $R$-submodule of $A \ox_R B$ as well, it is sufficient to show that 
  for every element $a \ox b \in A_{\dagger} \ox_R B_{\dagger}$, $a \ox b$ is in $\left( A\ox_R B \right)_{\dagger}$.  
  This is precisely~\cref{relevant-tensor-relevant}.
\end{proof}

\section{Potions}\label{sec:potions}
In this section, we assume $A$ is a commutative $R_0$-algebra that is graded by an abelian group $\iota$. 
We use the same terminology found in~\cite{MR24}.

\begin{definition}{}{} \label{defi:potion}
  For a homogeneous submonoid $S \subseteq A$, the homogeneous localization $\hloc A S$ is called the \emph{potion} of $A$ with respect to $S$.
\begin{flushright}
\cite[\lstinline|Potion| file location: {\footnotesize\textsf{Potions/Basic.lean}}]{projconstruction}
\end{flushright}
\end{definition}
Since we will be working with multiple submonoids at once, the notation $\hloc A S$ can be confusing --- for example 
$\hloc A {(S_1 S_2)S_3}$ and $\hloc A {S_1(S_2S_3)}$ could be potentially misread to mean (normal) localization at $(S_1S_2)S_3$ with a redundant bracket in the notation.
Therefore, we use $\potion(S)$ to denote the ring $\hloc A S$. 
Type theoretically, since $S$ is a homogeneous submonoid of $A$, the absence of $A$ in the notation $\potion(S)$ should cause no issue.
If the ring of which $S$ is a submonoid could not be inferred from the context, we will use $\potion_A(S)$ instead.

We first collect some useful morphisms between potions defined by Lemma \ref{hloc-map}
\begin{itemize}
  \item If $\phi : A \to B$ is a graded ring homomorphism between two $\iota$-graded rings, there is a ring homomorphism
  $\leanmath{potionToMap}: \potion(S) \to \potion\left(\phi_{\star} S\right)$ induced by $\phi$.
  \item If $S = T$ are two homogeneous submonoids of $A$, we have a ring isomorphism $\leanmath{potionEquiv}: \potion(S) \cong \potion(T)$ induced by the $\id_A$.
  In particular, since $S S = S$, we have a ring isomorphism $\potion(S)\cong \potion(SS)$.
  \item For any two homogeneous submonoids $S$ and $T$, there is a ring homomorphism $\leanmath{potionToMul}:\potion(S) \to \potion(ST)$ induced by $\id_A$ and the following commutative square:
  \[
  \begin{tikzcd}[column sep=huge]
  \potion(S) \arrow[r, "\leanmath{potionToMul}"] \arrow[d, "\leanmath{potionToMap}"] 
  & \potion(ST) \arrow[r, "\leanmath{potionToMap}"] 
  & \potion\left(\phi_{\star}(S T)\right) \arrow[d, "\leanmath{potionEquiv}"']\\
  \potion\left(\phi_{\star} S\right) \arrow[rr, "\leanmath{potionToMul}"] & 
  & \potion\left(\left( \phi_{\star}S \right)\left( \phi_{\star}T \right)\right)
  \end{tikzcd}.
  \]
  With the ring homomorphism $\leanmath{potionToMul}$, we view $\potion(ST)$ as a $\potion(S)$-algebra.
  \item There is a ring homomorphism $\leanmath{toBarPotion}: \potion(S) \to \potion\left( \mybar{S} \right)$ induced by $\id_A$.
  Since $\leanmath{toBarPotion}$ is bijective, we have a ring isomorphism $\leanmath{equivBarPotion}: \potion(S) \cong \potion\left( \mybar{S} \right)$.
  For any homogeneous elements $a, b\in A$ of degree $i$ and $c \in A$ of degree $j$ such that $bc \in S$, 
  the image of the homogeneous fraction $\frac{m}{n} \in \potion\left( \mybar{S} \right)$ under $\leanmath{equivBarPotion}^{-1}$ is the homogeneous fraction
  $\frac{az}{bz} \in \potion(S)$.
\end{itemize}

\begin{definition}{}{}
  Suppose $S$ and $T$ be homogeneous submonoids of $A$. A \emph{potion generator} of $T$ over $S$ is the following data:
\begin{itemize}
  \item \sethlcolor{yellow!10}\hl{an indexing set $I$};
  \item \sethlcolor{red!10}\hl{a family of elements $\left\{ t_i \middle| i \in I \right\} \subseteq T$ generating $T$ as submonoid};
  \item \sethlcolor{blue!10}\hl{two families of elements $\left\{ s_i \middle| i \in I \right\} \subseteq \mybar{S}$ and  
  $\left\{ s_i' \middle| i \in I \right\} \subseteq \mybar{S}$ such that each $s_j$ is homogeneous of degree $i_j$ and 
  each $s_j'$ is homogeneous of degree $i'_j$};
  \item \sethlcolor{green!10}\hl{a family of positive natural numbers $\left\{ n_i\middle| i\in I \right\}$, such that for each $j \in I$,
  $t_j^{n_j}$ is homogeneous of degree $i_j - i'_j$ (the two families of indices $i_j,i'_j$ are also part of the data).}
\end{itemize}
For each $i$, the fraction $\frac{t_i^{n_i}s'_i}{s_i}$ is in $\potion\left( \mybar{S} \right)$.
We denote $\mathsf{s}(T')$ to \sethlcolor{olive!10}\hl{the submonoid of $S$ generated by the $f_i$ where $f_i\in\potion(S)$ 
corresponds to the fraction $\frac{t_i^{n_i}s'_i}{s_i} \in \potion\left( \mybar{S} \right)$ under $\leanmath{equivBarPotion}$}.

\begin{lstlisting}[extendedchars=true, caption={Potion Generator}, 
  linebackgroundcolor={%
  \ifnum\value{lstnumber}=2\color{yellow!10}\fi
  \ifnum\value{lstnumber}=3\color{red!10}\fi
  \ifnum\value{lstnumber}=4\color{red!10}\fi
  \ifnum\value{lstnumber}=5\color{red!10}\fi
  \ifnum\value{lstnumber}=6\color{blue!10}\fi
  \ifnum\value{lstnumber}=7\color{blue!10}\fi
  \ifnum\value{lstnumber}=8\color{blue!10}\fi
  \ifnum\value{lstnumber}=9\color{blue!10}\fi
  \ifnum\value{lstnumber}=10\color{blue!10}\fi
  \ifnum\value{lstnumber}=11\color{blue!10}\fi
  \ifnum\value{lstnumber}=12\color{green!10}\fi
  \ifnum\value{lstnumber}=13\color{green!10}\fi
  \ifnum\value{lstnumber}=16\color{olive!10}\fi
  \ifnum\value{lstnumber}=17\color{olive!10}\fi
  \ifnum\value{lstnumber}=18\color{olive!10}\fi
  \ifnum\value{lstnumber}=19\color{olive!10}\fi
  \ifnum\value{lstnumber}=20\color{olive!10}\fi
  \ifnum\value{lstnumber}=21\color{olive!10}\fi
  \ifnum\value{lstnumber}=22\color{olive!10}\fi}, linebackgroundsep=-8pt, linebackgroundwidth=280pt]
structure PotionGen where
  (index : Type*)
  (elem : index → A)
  (elem_mem : ∀ t, elem t ∈ T)
  (gen : Submonoid.closure (Set.range elem) = T.toSubmonoid)
  (s s' : index → A)
  (s_mem_bar : ∀ t, s t ∈ S.bar)
  (s'_mem_bar : ∀ t, s' t ∈ S.bar)
  (i i' : index → ι)
  (s_deg : ∀ t, s t ∈ 𝒜 (i t))
  (s'_deg : ∀ t, s' t ∈ 𝒜 (i' t))
  (n : index → ℕ+)
  (t_deg : ∀ t : index, (elem t : A)^(n t : ℕ) ∈ 𝒜 (i t - i' t))

def PotionGen.genSubmonoid (T' : PotionGen S T) : Submonoid S.Potion :=
  Submonoid.closure
    { x | ∃ (t : T'.index), x =
      S.equivBarPotion.symm (.mk
        { deg := T'.i t,
          num := ⟨(T'.elem t) ^ (T'.n t : ℕ) * T'.s' t, ...⟩,
          den := ⟨T'.s t, T'.s_deg t⟩,
          den_mem := T'.s_mem_bar t }) }
\end{lstlisting}
When the indexing set $I$ is finite, we say that $T'$ is a \emph{finite potion generator} of $T$ over $S$.
\begin{flushright}
\cite[\lstinline|PotionGen| file location: {\footnotesize\textsf{Potions/Basic.lean}}]{projconstruction}
\end{flushright}
\end{definition}

\begin{remark}{}\label{potion-gen-disj-union}
  For any three $R, S, T$ homogeneous submonoids of $A$, 
  if $R' = \left( I_R, t_R, s_R, {s'}_R, i_R, {i'}_R, n_R \right)$ is a potion generator of $R$ over $S$ and 
  $T' = \left( I_T, t_T, s_T, {s'}_T, i_T, {i'}_T, n_T \right)$ is a potion generator of $T$ over $S$, 
  we have a potion generator of $RT$ over $S$ given by the disjoint union of the two potion generators:
  \[R' \oplus T' = (I_R \oplus I_T, s_R \oplus s_T, {s'}_R \oplus {s'}_T, i_R \oplus i_T, {i'}_R \oplus {i'}_T, n_R \oplus n_T).\]
It is useful to note that $\sfs\left( R'\oplus T' \right)$ is equal to $\sfs(R')\sfs(T')$.
\begin{lstlisting}[caption={Disjoint union of potion generators}]
def PotionGen.disjUnion {R S T : HomogeneousSubmonoid 𝒜} (R' : PotionGen S R) (T' : PotionGen S T) :
    PotionGen S (R * T) where
  index := R'.index ⊕ T'.index
  elem := Sum.rec R'.elem T'.elem
  elem_mem := ...
  gen := ...
  n := Sum.rec R'.n T'.n
  s := Sum.rec R'.s T'.s
  s' := Sum.rec R'.s' T'.s'
  s_mem_bar := Sum.rec R'.s_mem_bar T'.s_mem_bar
  s'_mem_bar := Sum.rec R'.s'_mem_bar T'.s'_mem_bar
  i := Sum.rec R'.i T'.i
  i' := Sum.rec R'.i' T'.i'
  t_deg := Sum.rec R'.t_deg T'.t_deg
  s_deg := Sum.rec R'.s_deg T'.s_deg
  s'_deg := Sum.rec R'.s'_deg T'.s'_deg

lemma PotionGen.disjUnion_genSubmonoid {R S T : HomogeneousSubmonoid 𝒜}
      (R' : PotionGen S R) (T' : PotionGen S T) :
    (R'.disjUnion T').genSubmonoid = R'.genSubmonoid * T'.genSubmonoid := ...
\end{lstlisting}
\begin{flushright}
\cite[\lstinline|PotionGen.disjUnion| file location: {\footnotesize\textsf{Potions/Basic.lean}}]{projconstruction}
\end{flushright}
\end{remark}

The motivation for the definition of potion generator is the following theorem.
\begin{theorem}{}\label{potion-loc-iso}
  Let $T' = \left( I, t, s, s', i, i', n \right)$ be a potion generator of $T$ over $S$.
  We have a $\potion(S)$-algebra isomorphism between $\potion( ST )$ and $\potion(S)_{\sfs(T')}$.
\begin{flushright}
\cite[\lstinline|localizationAlgEquivPotion| file location: {\footnotesize\textsf{Potions/Localization.lean}}]{projconstruction}
\end{flushright}
\end{theorem}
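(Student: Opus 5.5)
The plan is to build a $\potion(S)$-algebra homomorphism $\Phi : \potion(S)_{\sfs(T')} \to \potion(ST)$ using the universal property of localization, and then to show it is bijective. Throughout we work in $\potion(\mybar S)$ via $\leanmath{equivBarPotion}$ whenever that is convenient.

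\emph{Construction of $\Phi$.} We start from $\leanmath{potionToMul} : \potion(S) \to \potion(ST)$. We must check that each generator $f_i$ of $\sfs(T')$ becomes a unit in $\potion(ST)$; it then suffices to treat the generators, since a product of units is a unit.
\begin{itemize}
\item Write $f_i = \frac{t_i^{n_i}s'_i}{s_i}$ in $\potion(\mybar S)$.
\item Since $s'_i \in \mybar S$, Lemma~\ref{exists-homogeneous-of-dvd} gives a homogeneous $z_i$ with $s'_i z_i = y_i \in S$.
\item The candidate inverse is the degree-zero fraction $\frac{s_i z_i}{t_i^{n_i} y_i}$. Its denominator $t_i^{n_i}y_i$ lies in $ST$, and its degree works out by the condition $\deg t_i^{n_i} = i_i - i'_i$.
\item The product of $f_i$ with this candidate is $\frac{t_i^{n_i}s'_i s_i z_i}{s_i t_i^{n_i} s'_i z_i}=1$.
\end{itemize}
Strictly speaking, $\potion(ST)$ should here be replaced by $\potion(\mybar{ST})$ and then transported back via the analogue of $\leanmath{equivBarPotion}$. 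The $\potion(S)$-algebra compatibility then holds by construction.

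\emph{Surjectivity.} Take an element $\frac{a}{s_0 t}$ of $\potion(ST)$ with $s_0 \in S$ and $t \in T$. Since the $t_i$ generate $T$, we may write $t=\prod_i t_i^{k_i}$ with finitely many $k_i$ nonzero. The explicit formula is
\[
\frac{a}{s_0t} \;=\; \frac{a\prod_i t_i^{k_i(n_i-1)}s_i'^{\,k_i}}{s_0\prod_i s_i^{k_i}}\cdot\prod_i f_i^{-k_i}.
\]
The first factor is a degree-zero fraction whose denominator lies in $\mybar S$, so it lies in $\potion(\mybar S)\cong\potion(S)$. The identity itself is a direct cancellation. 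Hence $\Phi$ hits every element.

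\emph{Injectivity, the main obstacle.} Suppose $\Phi(x/f)=0$, where $f\in\sfs(T')$ and $x=\frac{a}{d}\in\potion(S)$ with $d \in S$. Then $\frac ad=0$ in $A_{ST}$, so there are $s_1\in S$ and $t_1\in T$ with $s_1t_1a=0$.
\begin{itemize}
\item Write $t_1=\prod t_i^{k_i}$ and set $t'=\prod t_i^{n_ik_i}$. Then $t_1$ divides $t'$, so $s_1t'a=0$.
\item Put $g=\prod f_i^{k_i}\in\sfs(T')$. Then $g\cdot x = \frac{a\,t'\prod s_i'^{k_i}}{d\prod s_i^{k_i}}$.
\item This fraction vanishes in $A_{\mybar S}$, because $s_1 \in S$ annihilates its numerator. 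Hence $gx=0$ in $\potion(S)$, and therefore $x/f=0$ in the localization.
\end{itemize}

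The delicate part in practice is the bookkeeping. One has to move between $\potion(S)$ and $\potion(\mybar S)$, and between $\potion(ST)$ and $\potion(\mybar{ST})$. One also has to choose the exponent data $k_i$ from membership in the submonoid closure, which is an induction on the closure rather than an explicit product. I would first prove the explicit formula for $f_i^{k}$ and for products over closure elements as a separate lemma, and then derive both surjectivity and injectivity from it.
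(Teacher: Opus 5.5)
Your proposal is correct and follows the paper's proof essentially step for step. You lift the canonical map $\potion(S)\to\potion(ST)$ through $\potion(S)_{\sfs(T')}$ by showing each generator becomes a unit; you get injectivity by turning an annihilator $s_1t_1$ of $a$ into the element $\prod_i f_i^{k_i}\in\sfs(T')$ that kills $\frac{a}{d}$; and you get surjectivity from the same explicit factorization of $\frac{a}{s_0t}$. The differences are cosmetic: the paper rewrites each $\frac{t_i^{n_i}s'_i}{s_i}$ with a denominator in $S$ instead of passing through $\potion(\mybar{ST})$, and your remark that $t_1$ divides $\prod_i t_i^{n_ik_i}$ makes explicit a step the paper leaves implicit.
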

\begin{proof}{}
  In order to descend the ring homomorphism $\leanmath{potionToMul} : \potion(S) \to \potion(ST)$ to a ring homomorphism $\potion(S)_{\sfs(T')} \to \potion(ST)$,
  we need to show that the image of $\leanmath{equivBarPotion}^{-1}\left( \frac{t_i^{n_i}s'_i}{s_i} \right)$ is sent to a unit in $\potion(ST)$.
  Since $s_i$ and $s'_i$ are in $\mybar{S}$, we can find $y$ and $y'$ in $S$ such that $s_i\mid y$ and $s_i' \mid y'$.
  Hence, by~\cref{exists-homogeneous-of-dvd}, there exists two homogeneous elements $z$ and $z'$ such that $y = s_i z$ and $y' = s_i' z'$.
  Therefore, when seen as a fraction in $\potion(S)$, $\frac{t_i^{n_i}s'_i}{s_i}$ is equal to $\frac{t_i^{n_i}s'_iz}{s_i z}$ and, 
  its image under $\leanmath{potionToMul}$ is $\frac{t_i^{n_i}s'_iz}{s_i z}$ with inverse $\frac{s_i z'}{t_i^{n_i}s'_iz'}$.
  Therefore, we have a well-defined ring homomorphism $\phi: \potion(S)_{\sfs(T')} \to \potion(ST)$.
\begin{lstlisting}[extendedchars=true]
def localizationToPotion (T' : PotionGen S T) :
    Localization T'.genSubmonoid →+* (S * T).Potion :=
  @IsLocalization.lift
    (R := S.Potion)
    (M :=  _)
    (S :=  _)
    (P := (S * T).Potion)
    (g := S.potionToMul T) _ _ _ _
    (Localization.isLocalization (R := S.Potion) (M := T'.genSubmonoid)) 
    ...
\end{lstlisting}
An element in $\potion(S)_{\sfs(T')}$ has the form $\frac{\sfrac{a}{s}}{\prod_i\leanmath{equivBarPotion}^{-1}\left(\sfrac{t_i^{n_i}s'_i}{s_i}\right)^{k_i}}$ where 
in the denominator, the indices $i$ runs through a finite subset of $I$ and $k_i$ are positive natural numbers. 
With this notation, we see that the image under $\phi$ is equal to $\frac{a}{s}\cdot \prod_i \left( \frac{s_i}{t_i^{n_i}s'_i} \right)^{k_i}$.
Since, $\prod_i \left( \frac{s_i}{t_i^{n_i}s'_i} \right)^{k_i}$ is invertible in $\potion(ST)$, the image is zero if and only the numerator $\frac{a}{s} \in \potion(ST)$ is zero.

We first show that $\phi$ is injective. Suppose an element $\frac{\sfrac{a}{s}}{...}$ is sent to zero under $\phi$. 
We want to show that $\frac{\sfrac{a}{s}}{...}$ is equal to zero.
It is sufficient to find an element $x$ in $\sfs(T')$ such that $x \cdot \frac{a}{s}$ is zero as elements of $\potion(S)$.
Since $\phi\left(\frac{\sfrac{a}{s}}{\dots}\right)$ is zero, we see that $\frac{a}{s}$ is equal to zero in $\potion(ST)$, therefore,
there exists an element $\mathfrak{s} \in S$ and $\mathfrak{t} \in T$ such that $\mathfrak{s} \mathfrak{t} a = 0$.
Since $\left\{ t_i | i \in I \right\}$ generates $T$, we can write $\mathfrak{t}$ as $\prod_i {t_i}^{k_i}$. 
Set $x = \prod_i \leanmath{equivBarPotion}^{-1}\left( \frac{\mathfrak{s}t_i^{n_i}s'_i}{\mathfrak{s}s_i} \right)^{k_i}$, we see that $x\cdot \frac a s$ is equal to zero.

Then, we show that $\phi$ is surjective. Suppose $\frac{a}{\mathfrak{s}\mathfrak{t}}$ is an element in $\potion(ST)$ and we write $\mathfrak{t}$ as $\prod_i t_i^{k_i}$. 
We have the following equality in $\potion(ST)$:
\[
\frac{a}{\mathfrak{s}\mathfrak{t}} = \frac{a}{\mathfrak{s}\prod_i t_i^{k_i}} = \frac{a \prod_{i}t_i^{k_i (n_i - 1){s'_{i}}^{k_i}}}{\mathfrak{s}\prod_i{s_i}^{k_i}}\cdot \prod_i\left(\frac{s_i}{t_i^{n_i}{s'_i}}\right)^{k_i}.
\]
Hence, $\frac{a}{\mathfrak{s}\mathfrak{t}}$ is in the image of $\phi$.
Thus, $\phi$ is a ring isomorphism between $\potion(S)_{\sfs(T')}$ and $\potion(ST)$. One can check that $\phi$ is a $\potion(S)$-algebra isomorphism as well.
\begin{lstlisting}[extendedchars=true]
def localizationRingEquivPotion (T' : PotionGen S T) :
    Localization T'.genSubmonoid ≃+* (S * T).Potion :=
  RingEquiv.ofBijective (localizationToPotion T') ...

def localizationAlgEquivPotion (T' : PotionGen S T) :
    Localization T'.genSubmonoid ≃ₐ[S.Potion] (S * T).Potion :=
  AlgEquiv.ofRingEquiv (f := localizationRingEquivPotion T') fun x ↦ by
    induction x using Quotient.inductionOn' with | h x =>
    simp [localizationToPotion, Localization.mk_eq_mk', IsLocalization.lift_mk']
\end{lstlisting}
\end{proof}

\begin{corollary}\label{spec-potionToMul-is-open-immersion}
  Suppose there exists a finite potion generator $T' $ of $T$ over $S$.
  The morphism of scheme $\spec \potion(ST) \to \spec \potion(S)$ is an open immersion.
\end{corollary}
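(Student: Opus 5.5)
The plan is to reduce the statement to Theorem \ref{potion-loc-iso} and the standard fact that localization at a single element gives a basic open subscheme. Let $T' = (I, t, s, s', i, i', n)$ be a finite potion generator of $T$ over $S$. By Theorem \ref{potion-loc-iso} we have a $\potion(S)$-algebra isomorphism $\potion(S)_{\sfs(T')} \cong \potion(ST)$. Being a $\potion(S)$-algebra isomorphism, it is compatible with $\leanmath{potionToMul}$: the composite $\potion(S) \to \potion(S)_{\sfs(T')} \to \potion(ST)$ equals $\leanmath{potionToMul}$. Applying $\spec$, the morphism $\spec \potion(ST) \to \spec \potion(S)$ factors as an isomorphism $\spec \potion(ST) \cong \spec \potion(S)_{\sfs(T')}$ followed by the morphism induced by the localization map $\potion(S) \to \potion(S)_{\sfs(T')}$. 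Isomorphisms are open immersions and open immersions are stable under composition, so it suffices to show that the localization map induces an open immersion.

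Here finiteness enters. Since $I$ is finite, $\sfs(T')$ is the submonoid of $\potion(S)$ generated by the finitely many elements $f_i$, $i \in I$. Set $f = \prod_{i \in I} f_i$. Localizing at $\sfs(T')$ is the same as localizing at the submonoid generated by $f$, i.e. $\potion(S)_{\sfs(T')} \cong \potion(S)_f$ as $\potion(S)$-algebras. Indeed, each $f_i$ divides $f$, so $f_i$ becomes a unit in $\potion(S)_f$. Conversely, $f \in \sfs(T')$, so $f$ is a unit in $\potion(S)_{\sfs(T')}$. The universal properties of both localizations then give mutually inverse $\potion(S)$-algebra maps. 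In Lean I would state this through \lstinline|IsLocalization| for the powers of $f$: show that $\potion(S)_{\sfs(T')}$ satisfies \lstinline|IsLocalization.Away f|, rather than building the isomorphism by hand.

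Finally, for any ring $R$ and $f \in R$, the morphism $\spec R_f \to \spec R$ is an open immersion onto the basic open $D(f)$. This is available in mathlib, via \lstinline|IsLocalization.Away| and the corresponding open-immersion instance for $\spec$ of an away-localization. Combining the three steps gives the corollary.

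The main obstacle I expect is bookkeeping, not mathematics. First, the finite product $f$ and the identification of the localization at the finitely generated submonoid $\sfs(T')$ with an away-localization need some care. In Lean, the closure of $\{f_i\}$ versus the powers of $f$ is best handled by proving \lstinline|IsLocalization.Away f| directly, using that an element becomes a unit iff it divides a unit. Second, one must check that the algebra isomorphism of Theorem \ref{potion-loc-iso} really commutes with the structure maps, so that the composite morphism of schemes is literally the one induced by $\leanmath{potionToMul}$. This should follow from its being a $\potion(S)$-algebra map.
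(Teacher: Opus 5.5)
Your proposal is correct and follows the paper's proof: you factor $\spec\potion(ST)\to\spec\potion(S)$ through the isomorphism of Theorem~\ref{potion-loc-iso} and the localization map $\potion(S)\to\potion(S)_{\sfs(T')}$, and then use that $\sfs(T')$ is finitely generated. Your reduction to the away-localization at $f=\prod_{i\in I} f_i$ just spells out the paper's one-line justification ("$\sfs(T')$ is generated by a finite set").
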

\begin{proof}{}
  We have the following commutative triangle:
  \[
  \begin{tikzcd}[column sep=large]
    \potion(S) \arrow[r, "\leanmath{potionToMul}"] \arrow[d, "\frac{\bullet}{1}"] & \potion(ST) \arrow[ld, "\cong"] \\
    \potion(S)_{\sfs(T')} & \\
  \end{tikzcd}.
  \]
  Therefore, the morphism $\spec \potion(ST) \to \spec \potion(S)$ factors as
  \[\spec \potion(ST) \cong \spec \potion(S)_{\sfs(T')}  \to  \spec \potion(S).\]
  Hence, it is sufficient to show that $\spec \potion(S)_{\sfs(T')} \to  \spec \potion(S) $ is an open immersion.
  This is true because $\sfs(T')$ is generated by a finite set.
\end{proof}

\begin{theorem}\label{exists-finite-potion-generator}
  If $S$ is relevant and $T$ is finitely generated as a submonoid of $A$, there exists a finite potion generator $T'$ of $T$ over $S$. 
\begin{flushright}
\cite[\lstinline|finitePotionGen| file location: {\footnotesize\textsf{Potions/Basic.lean}}]{projconstruction}
\end{flushright}
\end{theorem}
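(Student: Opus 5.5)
Since $T$ is finitely generated as a submonoid of $A$ and is homogeneous, the first step is to choose a finite generating family $\{t_j \mid j \in I\}$ of homogeneous elements of $T$. One would like to take the generators $s$ from the definition of homogeneous submonoid. Instead, I would take a finite set $F$ with $\langle F\rangle = T$ and note that every element of $T$ is homogeneous, so each $t_j$ is homogeneous of some degree $d_j$. We set $I = F$ and $t_j = j$. It then remains, for each fixed $j$, to produce a positive integer $n_j$ and homogeneous elements $s_j, s'_j \in \mybar S$ of degrees $i_j, i'_j$ with $n_j d_j = i_j - i'_j$. The degree condition on $t_j^{n_j}$ follows, since $t_j^{n_j}$ is homogeneous of degree $n_j d_j$.

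For each $j$, apply relevance of $S$ to the element $d_j \in \iota$. This gives a positive natural number $n_j$ with $n_j d_j \in \iota[\mybar S]$. By the remark following Definition~\ref{def:deg} (the subgroup generated by a submonoid is the set of differences), $\iota[\mybar S]$ consists of elements $a - b$ with $a, b \in \deg(\mybar S)$. Here we use that $\deg(\mybar S)$ is an additive submonoid because $\mybar S$ is a homogeneous submonoid. So we can write $n_j d_j = i_j - i'_j$ with $i_j, i'_j \in \deg(\mybar S)$. By the definition of $\deg$, there exist $s_j \in \mybar S$ of degree $i_j$ and $s'_j \in \mybar S$ of degree $i'_j$. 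These data assemble into a \lstinline|PotionGen| whose index type is finite, as required.

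The main obstacle is bookkeeping rather than mathematics. First, the differences description of $\iota[\mybar S]$ must be available in usable form; in Lean, \lstinline|AddSubgroup.closure| of a submonoid requires a lemma identifying it with $\{a-b\}$, which I would prove by closure induction, checking that the difference set contains $\deg(\mybar S)$ and is closed under addition and negation. Second, the choices of $n_j, s_j, s'_j, i_j, i'_j$ depend on $j$ and must be extracted via choice (\lstinline|Classical.choose|) into functions on the finite index type. Third, the chosen generating family must be shown to generate $T$ exactly, that is, to satisfy $\mathrm{closure}(\mathrm{range}\ t) = T$, which holds by the choice of $F$. Note that the degree $d_j$ of $t_j$ is itself chosen from homogeneity of elements of $T$. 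Homogeneity of elements of $T$ follows from the \lstinline|homogeneous_gen| field, because products of homogeneous elements are homogeneous, again by closure induction.
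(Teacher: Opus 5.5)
Your proposal is correct and takes essentially the same route as the paper. Both choose a finite generating set of $T$, apply relevance of $S$ to each generator's degree to write $n_j d_j$ as a difference $i_j - i'_j$ of degrees in $\deg(\mybar S)$, pick witnesses $s_j, s'_j \in \mybar S$, and assemble these choices into a finite \lstinline|PotionGen|. The paper isolates the per-element step as an auxiliary claim valid for any homogeneous $t$, while you handle it inline together with the bookkeeping points you flag.
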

\begin{proof}{}
  We claim that for every homogeneous element $t \in A$, there exists a positive natural number $n$,
  and two homogeneous elements $s, s' \in \mybar{S}$ of degree $i$ and $i'$ respectively such that $t^n \in A_{i - i'}$:
\begin{lstlisting}[caption={Exsitence of finite potion generator}]
lemma finite_potionGen_exists_aux₂ (S_rel : IsRelevant S) (t : A) (ht : SetLike.Homogeneous 𝒜 t) :
  ∃ (n : ℕ+) (s s' : A) (i i' : ι),
    t^(n : ℕ) ∈ 𝒜 (i - i') ∧ s ∈ 𝒜 i ∧ s' ∈ 𝒜 i' ∧ s ∈ S.bar ∧ s' ∈ S.bar := ...
\end{lstlisting}
  The claim is true: let $t$ be a homogeneous element of degree $m$.  Since $S$ is relevant, there exists a positive natural number $n$ such that $t^n \in \iota\left[ \mybar{S} \right]$, 
  that is, there exists $i$ and $i'$ in $\iota$ such that $n\cdot m = i - i'$ and $i, i'$ are elements of $\deg\left( \mybar{S} \right)$.
  Hence, there exists $s$ and $s'$ in $\mybar{S}$ with degree $i$ and $i'$ respectively.
  
Since $T$ is finitely generated, we can choose an arbitrary finite set $T' \subseteq T$ which generates $T$ as a submonoid.
  Hence, we can define functions 
  $n : T' \to \NN_{>0}$, $s, s' : T' \to A$, and $i, i' : T' \to \iota$ such that for each $t \in T'$,
  $s(t)$ and $s'(t)$ are homogeneous elements of degree $i(t)$ and $i'(t)$ respectively, and $t^{n(t)}$ is homogeneous of degree $i(t) - i'(t)$.
  These are exactly the data we need to define a finite potion generator of $T$ over $S$.
  \begin{lstlisting}[extendedchars=true]
def finitePotionGen (S_rel : IsRelevant S) (T_fg : T.FG) : PotionGen S T :=
  let carrier := T_fg.choose
  let gen : Submonoid.closure carrier = T.toSubmonoid := T_fg.choose_spec
  let n : carrier → ℕ+ := fun t ↦ (finite_potionGen_exists_aux₂ S_rel t ...).choose
  let s : carrier → A :=
    fun t ↦ (finite_potionGen_exists_aux₂ S_rel t ...).choose_spec.choose
  let s' : carrier → A := fun t ↦
    (finite_potionGen_exists_aux₂ S_rel t ...).choose_spec.choose_spec.choose
  let i : carrier → ι := fun t ↦
    (finite_potionGen_exists_aux₂ S_rel t ...).choose_spec.choose_spec.choose_spec.choose
  let i' : carrier → ι := fun t ↦
    (finite_potionGen_exists_aux₂ S_rel t ...).choose_spec.choose_spec.choose_spec.choose_spec.choose
  ...
  { index := carrier
    elem := Subtype.val
    n := n
    s := s
    s' := s'
    i := i
    i' := i'
    elem_mem := ...
    ... }
  
lemma finitePotionGen_finite (S_rel : IsRelevant S) (T_fg : T.FG)  :
  Finite (finitePotionGen S_rel T_fg).index := T_fg.choose.finite_toSet
  \end{lstlisting}
\end{proof}

\begin{corollary}\label{potionToMul-open-immersion}
  If $S$ is relevant and $T$ is finitely generated as a submonoid of $A$, then $\spec \potion(ST) \to \spec \potion(S)$ is an open immersion.
\begin{lstlisting}
theorem IsOpenImmersion.of_isRelevant_FG (S_rel : IsRelevant S) (T_fg : T.FG) :
    IsOpenImmersion <| Spec.map <| CommRingCat.ofHom (S.potionToMul T) := ...
\end{lstlisting} 
\begin{flushright}
\cite[\lstinline|IsOpenImmersion.of_isRelevant_FG| file location: {\footnotesize\textsf{Potions/Localization.lean}}]{projconstruction}
\end{flushright}
\end{corollary}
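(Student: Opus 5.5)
The plan is to combine the two results immediately preceding the statement. Assume $S$ is relevant and $T$ is finitely generated as a submonoid of $A$. Theorem~\ref{exists-finite-potion-generator} then produces the potion generator $T' = \leanmath{finitePotionGen}(S, T)$ of $T$ over $S$, and its indexing set is finite by \leanmath{finitePotionGen\_finite}. So the hypothesis of Corollary~\ref{spec-potionToMul-is-open-immersion} is met, and we get that $\spec \potion(ST) \to \spec \potion(S)$ is an open immersion. Mathematically this already finishes the argument.

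To make this concrete, and to match the Lean statement phrased with \lstinline|Spec.map (CommRingCat.ofHom (S.potionToMul T))|, I would unfold Corollary~\ref{spec-potionToMul-is-open-immersion} as follows.

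First, Theorem~\ref{potion-loc-iso} gives the $\potion(S)$-algebra isomorphism $\phi : \potion(S)_{\sfs(T')} \cong \potion(ST)$. Because $\phi$ is a $\potion(S)$-algebra map, the composite of $\leanmath{potionToMul}$ with $\phi^{-1}$ is the localization map $\potion(S) \to \potion(S)_{\sfs(T')}$. Hence $\spec(\leanmath{potionToMul})$ equals $\spec(\phi)$ followed by $\spec$ of the localization map.

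Second, $\spec(\phi)$ is an isomorphism, and isomorphisms are open immersions. Open immersions are closed under composition, so it is enough to show that $\spec \potion(S)_{\sfs(T')} \to \spec \potion(S)$ is an open immersion.

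Third, $\sfs(T')$ is the submonoid closure of the finite set $\{f_i : i \in I\}$. Its localization therefore coincides with the localization at the single element $f = \prod_i f_i$: the submonoid generated by the $f_i$ and the powers of $f$ have the same saturation. This presents the map as $\spec \potion(S)_f \to \spec \potion(S)$, which is the inclusion of a basic open and is known to be an open immersion in mathlib.

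The main obstacle I expect is this third step: passing from localization at a finitely generated submonoid to localization away from one element. I would first look for an existing mathlib lemma stating that \lstinline|IsLocalization| at the closure of a finite set is \lstinline|IsLocalization.Away| at the product. Failing that, I would prove that the two submonoids have the same saturation, using that each $f_i$ divides $f$, and then transport the open-immersion instance along the resulting isomorphism. A secondary bookkeeping issue is rewriting the Lean \lstinline|Spec.map| of the composite as a composite of \lstinline|Spec.map|s (using contravariance), so that the instance for compositions applies.
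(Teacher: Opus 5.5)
Your proposal is correct and follows the paper's route. The paper also feeds the finite potion generator of Theorem~\ref{exists-finite-potion-generator} into Corollary~\ref{spec-potionToMul-is-open-immersion}, which factors $\spec \potion(ST) \to \spec \potion(S)$ through the isomorphism of Theorem~\ref{potion-loc-iso} followed by the localization at the finitely generated submonoid $\sfs(T')$; your reduction of that localization to localization away from the single element $\prod_i f_i$ is a valid way to fill in the paper's one-line justification that $\sfs(T')$ is generated by a finite set.
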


\section{Gluing Potions and defining Proj}\label{sec:glu-proj}
Now, we restrict our attention to relevant and finitely generated homogeneous submonoids; we call such submonoids \emph{good potion ingredients}.
\begin{lstlisting}[caption={Good potion ingredients}]
structure GoodPotionIngredient extends (HomogeneousSubmonoid 𝒜) where
  relevant : toHomogeneousSubmonoid.IsRelevant
  fg : toSubmonoid.FG
\end{lstlisting}
We immediately notice that product of good potion ingredients is another good potion ingredient, 
hence the set of good potion ingredients is a commutative semigroup. 
If $\phi : A \to B$ is a graded ring homomorphism, and $S$ is a good potion ingredient of $A$, 
then $\phi_{\star} S$ is a good potion ingredient of $B$.
For any three good potion ingredients $R, S, T$, we view $\potion(RST)$ as a $\potion(S)$-algebra via the ring homomorphism
$\potion(S) \to \potion(S(RT)) \to \potion(RST)$.

Let $R'$ be a potion generator of $R$ over $S$ and $T'$ be a potion generator of $T$ over $S$. 
\begin{construction}\label{construction-t-prime}
  We have a $\potion(S)$-algebra isomorphism $e: \potion(ST) \ox_R \potion(SR) \cong \potion(RST)$ by 
  composing the isomorphisms in ~\cref{potion-mixing-iso}. 
  \begin{lstlisting}[caption={Mixing good potion ingredients}]
def mixing : 
    (S * T).Potion ⊗[S.Potion] (S * R).Potion ≃ₐ[S.Potion] (R * S * T).Potion :=
  mixingAux₀ R' T' |>.trans <| mixingAux₁ R' T' |>.trans <| mixingAux₂ R' T' |>.trans <|
  mixingAux₃ R' T' |>.trans <| mixingAux₄ R S T
  \end{lstlisting}
We notice that for any $x \in \potion(ST)$, $e(x\ox 1)$ is equal to the image of $x$ under $\potion(ST) \to \potion(STR) \to \potion(RST)$ and
for any $x \in \potion(SR)$, $e(1\ox x)$ is equal to the image of $x$ under $\potion(SR)\to \potion(SRT) \to \potion(RST)$.
\begin{lstlisting}[extendedchars=true]
lemma mixing_left (x : (S * T).Potion) :
    mixing R' T' (x ⊗ₜ 1) = potionEquiv ... (potionToMul _ R.1 x) := ...

lemma mixing_right (x : (S * R).Potion) :
    mixing R' T' (1 ⊗ₜ x) = potionEquiv ... (potionToMul _ T.1 x) := ...
\end{lstlisting}
With the same construction, we have an isomorphism $\potion(RS) \ox_{\potion(R)} \potion(RT) \cong \potion(TRS) \cong \potion(RST)$.
Hence, we have an isomorphism\footnote{this isomorphism motivates the name ``good potion ingredients'', because the potions made from $R$, $S$ and $T$ mix well.} 
$t'_{RST}: \potion(ST)\ox_{\potion(S)}\potion(SR) \cong \potion(RS) \ox_{\potion(R)} \potion(RT)$.

\begin{lstlisting}[extendedchars=true]
def t'Aux₀ (R S T : GoodPotionIngredient 𝒜) :
    (S * T).Potion ⊗[S.Potion] (S * R).Potion ≃+* (R * S * T).Potion :=
  mixing (finitePotionGen S.relevant R.fg) (finitePotionGen S.relevant T.fg)

def t'Aux₁ (R S T : GoodPotionIngredient 𝒜) :
    (R * S).Potion ⊗[R.Potion] (R * T).Potion ≃+* (R * S * T).Potion :=
  (mixing (finitePotionGen R.relevant T.fg) (finitePotionGen R.relevant S.fg)).toRingEquiv.trans <|
    potionEquiv (by rw [mul_comm T, mul_assoc, mul_comm T, ← mul_assoc])

def t' (R S T : GoodPotionIngredient 𝒜) :
    ((S * T).Potion ⊗[S.Potion] (S * R).Potion) ≃+*
    ((R * S).Potion ⊗[R.Potion] (R * T).Potion) :=
  (t'Aux₀ R S T).trans (t'Aux₁ R S T).symm
\end{lstlisting} 
\begin{flushright}
\cite[file location: {\footnotesize\textsf{Potions/GoodPotionIngredient.lean}}]{projconstruction}
\end{flushright}
\end{construction}

\begin{corollary}\label{t-prime-cocyle-condtion-ring-version}
For any three good potion ingredients $R, S, T$, $t'_{TRS} \circ t'_{STR}\circ t'_{RST}$ is the identity isomorphism.
\begin{lstlisting}
lemma t'_cocycle (R S T : GoodPotionIngredient 𝒜) :
    (T.t' R S).trans ((S.t' T R).trans (R.t' S T))  = RingEquiv.refl _ := 
  ...
\end{lstlisting}
\end{corollary}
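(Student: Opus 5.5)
The plan is to reduce the cocycle identity to a statement about the common target ring $\potion(RST)$, where everything is determined by the two structure maps, and then conclude by the universal property of the tensor product.

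By Construction \ref{construction-t-prime}, each $t'_{XYZ}$ is the composite $(t'\mathrm{Aux}_1)^{-1}\circ t'\mathrm{Aux}_0$. Here both auxiliary maps are isomorphisms onto $\potion$ of the product of the three ingredients, with the factors permuted and then transported by $\leanmath{potionEquiv}$. Since $RST = STR = TRS$ as homogeneous submonoids, all three composites land in canonically the same ring. Unwinding the definitions, the triple composite $t'_{TRS}\circ t'_{STR}\circ t'_{RST}$ becomes a chain of the following form: from $\potion(ST)\ox_{\potion(S)}\potion(SR)$ to $\potion(RST)$, back to a tensor product, to $\potion(RST)$ again, and so on. The middle pairs cancel provided the source of $t'_{STR}$ matches the target of $t'_{RST}$ up to the relevant $\leanmath{potionEquiv}$.

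Concretely, the target of $t'_{RST}$ is $\potion(RS)\ox_{\potion(R)}\potion(RT)$. The source of $t'_{STR}$ is $\potion(TR)\ox_{\potion(T)}\potion(TS)$, which does not match. So I will first check the indexing convention: the Lean statement composes \lstinline|T.t' R S|, then \lstinline|S.t' T R|, then \lstinline|R.t' S T|, and the dot notation puts the receiver first. I will align the three maps so that consecutive targets and sources coincide literally, inserting commutativity isomorphisms of $\potion(-)$ via $\leanmath{potionEquiv}$ where needed.

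The cleaner route, which I would actually follow, is to check the identity on pure tensors $x\ox y$. Both sides are ring homomorphisms out of a tensor product, so by the universal property (Algebra.TensorProduct.ext) it suffices to check them on elements $x\ox 1$ and $1\ox y$. For these elements, \lstinline|mixing_left| and \lstinline|mixing_right| identify the image in $\potion(RST)$ with $\leanmath{potionEquiv}\circ\leanmath{potionToMul}$ applied to $x$ (resp. $y$). Each of these is simply the map induced by $\id_A$ on homogeneous fractions, $\frac{a}{b}\mapsto\frac{a}{b}$. Therefore the whole composite sends $x\ox1$, with $x=\frac ab$, to the element of the final tensor product whose image under the last mixing isomorphism equals $\frac ab\in\potion(RST)$. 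That is precisely the image of $x\ox 1$ itself, and injectivity of the mixing isomorphism then gives equality.

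The main obstacle is bookkeeping rather than mathematics. The key step is computing $(t'\mathrm{Aux}_1)^{-1}$ on the image of $x\ox1$. For this I will rewrite using \lstinline|RingEquiv.symm_apply_eq| so that only forward applications of mixing appear. Then \lstinline|mixing_left|/\lstinline|mixing_right| reduce every term to equalities of homogeneous fractions in $\potion$ of products of $R,S,T$ in various orders. These are closed by induction on the quotient and noting that \lstinline|potionEquiv| and \lstinline|potionToMul| act as the identity on numerator and denominator. I also expect some dependent-type friction from the different bracketings $(RS)T$ versus $R(ST)$. I would handle this by proving once a lemma that \lstinline|potionEquiv| composed along equal submonoids is transitive and is the identity on representatives.
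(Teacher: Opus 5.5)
Your reduction to $x\ox 1$ and $1\ox y$ is legitimate, but the chase breaks after the first map, at the sentence ``Therefore the whole composite sends $x\ox1$ \dots''. \texttt{mixing\_left} gives $t'\mathrm{Aux}_0(T,R,S)(x\ox 1)=\frac ab\in\potion(TRS)$. The next element, however, is $u=t'\mathrm{Aux}_1(T,R,S)^{-1}(\frac ab)\in\potion(TR)\ox_{\potion(T)}\potion(TS)$. You have no explicit description of $u$; it need not be of the form $x'\ox1$ or $1\ox y'$. The next map applied to it is $t'\mathrm{Aux}_0(S,T,R)$.

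Neither \texttt{mixing\_left}/\texttt{mixing\_right} nor \texttt{RingEquiv.symm\_apply\_eq} helps at this point. The latter only moves the outermost inverse $t'\mathrm{Aux}_1(R,S,T)^{-1}$ across the equation, not the two nested ones. What you need is the ``middle pairs cancel'' step from your second paragraph. But it must be an equality of maps on the whole tensor product, not a matching of sources and targets:
\[ t'\mathrm{Aux}_0(S,T,R)=\mathtt{potionEquiv}\circ t'\mathrm{Aux}_1(T,R,S), \qquad t'\mathrm{Aux}_0(R,S,T)=\mathtt{potionEquiv}\circ t'\mathrm{Aux}_1(S,T,R). \]
Separately, read the composite in diagrammatic order, as \texttt{.trans} does, with $t'_{TRS}$ applied first. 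Then consecutive sources and targets already agree literally, so no commutativity isomorphisms need to be inserted.

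These identities are what make the statement a corollary of the construction of $t'$, and in the formalization they hold by definition. The map $t'\mathrm{Aux}_1(X,Y,Z)$ is $t'\mathrm{Aux}_0(Z,X,Y)$ followed by \texttt{potionEquiv}, since both are \texttt{mixing} applied to the same finite potion generators of $Z$ and $Y$ over $X$. Substituting this for all three $t'\mathrm{Aux}_1$'s, the composite collapses to $t'\mathrm{Aux}_0(T,R,S)^{-1}\circ\varepsilon\circ t'\mathrm{Aux}_0(T,R,S)$. Here $\varepsilon$ is the loop of \texttt{potionEquiv}'s $\potion(TRS)\to\potion(STR)\to\potion(RST)\to\potion(TRS)$. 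This loop is the identity because \texttt{potionEquiv} is the identity on representatives, so no tensor-product extensionality is required.

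If you want an argument that does not rely on the two sides using literally the same generators, your ext idea is the right tool. It has to be applied to each identification separately: both sides are ring maps out of the tensor product that agree on $x\ox1$ and $1\ox y$ by \texttt{mixing\_left}/\texttt{mixing\_right}. After that you telescope. A single ext on the whole composite cannot get past the nested inverses.
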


\begin{corollary}\label{t-prime-fac-ring-version}
For any three good potion ingredients $R, S, T$, the following diagram commutes:
\[
\begin{tikzcd}
\potion(SR) \arrow[r, "1\ox\bullet"] \arrow[rd, "\cong", "\leanmath{potionEquiv}"'] & 
\potion(ST) \ox_{\potion(S)} \potion(SR) \arrow[r, "t'_{RST}"] & \potion(RS) \ox_{\potion(R)} \potion(RT) \\
& \potion(RS) \arrow[ru, "1\ox\bullet"]& 
\end{tikzcd}.
\]
\begin{lstlisting}
lemma t'_fac (R S T : GoodPotionIngredient 𝒜) :
    ((R.t' S T)).toRingHom.comp Algebra.TensorProduct.includeRight.toRingHom =
    Algebra.TensorProduct.includeLeftRingHom.comp
      (potionEquiv <| by rw [mul_comm]).toRingHom := 
  ...
\end{lstlisting}
\end{corollary}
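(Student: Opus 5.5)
Both ring homomorphisms in the triangle are injective maps into rings of the form $A\otimes B$, so the plan is to check the equality after composing with the isomorphism $t'\mathrm{Aux}_1 : \potion(RS)\otimes_{\potion(R)}\potion(RT)\cong\potion(RST)$. By construction $t'_{RST} = t'\mathrm{Aux}_0 \circ (t'\mathrm{Aux}_1)^{-1}$ with the inverse on the appropriate side, so $t'\mathrm{Aux}_1\circ t'_{RST} = t'\mathrm{Aux}_0$. It therefore suffices to prove
\[
t'\mathrm{Aux}_0(1\otimes x) = t'\mathrm{Aux}_1\bigl(\leanmath{potionEquiv}(x)\otimes 1\bigr)\quad\text{for all } x\in\potion(SR).
\]
Both sides will then be identified with explicit images of $x$ in $\potion(RST)$ under maps induced by $\id_A$.

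For the left side, $t'\mathrm{Aux}_0$ is the mixing isomorphism built from finite potion generators of $R$ and $T$ over $S$. The lemma \lstinline|mixing_right| gives that $t'\mathrm{Aux}_0(1\otimes x)$ is the image of $x$ under $\potion(SR)\to\potion(SRT)\xrightarrow{\leanmath{potionEquiv}}\potion(RST)$. For the right side, $t'\mathrm{Aux}_1$ is the mixing isomorphism for generators of $T$ and $S$ over $R$, followed by a $\leanmath{potionEquiv}$ that reorders the product. The lemma \lstinline|mixing_left| gives that $t'\mathrm{Aux}_1(y\otimes 1)$, for $y\in\potion(RS)$, is the image of $y$ under $\potion(RS)\to\potion(RST)\to\potion(TRS)\to\potion(RST)$, where the first map is $\leanmath{potionToMul}$ and the last two are $\leanmath{potionEquiv}$'s.

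It then remains to show that two composites of $\leanmath{potionToMul}$ and $\leanmath{potionEquiv}$ agree on $x$. All of these maps are instances of \lstinline|HomogeneousLocalization.map| along $\id_A$ (Lemma \ref{hloc-map}). I would reduce to representatives: induct on $x$ via \lstinline|Quotient.inductionOn'| to write $x=[(i,a,b)]$ with $b\in SR$. Each map sends this to the class $[(i,a,b)]$ in the target potion, so both sides equal $[(i,a,b)]\in\potion(RST)$, and the equality should follow by \lstinline|rfl| or a simp with the \lstinline|val| injectivity lemma.

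The main obstacle is the bookkeeping in the second step. The tensor product on the right is over $\potion(R)$ rather than $\potion(S)$, so the ring structures must be unfolded carefully, and the $\leanmath{potionEquiv}$ maps carry equality proofs between submonoids $RST$, $TRS$, $SRT$ that differ only up to reassociation and commutation. These transport proofs are the likely source of dependent-type friction. I would handle this by stating a single general lemma: \emph{$\leanmath{potionEquiv}$ and $\leanmath{potionToMul}$ act as the identity on numerator–denominator triples}. The diagram then commutes essentially by ext on $\potion(SR)$, independently of which potion generators were chosen.
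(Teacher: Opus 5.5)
Your proposal is correct and takes essentially the paper's approach: the corollary follows from the \texttt{mixing\_left}/\texttt{mixing\_right} descriptions of the mixing isomorphism. As you do, one cancels the isomorphism $t'\mathrm{Aux}_1$, rewrites both sides with those two lemmas, and notes that every \texttt{potionToMul}/\texttt{potionEquiv} composite is induced by $\id_A$, hence acts as the identity on numerator--denominator representatives. One correction: your opening claim that both maps in the triangle are injective is false in general (e.g.\ $\potion(S)\to\potion(ST)$ is a localization and can have a kernel), but you never use it, since only the injectivity of the isomorphism $t'\mathrm{Aux}_1$ matters.
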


Now we are ready to proceed with $\proj$ construction.
Let $\mathscr{F} = \left\{ S_i \middle| i \in \tau \right\}$ be a family of good potion ingredients of $A$ indexed by $\tau$. We aim to glue the family of schemes
together $\left\{ \spec \potion\left( S_i \right) \middle| i \in \tau \right\}$.
\begin{construction}{}{} \label{const:proj}
To proceed with the gluing, we need the following data:
\begin{itemize}
  \item an indexing type $J$;
  \item a scheme ${U_i}$ for each ${i} \in J$;
  \item a scheme ${V_{ij}}$ for each pair ${(i,j)\in J\times J}$ representing the intersection of ${U_i}$ and ${U_j}$;
  \item an open immersion of schemes $f_{ij} : V_{ij} \to U_i$ for each pair ${(i,j)\in J\times J}$ such that $f_{ii}$ is an isomorphism;
  \item a morphism of schemes $t'_{ijk} : V_{ij} \times_{U_i} V_{ik} \to V_{jk} \times_{U_j} V_{ji}$ such that for each triple ${(i,j,k)\in J\times J\times J}$,
  the composition
  \begin{equation}
    \label{t-prime-cocycle-condition}
    \begin{tikzcd}
      V_{ij} \times_{U_i} V_{ik} \arrow[r, "t'_{ijk}"]&
      V_{jk} \times_{U_j} V_{ji} \arrow[r, "t'_{jki}"]& 
      V_{ki} \times_{U_k} V_{kj} \arrow[r, "t'_{kij}"]& 
      V_{ij} \times_{U_i} V_{ik} 
    \end{tikzcd}
  \end{equation}
  is the identity map;
  \item a transition map $t_{ij} : V_{ij} \to V_{ji}$ for each pair ${(i,j)\in J\times J}$ such that $t_{ii}$ is the identity map, and
  for each triple ${(i,j,k)\in J\times J\times J}$, the following diagram commutes:
  \begin{equation}
    \label{t-fac-condition}
    \begin{tikzcd}
      V_{ij} \times_{U_i} V_{ik} \arrow[r, "p_1"] \arrow[rd, "t'_{ijk}"'] & V_{ij} \arrow[r, "t_{ij}"] & V_{ji}\\
      & V_{jk} \times_{U_j} V_{ji} \arrow[ru, "p_2"'] &
    \end{tikzcd}.
  \end{equation}
  
\end{itemize}
In our case, we will take $J = \tau$, $U_i = \spec\left(\potion(S_i)\right)$, $V_{ij} = \spec\left(\potion(S_i S_j)\right)$, and
$f_{ij} = \spec \left(\potion(S_i) \to \potion\left( S_i S_j \right)\right)$, $t_{ij} = \spec\left( \potion\left( S_i S_j \right) \to \potion\left( S_j S_i \right) \right)$
and $t'_{ijk}$ to be the composition
\[
\begin{tikzcd}
\spec\left( \potion\left( S_i S_j \right) \right) \times_{\spec\left( S_i \right)} \spec\left( \potion\left( S_j S_k \right) \right) \arrow[r, "\cong"] &
\spec\left( \potion\left( S_i S_j \right) \ox_{\potion\left( S_i \right)} \potion\left( S_j S_k \right)\right) \arrow[d, "\spec\left( t'_{S_i S_j S_k} \right)", "\cong"']& \\
\spec\left( \potion\left( S_j S_k \right) \right) \times_{\spec\left( S_j \right)} \spec\left( \potion\left( S_j S_i \right) \right) &
\spec\left( \potion\left( S_j S_k \right) \ox_{\potion\left( S_j \right)} \potion\left( S_j S_i \right)\right) \arrow[l, "\cong"']
\end{tikzcd},
\]
where $t'_{S_iS_jS_k}$ is defined in ~\cref{construction-t-prime}.

Since for any homogeneous submonoid $S$, $SS = S$, we see that $f_{ii}$ is the identity morphism. 
We have already shown that $f_{ij}$ are open immersions in~\cref{spec-potionToMul-is-open-immersion}.
Since $t_{ii}$ is defined as $\spec\left( \potion\left(S_i S_i\right) \to \potion\left( S_i S_i \right) \right)$, it is certainly the identity morphism.
Now, we verify \cref{t-prime-cocycle-condition} and \cref{t-fac-condition}:
modulo the isomorphisms of the form $\spec A \times_{\spec B} \spec C \cong \spec \left( A\ox_B C \right)$, 
\cref{t-prime-cocycle-condition} and \cref{t-fac-condition} are $\spec$ applied to~\cref{t-prime-cocyle-condtion-ring-version} and~\cref{t-prime-fac-ring-version} respectively.

\begin{lstlisting}[extendedchars=true]
def glueData {τ : Type u} (ℱ : τ → GoodPotionIngredient 𝒜) : Scheme.GlueData where
  J := τ
  U i := Spec <| CommRingCat.of <| (ℱ i).Potion
  V pair := Spec <| CommRingCat.of <| (ℱ pair.1 * ℱ pair.2).Potion
  f i j := Spec.map <| CommRingCat.ofHom <| (ℱ i).potionToMul (ℱ j).toHomogeneousSubmonoid
  f_id i := ...
  f_open i j := isOpenImmersion (ℱ i) (ℱ j)
  t i j := Spec.map <| CommRingCat.ofHom <|  potionEquiv (mul_comm ..) |>.toRingHom
  t_id i := by
    erw [← Scheme.Spec.map_id]
    simp
  t' i j k :=
      (AlgebraicGeometry.pullbackSpecIso _ _ _).hom ≫
      Spec.map (CommRingCat.ofHom <| t' (ℱ i) (ℱ j) (ℱ k)) ≫
      (AlgebraicGeometry.pullbackSpecIso _ _ _).inv
  t_fac i j k := by
    ... -- after some simplifications
    exact t'_fac (ℱ i) (ℱ j) (ℱ k)
  cocycle i j k := by
    ... -- after some simplifications
    simpa using congr($(t'_cocycle (ℱ i) (ℱ j) (ℱ k)) x)
\end{lstlisting}
Hence, for any collection of good potion ingredients $\mathscr{F} = \left\{ S_i \middle| i \in \tau \right\}$, we have a scheme 
$\proj \mathscr{F}$.
\begin{lstlisting}
def Proj {τ : Type u} (ℱ : τ → GoodPotionIngredient 𝒜) : Scheme := 
  glueData ℱ |>.glued
\end{lstlisting}
\begin{flushright}
\cite[\lstinline|Proj| file location: {\footnotesize\textsf{Proj/Construction.lean}}]{projconstruction}
\end{flushright}
\end{construction}

\begin{landscape}
\thispagestyle{empty}
\begin{table}[h]
\refstepcounter{lstlisting} 
\caption{and \textbf{Listing~\thelstlisting} Mixing isomorphisms}
\addcontentsline{lol}{lstlisting}{\protect\numberline{\thelstlisting}Mixing isomorphisms}
\begin{tabular}{m{0.34\linewidth}p{0.45\linewidth}c}
  mathematical formulation & \Lean~formalsation  \\\hline
  $\potion(ST) \ox_{\potion(S)} \potion(SR) \cong A_{\sfs(T')} \ox_{\potion(S)} A_{\sfs(R')}$ &
  \vspace{-1em}
  \begin{lstlisting}
def mixingAux₀ :
    (S * T).Potion ⊗[S.Potion] (S * R).Potion ≃ₐ[S.Potion]
    (Localization T'.genSubmonoid) ⊗[S.Potion] 
    (Localization R'.genSubmonoid) :=
  Algebra.TensorProduct.congr
    (S.localizationAlgEquivPotion T').symm
    (S.localizationAlgEquivPotion R').symm
  \end{lstlisting}  \\
$A_{\sfs(T')} \ox_{\potion(S)} A_{\sfs(R')} \cong A_{\sfs(T')\sfs(R')}$ &
\vspace{-1em} 
   \begin{lstlisting}
def mixingAux₁ {R S T : GoodPotionIngredient 𝒜}
    (R' : PotionGen S.1 R.1) (T' : PotionGen S.1 T.1) :
    (Localization T'.genSubmonoid) ⊗[S.Potion] 
    (Localization R'.genSubmonoid) ≃ₐ[S.Potion]
    Localization (T'.genSubmonoid * R'.genSubmonoid) :=
  Localization.mulEquivTensor _ _ |>.symm
  \end{lstlisting} \\
$A_{\sfs(T')\sfs(R')} \cong A_{\sfs(T'\oplus R')}$&
\vspace{-2em}
  \begin{lstlisting}
def mixingAux₂ {R S T : GoodPotionIngredient 𝒜}
    (R' : PotionGen S.1 R.1) (T' : PotionGen S.1 T.1) :
    Localization (T'.genSubmonoid * R'.genSubmonoid) ≃ₐ[S.Potion]
    Localization (T'.disjUnion R').genSubmonoid :=
  Localization.equivEq 
    (PotionGen.disjUnion_genSubmonoid T' R').symm
  \end{lstlisting} \\
  $A_{\sfs(T'\oplus R')} \cong \potion(S(TR))$ &
\vspace{-2em}
  \begin{lstlisting}
def mixingAux₃ {R S T : GoodPotionIngredient 𝒜}
    (R' : PotionGen S.1 R.1) (T' : PotionGen S.1 T.1) :
    Localization (T'.disjUnion R').genSubmonoid ≃ₐ[S.Potion]
    (S * (T * R)).Potion :=
  S.localizationAlgEquivPotion (T'.disjUnion R')
  \end{lstlisting}  \\
$\potion(S(TR)) \cong \potion(RST)$ &
\vspace{-2em}
  \begin{lstlisting}
def mixingAux₄ (R S T : GoodPotionIngredient 𝒜) :
    (S * (T * R)).Potion ≃ₐ[S.Potion] (R * S * T).Potion :=
  AlgEquiv.ofRingEquiv (f := potionEquiv ...) ...
  \end{lstlisting}  \\
\end{tabular}
\label{potion-mixing-iso}
\end{table}
\restoregeometry
\end{landscape}

\begin{definition} \label{defi:full}
The (full) multi-graded Brenner--Schröer Proj scheme of $A$ is the scheme obtained by gluing the potion schemes associated with all good potion ingredients of $A$.
\begin{flushright}
\cite[\lstinline|Proj| file location: {\footnotesize\textsf{fullProj/Construction.lean}}]{projconstruction}
\end{flushright}
\end{definition} 

\begin{table}[h]
\refstepcounter{lstlisting} 
  \begin{lstlisting}
def fullProj : Scheme :=
  Proj (𝒜 := 𝒜) (fun P : GoodPotionIngredient 𝒜 => P)
  \end{lstlisting} 
\end{table}

\section{Enlarging Families of Good Potion Ingredients} \label{sec:enlarg-fam-potion}
In this section, we will show that given any family of good potion ingredients $\mathscr{F}$, we can replace it with a larger family $\mathscr{F'}$
while having $\proj \mathscr{F}$ and $\proj \mathscr{F'}$ being isomorphic as schemes. 
By enlarging the family of good potion ingredients, we will have more open sets of the form $\spec \potion(S)$ at our disposal.

  Suppose $\mathscr{F}$ and $\mathscr{F'}$ are two families of good potion ingredients indexed by $\tau$ and $\tau'$ respectively.
  We use the notation $\leanmath{le} : \mathscr{F} \le \mathscr{F'}$ to mean an injective function $\leanmath{le} : \tau \to \tau'$ such that $\mathscr{F}'\circ \leanmath{le} = \mathscr{F}$. We shall refer to such a map $\leanmath{le}$ as an {embedding} of families of good potion ingredients.
  
  \begin{lstlisting}[caption={Comparing two families of good potion ingredients}]
structure LE_ where
  (le : τ ↪ τ')
  (comp : ℱ' ∘ le = ℱ)
\end{lstlisting}

\begin{remark}{}{}
In~\Lean, the type class \lstinline|LE| is a proposition, our type \lstinline|LE_| contains data, 
hence writing $\mathscr{F} \le \mathscr{F'}$ without specifying the underlying function $\leanmath{le}$ is not accurate.
\end{remark}

\begin{construction}\label{projHomOfLE}
  Suppose $\leanmath{le}: \mathscr{F} \le \mathscr{F}'$ and $i \in \tau$, we have an isomorphism of rings 
  $\potion\left( \mathscr{F}'_{\leanmath{le}(i)} \right) \cong \potion\left( \mathscr{F}_i \right)$.
\begin{lstlisting}
def LE_.potionEquivMap (le : LE_ ℱ ℱ') (i : τ) : (ℱ' (le i)).Potion ≃+* (ℱ i).Potion :=
  potionEquiv (by simp)
\end{lstlisting}
Hence, we can glue a family of scheme morphisms
\[
\begin{tikzpicture}[remember picture,background rectangle/.style={fill=yellow!10}, show background rectangle]
\node {
\begin{tikzcd}[remember picture]
  \spec \potion\left( \mathscr{F}_i \right) \arrow[r, ""] & 
  \spec \potion\left( \mathscr{F}'_{\leanmath{le}(i)} \right) \arrow[r, "\iota"] &
  \proj \mathscr{F}' 
\end{tikzcd}
};
\end{tikzpicture}
\]
to form a morphism of schemes $\proj\leanmath{le}: \proj \scrF \to \proj \scrF'$,
\begin{lstlisting}
def projHomOfLE (le : LE_ ℱ ℱ') : Proj ℱ ⟶ Proj ℱ' :=
  Multicoequalizer.desc _ _
    (fun i ↦ (*@\sethlcolor{yellow!10}\hl{Spec.map (CommRingCat.ofHom <| le.potionEquivMap i) $\gg$ (glueData $\scrF$').$\iota$ (le i)}@*)) 
    ...
\end{lstlisting}
One can check that at every point $x \in \proj \scrF$, the morphism on stalks induced by $\proj \leanmath{le}$ is an isomorphism.
\begin{flushright}
  \cite[\lstinline|projHomOfLE| file location: {\footnotesize\textsf{Proj/OfLE.lean}}]{projconstruction}
\end{flushright}
\end{construction}

\begin{lemma}\label{projHomOfLE_base_injective}
  Topologically, $\proj \leanmath{le}$ is injective.
\begin{lstlisting}
lemma projHomOfLE_base_injective (le : LE_ ℱ ℱ') :
    Function.Injective (projHomOfLE le).base := ...
\end{lstlisting}
  \begin{flushright}
\cite[\lstinline|projHomOfLE_base_injective| file location: {\footnotesize\textsf{Proj/OfLE.lean}}]{projconstruction}
  \end{flushright}
\end{lemma}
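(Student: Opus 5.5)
We must show the underlying continuous map of $\proj\leanmath{le}$ is injective. Every point of a glued scheme lies in the image of some chart $\iota_i : \spec\potion(\scrF_i) \to \proj\scrF$. So take $x, y \in \proj\scrF$ with $x = \iota_i(p)$ and $y = \iota_j(q)$, where $p \in \spec\potion(\scrF_i)$ and $q \in \spec\potion(\scrF_j)$. Assume $\proj\leanmath{le}(x) = \proj\leanmath{le}(y)$. By the defining property of $\proj\leanmath{le}$ (the factorisation through $\leanmath{Multicoequalizer.desc}$), this says
\[
\iota'_{\leanmath{le}(i)}(p') = \iota'_{\leanmath{le}(j)}(q'),
\]
where $p'$ and $q'$ are the images of $p$ and $q$ under $\spec$ of $\leanmath{potionEquivMap}$. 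These maps are homeomorphisms, since they are $\spec$ of the identity-induced isomorphisms $\potion(\scrF'_{\leanmath{le}(i)}) \cong \potion(\scrF_i)$.

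Next we use the characterisation of equality of points in a glued scheme (mathlib's \lstinline|GlueData.ι_eq_iff|). Two points $\iota'_a(u)$ and $\iota'_b(v)$ agree iff there is $w \in V'_{ab} = \spec\potion(\scrF'_a\scrF'_b)$ with $f'_{ab}(w) = u$ and $f'_{ba}(t'_{ab}(w)) = v$, where the case $a=b$ is handled by $f_{aa}$ being an isomorphism. Apply this with $a = \leanmath{le}(i)$ and $b = \leanmath{le}(j)$ to get such a $w$.

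Since $\scrF'\circ\leanmath{le} = \scrF$, the rings $\potion(\scrF'_a\scrF'_b)$ and $\potion(\scrF_i\scrF_j)$ are identified by $\leanmath{potionEquiv}$. This identification is compatible with $\leanmath{potionToMul}$ and with the transition isomorphisms $\potion(ST)\cong\potion(TS)$, because all these maps are induced by $\id_A$ on fractions. Transport $w$ to $\tilde w \in V_{ij}$. The compatibilities then give $f_{ij}(\tilde w) = p$ and $f_{ji}(t_{ij}(\tilde w)) = q$. Applying \lstinline|ι_eq_iff| in $\proj\scrF$ yields $\iota_i(p) = \iota_j(q)$, that is, $x = y$.

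The main obstacle is bookkeeping rather than mathematics: the rings $\potion(\scrF'_{\leanmath{le}(i)})$ and $\potion(\scrF_i)$ are equal only up to a propositional equality of submonoids. So one must check on the level of $\spec$ that the squares formed by $\leanmath{potionEquiv}$, $\leanmath{potionToMul}$ and the transition maps commute. I would prove these at the ring level by induction on fractions (\lstinline|Quotient.inductionOn'|), where every map is literally $\frac{a}{b}\mapsto\frac ab$, and then apply $\spec$ functoriality to transport points.
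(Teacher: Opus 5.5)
Your proposal is correct and follows essentially the same route as the paper. You pass from the charts of $\proj\scrF$ to those of $\proj\scrF'$ via $\spec$ of $\leanmath{potionEquivMap}$, use the glued-scheme characterisation of equal points to get a witness in $\spec\potion(\scrF'_{\leanmath{le}(i)}\scrF'_{\leanmath{le}(j)})$, and transport it back to $V_{ij}$ to conclude $x=y$. The only difference is cosmetic: the paper treats the case $\leanmath{le}(j)=\leanmath{le}(j')$, $X=X'$ separately, whereas you absorb it by using that $f_{aa}$ is an isomorphism and $t_{aa}$ is the identity.
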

\begin{proof}{}
  Suppose $x \in \spec \potion\left( \scrF_j \right)$ and $x' \in \spec \potion\left( \scrF_{j'} \right)$ are two points such that 
  $\proj \leanmath{le}(x) = \proj \leanmath{le}(x')$.
  Let us denote $X \in \spec \potion\left( \scrF'_{\leanmath{le}(j)} \right)$ and 
  $X' \in \spec \potion\left( \scrF'_{\leanmath{le}(j')} \right)$ to be the points corresponding to $x$ and $x'$ respectively.
  From $\proj \leanmath{le}(x) = \proj \leanmath{le}(x')$, we have the image of $X$ and $X'$ in $\proj \scrF'$ are the same.
  Therefore, either $\leanmath{le}(j) =\leanmath{le}(j')$ and $X = X'$ or there exists some point 
  $y \in \potion\left( \scrF'_{\leanmath{le}(j)} \scrF'_{\leanmath{le}(j')}\right)$ and $X = X' = y$ as points in $\proj \scrF'$.
  In either cases, we have $x = x'$.
\end{proof}

\begin{lemma}{}{}
  Let $U$ be an open set in $\proj \scrF$ and $i \in \tau$, $\proj \leanmath{le}\left( U \cap \potion\left( \scrF_i \right) \right)$ is an open set.
\begin{lstlisting}[caption={$\proj\leanmath{le}$ is an open map}]
abbrev interPotion (i : τ) : Opens (Proj ℱ) :=
  ((glueData ℱ).ι i).opensRange ⊓ U

lemma projHomOfLE_base_isOpenMap_aux (le : LE_ ℱ ℱ') (U : Opens (Proj ℱ)) (i : τ) :
    IsOpen <| (projHomOfLE le).base '' (interPotion U i) := by
\end{lstlisting}
\begin{flushright}
\cite[\lstinline|projHomOfLE_base_isOpenMap_aux| file location: {\footnotesize\textsf{Proj/OfLE.lean}}]{projconstruction}
\end{flushright}
\end{lemma}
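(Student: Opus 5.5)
The idea is to compute the image explicitly as an open subset of one affine chart of $\proj \scrF'$. Write $\iota_i : \spec \potion(\scrF_i) \to \proj \scrF$ for the open immersion from the gluing. Set $V := \iota_i^{-1}(U)$, an open subset of $\spec \potion(\scrF_i)$, so that $U \cap \potion(\scrF_i) = \iota_i(V)$.

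By Construction \ref{projHomOfLE}, the composite $\proj\leanmath{le} \circ \iota_i$ equals
\[
\spec \potion(\scrF_i) \xrightarrow{\ \spec(\leanmath{potionEquivMap}\ i)\ } \spec \potion\left(\scrF'_{\leanmath{le}(i)}\right) \xrightarrow{\ \iota'_{\leanmath{le}(i)}\ } \proj \scrF'.
\]
This is the defining property of the multicoequalizer descent, namely the \lstinline|Multicoequalizer.desc| factorization lemma (\lstinline|ι_desc|). It follows that
\[
\proj\leanmath{le}\bigl(\iota_i(V)\bigr) = \iota'_{\leanmath{le}(i)}\Bigl(\spec(\leanmath{potionEquivMap}\ i)(V)\Bigr).
\]

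The first map in this composite is $\spec$ of a ring isomorphism. It is therefore an isomorphism of schemes, in particular a homeomorphism, so it sends the open set $V$ to an open set. The second map $\iota'_{\leanmath{le}(i)}$ is the canonical open immersion of a chart into the glued scheme (\lstinline|GlueData.ι_isOpenImmersion|), so it is an open embedding and hence an open map. Composing the two, the image is open.

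The only delicate step is the first equality: we need to rewrite the image under $\proj\leanmath{le}$ of a set contained in the $i$-th chart as an image under the composite. Concretely, the plan is to prove the set equality $(\proj\leanmath{le})(\iota_i(V)) = (\proj\leanmath{le}\circ\iota_i)(V)$ by \lstinline|Set.image_comp|, after identifying $\iota_i(V)$ with the set \lstinline|interPotion U i|. That identification is immediate, since the image of the preimage of $U$ under $\iota_i$ is exactly $U$ intersected with the range of $\iota_i$. The remaining obstacle is bookkeeping with the coercions between scheme morphisms and their underlying continuous maps (\lstinline|.base|) when applying \lstinline|ι_desc|; the mathematics itself is routine.
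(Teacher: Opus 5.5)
Your proposal is correct and follows essentially the same route as the paper: the paper also identifies $\proj\leanmath{le}\left(U \cap \potion(\scrF_i)\right)$ with the range of $U \cap \spec\potion(\scrF_i) \to \spec\potion(\scrF'_{\leanmath{le}(i)}) \to \proj\scrF'$, using the factorization of $\proj\leanmath{le}$ through the charts, and it concludes that this range is open because the map is a composite of open maps. Your version only makes explicit two points the paper leaves implicit: that $\spec$ of the ring isomorphism $\leanmath{potionEquivMap}$ is a homeomorphism, and that $\iota_i(\iota_i^{-1}(U)) = U \cap \operatorname{range}(\iota_i)$.
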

\begin{proof}{}
  We consider the following diagram:
\[
\begin{tikzcd}[column sep = huge]
  U \cap \spec \potion(\scrF_i) \arrow[r] \arrow[d, equal]& \spec \potion(\scrF_i) \arrow[r, "\iota"] & \proj F \arrow[d, "\proj\leanmath{le}"]\\
  U \cap \spec \potion(\scrF_i) \arrow[r, "x"]& \spec \potion(\scrF'_{\leanmath{le}(i)}) \arrow[r, "\iota"] & \proj F' 
\end{tikzcd},
\]
where $x$ is defined as $\spec\left( \potion\left( \scrF'_{\leanmath{le}(i)} \right) \to \potion\left( \scrF_{i} \right) \right)$.
$\proj \leanmath{le}\left( U \cap \potion\left( \scrF_i \right) \right)$ is equal to the range of the second row; since the second row is a composition of open map, the range is open.
\end{proof}

\begin{corollary}{}{}
  The morphism $\proj \leanmath{le}: \proj \scrF \to \proj \scrF'$ is an open immersion.
\begin{lstlisting}[caption={$\proj\leanmath{le}$ is an open immersion}]
instance projHomOfLE_isOpenImmersion (le : LE_ ℱ ℱ') : IsOpenImmersion (projHomOfLE le) := ...
\end{lstlisting}
\begin{flushright}
\cite[\lstinline|projHomOfLE_isOpenImmersion| file location: {\footnotesize\textsf{Proj/OfLE.lean}}]{projconstruction}
\end{flushright}
\end{corollary}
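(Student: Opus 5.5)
The plan is to use the standard criterion that a morphism of schemes is an open immersion as soon as it is an open embedding on underlying topological spaces and every induced map on stalks is an isomorphism. By Construction \ref{projHomOfLE}, the stalk maps of $\proj\leanmath{le}$ are isomorphisms. Everything therefore reduces to topology: we must show that the base map of $\proj\leanmath{le}$ is injective, continuous and open. Continuity is automatic. Injectivity is exactly Lemma \ref{projHomOfLE_base_injective}.

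For openness, let $U \subseteq \proj\scrF$ be open. By the gluing construction (Construction \ref{const:proj}), $\proj\scrF$ is covered by the images of the open immersions $\iota_i : \spec\potion(\scrF_i) \to \proj\scrF$ for $i\in\tau$. Hence
\[
U = \bigcup_{i\in\tau} \bigl(U \cap \iota_i(\spec\potion(\scrF_i))\bigr),
\]
which is the union of the sets $\leanmath{interPotion}\,U\,i$. Taking images commutes with unions, so
\[
\proj\leanmath{le}(U) = \bigcup_{i} \proj\leanmath{le}\bigl(\leanmath{interPotion}\,U\,i\bigr).
\]
Each term on the right is open by the preceding lemma, so $\proj\leanmath{le}(U)$ is open and the base map is an open map.

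An injective, continuous, open map is an open embedding. Combined with the stalk isomorphisms, the criterion then gives that $\proj\leanmath{le}$ is an open immersion.

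The main obstacle I expect is the stalk statement. In Construction \ref{projHomOfLE} it is asserted rather than proved. To establish it, I would work locally on $\spec\potion(\scrF_i)$. There $\proj\leanmath{le}\circ\iota_i$ equals $\iota_{\leanmath{le}(i)}\circ\spec(\leanmath{potionEquivMap})$. The map $\spec(\leanmath{potionEquivMap})$ is an isomorphism of schemes, and both $\iota_i$ and $\iota_{\leanmath{le}(i)}$ are open immersions, so they induce isomorphisms on stalks. Since stalk maps are functorial, the stalk map of $\proj\leanmath{le}$ at any point $\iota_i(x)$ is then an isomorphism. In Lean, the cleanest route is probably to avoid the stalk criterion altogether and use \lstinline|IsOpenImmersion.of_isLocalization|-style lemmas, or a lemma of the form ``a morphism whose restrictions to an open cover are open immersions and which is injective is an open immersion''. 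The cover by the $\iota_i$ makes each restriction an open immersion directly, and injectivity glues these into a global open immersion.
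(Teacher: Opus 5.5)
Your proposal is correct and follows the paper's proof essentially verbatim: the stalk isomorphisms reduce the claim to $\proj\leanmath{le}$ being a topological open embedding, injectivity is Lemma~\ref{projHomOfLE_base_injective}, and openness comes from writing $U=\bigcup_i \bigl(U\cap\spec\potion(\scrF_i)\bigr)$ and applying the preceding lemma to each piece. The paper only asserts the stalk isomorphisms in Construction~\ref{projHomOfLE}; your justification of them via $\proj\leanmath{le}\circ\iota_i=\iota_{\leanmath{le}(i)}\circ\spec(\leanmath{potionEquivMap})$ is sound and fills that gap.
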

\begin{proof}{}
  Since the induced stalk map at each point is an isomorphism, we only need to show that the morphism is a topological embedding.
  By~\cref{projHomOfLE_base_injective}, the morphism is injective; hence, we only need to show that the morphism is an open map.
  Let $U$ be an open set in $\proj \scrF$, then $U = \bigcup_i \left( U \cap \spec \potion(\scrF_i) \right)$.
  Thus, the image of $U$ is equal to the union of the images of $U \cap \spec \potion(\scrF_i)$ under $\proj \leanmath{le}$; 
  being a union of open sets, it is open as well.
\end{proof}

\begin{definition}\label{idealify}
  For any family $\scrF = \left\{ S_i \middle| i \in \tau \right\}$ of good potion ingredients, 
  we denote $\scrF^{+}$ to be the family indexed by the disjoint union of 
    $\tau$ and $\tau \times \left\{ S \middle| S~\text{is a good potion ingredient} \right\}$
  \[
  \begin{aligned}
    \scrF^{+} :  \tau \oplus \tau \times \left\{ S \middle| S~\text{is a good potion ingredient} \right\} &\to \left\{ S \middle| S~\text{is a good potion ingredient} \right\} \\
      i &\mapsto \scrF_i \\
      (i, S) &\mapsto \scrF_i S
  \end{aligned}.
  \]
  We define $\leanmath{le} : \scrF \le \scrF^{+}$ by the left inclusion.
  \begin{lstlisting}[caption={$\scrF^{+}$ and $\scrF \le \scrF^+$}, extendedchars=true]
abbrev idealify (ℱ : τ → GoodPotionIngredient 𝒜) :
    τ ⊕ (τ × GoodPotionIngredient 𝒜) → GoodPotionIngredient 𝒜 :=
  Sum.rec ℱ (fun p ↦ ℱ p.1 * p.2)

abbrev le_idealify (ℱ : τ → GoodPotionIngredient 𝒜) : LE_ ℱ (idealify ℱ) where
  le :=
  { toFun := Sum.inl
    inj' := Sum.inl_injective }
  comp := rfl
  \end{lstlisting}

\begin{flushright}
  \cite[\lstinline|idealify| file location: {\footnotesize\textsf{Proj/ofLE.lean}}]{projconstruction}
\end{flushright}
\end{definition}

\begin{theorem}\label{enlarge-family}
  The morphism $\proj \leanmath{le}: \proj \scrF \to \proj \scrF^{+}$ is an isomorphism.
  \begin{lstlisting}
instance proj_iso_proj_idealify : IsIso (projHomOfLE (le_idealify ℱ)) := 
  apply (config := { allowSynthFailures := true }) AlgebraicGeometry.IsOpenImmersion.to_iso
  rw [TopCat.epi_iff_surjective]
  ...
\end{lstlisting}
\begin{flushright}
\cite[\lstinline|proj_iso_proj_idealify| file location: {\footnotesize\textsf{Proj/OfLE.lean}}]{projconstruction}
\end{flushright}
\end{theorem}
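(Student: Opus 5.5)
Since the preceding corollary shows that $\proj\leanmath{le}: \proj \scrF \to \proj \scrF^{+}$ is an open immersion, and an open immersion whose underlying continuous map is surjective is an isomorphism, it suffices to prove that $\proj\leanmath{le}$ is surjective on points. This is the route suggested by the Lean skeleton: apply \lstinline|IsOpenImmersion.to_iso| and reduce epimorphicity in \lstinline|TopCat| to surjectivity.

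Surjectivity reduces to a statement about the charts. The glued scheme $\proj \scrF^{+}$ is covered by the images of $\spec \potion(\scrF^{+}_k)$ for $k \in \tau \oplus (\tau \times \{\text{good potion ingredients}\})$, so we fix such a $k$ and a point $y$ in its chart.

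If $k = \mathrm{inl}(i)$, then $y$ is the image of a point of $\spec\potion(\scrF_i)$. This holds by the definition of $\proj\leanmath{le}$ as the map glued from $\spec\potion(\scrF_i) \cong \spec\potion(\scrF^{+}_{\mathrm{inl}(i)}) \xrightarrow{\iota} \proj\scrF^{+}$.

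If $k = \mathrm{inr}(i,S)$, so that $\scrF^{+}_k = \scrF_i S$, we use the gluing structure of $\proj\scrF^{+}$. The point $y$ lies in $\spec \potion(\scrF_i S)$, and we compare it with the index $\mathrm{inl}(i)$. The overlap $V_{k,\mathrm{inl}(i)}$ is $\spec\potion((\scrF_i S)\scrF_i)$, and since $\scrF_i \scrF_i = \scrF_i$ this is $\spec\potion(\scrF_i S)$ via \leanmath{potionEquiv}. Hence the open immersion $f_{k,\mathrm{inl}(i)}: V_{k,\mathrm{inl}(i)} \to U_k$ is an isomorphism, i.e.\ surjective. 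So $y = \iota_k(f_{k,\mathrm{inl}(i)}(z))$ for some $z \in V_{k,\mathrm{inl}(i)}$. By the gluing relation, $\iota_k \circ f_{k,\mathrm{inl}(i)} = \iota_{\mathrm{inl}(i)} \circ f_{\mathrm{inl}(i),k} \circ t_{k,\mathrm{inl}(i)}$. Therefore $y$ lies in the image of $\iota_{\mathrm{inl}(i)}$, and the first case applies.

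The main obstacle is expected to be the bookkeeping in this second case: checking that $f_{k,\mathrm{inl}(i)}$ is surjective on underlying spaces. Concretely, one must show that \leanmath{potionToMul} from $\potion(\scrF_i S)$ to $\potion((\scrF_i S)\scrF_i)$ is a ring isomorphism, because it agrees with \leanmath{potionEquiv} for the equality $(\scrF_i S)\scrF_i = \scrF_i S$. One must then unfold the multicoequalizer description of the glued space to get the identity $\iota_k \circ f_{k,j} = \iota_j \circ f_{j,k} \circ t_{k,j}$ at the level of points. The first step is to prove \leanmath{potionToMul} is bijective using the \lstinline|NumDenSameDeg| induction, since the map is the identity on fractions. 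The second step is to use the \lstinline|GlueData| lemma \lstinline|glue_condition| together with the surjectivity of \lstinline|Spec.map| of an isomorphism.
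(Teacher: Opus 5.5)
Your proof is correct and follows the paper's structure: reduce to topological surjectivity using the open-immersion corollary, split on whether the chart index is $\mathrm{inl}(i)$ or $\mathrm{inr}(i,S)$, and in the second case exhibit the point as coming from the chart $\spec\potion(\scrF_i)$. The only difference is the justification in the second case: the paper invokes the localization isomorphism $\potion(\scrF_i)_{\mathsf{s}(T')}\cong\potion(\scrF_i T)$ of Theorem~\ref{potion-loc-iso} and takes the pullback of $x$ along $\potion(\scrF_i)\to\potion(\scrF_i T)$, whereas you note that $(\scrF_i S)\scrF_i=\scrF_i S$ makes $f_{k,\mathrm{inl}(i)}$ an isomorphism and then apply the glue condition, which makes explicit the identification in $\proj\scrF^{+}$ that the paper leaves implicit and avoids potion generators entirely.
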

\begin{proof}{}
  Since the morphism is an open immersion, we only need to show that topologically, it is surjective.
  Suppose $x \in \proj \scrF^{+}$, then either $x$ is in $\spec\potion(\scrF_i)$ for some $i \in \tau$ or 
  $x$ is in $\spec\potion(\scrF_i T)$ for some $i \in \tau$ and some good potion ingredient $T$.
  In the first case, the preimage of $x$ under $\proj \leanmath{le}$ is the point $x$ itself.

  In the second case, let $T'$ be a potion generator of $T$ over $\scrF_i$.  
   By~\cref{potion-loc-iso}, we have a ring isomorphism 
  $e : \potion(\scrF_i)_{\mathsf{s}(T')} \cong \potion\left( \scrF_i T \right)$. 
  Hence, the preimage of $x$ under $\proj \leanmath{le}$ is the pullback of $x$ along the morphism
\[
  \begin{tikzcd}
    \potion(\scrF_i) \arrow[r, "\frac{\bullet}{1}"] & \potion(\scrF_i)_{\mathsf{s}(T')} \arrow[r, "\cong"', "e"] & \potion\left( \scrF_i T \right)
  \end{tikzcd}.
\]
\end{proof}

\begin{remark}{}{}
$\scrF^+$ is the type theoretic version of $\scrF \cup \scrF \cdot \left\{ S\middle|S~\text{ is a good potion ingrient} \right\}$.
We have to use the construction in~\cref{idealify} because the generality we choose is families of good potion ingredients indexed by a type.
If $\scrF$ is a set of good potion ingredients, we see $\scrF$ as a family indexed by $\scrF$ itself.
If $\scrF \subseteq \scrF'$ are two sets of good potion ingredients,
we can define $\leanmath{le} : \scrF \le \scrF'$ by the usual inclusion.
Then, we can repeat the proof in~\cref{enlarge-family} and show that the morphism $\proj \leanmath{le}: \proj \scrF \to \proj \scrF^{+}$ is an isomorphism
where $\scrF^{+}$ is the set $\scrF \cup \scrF \cdot \left\{ S\middle|S~\text{ is a good potion ingrient} \right\}$.
\end{remark}

\section{Functoriality} \label{sec:functo}
In this section, we work with graded ring homomorphisms
$\Phi : A \to B$ and $\Psi : B \to C$ among $\iota$-graded $R_0$ algebras where $\iota$ is an abelian group.
\begin{lstlisting}
variable {τ ι R₀ A B C : Type u}
variable [AddCommGroup ι] [DecidableEq ι] [CommRing R₀]
variable [CommRing A] [Algebra R₀ A] {𝒜 : ι → Submodule R₀ A}
variable [GradedAlgebra 𝒜]
variable [CommRing B] [Algebra R₀ B] {ℬ : ι → Submodule R₀ B}
variable [GradedAlgebra ℬ]
variable [CommRing C] [Algebra R₀ C] {𝒞 : ι → Submodule R₀ C}
variable [GradedAlgebra 𝒞]

variable (Φ : 𝒜 →+* ℬ) (Ψ : ℬ →+* 𝒞)
\end{lstlisting}

\begin{theorem}{}{}
  Let $\scrF$ be a family of good potion ingredients indexed by $\tau$, we have a morphism of schemes 
  $\proj\Phi : \proj \Phi_{\star}\scrF \to \proj \scrF$,
  where $\Phi_{\star}\scrF$ is the family defined by $i \mapsto \Phi_{\star}\scrF_i$.
\begin{flushright}
\cite[\lstinline|Proj.map| file location: {\footnotesize\textsf{Proj/Functorial.lean}}]{projconstruction}
\end{flushright}
\end{theorem}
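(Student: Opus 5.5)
The morphism is built by gluing, using the universal property of the glued scheme $\proj \Phi_{\star}\scrF$ as a multicoequalizer, just as $\proj\leanmath{le}$ was built in Construction~\ref{projHomOfLE}. First, $\Phi_{\star}\scrF$ is indeed a family of good potion ingredients of $B$. Relevance of $\Phi_{\star}\scrF_i$ is the lemma \lstinline|IsRelevant.map|. Finite generation holds because the image of a finite generating set of $\scrF_i$ generates $\Phi_{\star}\scrF_i$. So $\proj \Phi_{\star}\scrF$ makes sense.

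For each $i\in\tau$, the map $\leanmath{potionToMap}\colon \potion(\scrF_i)\to\potion(\Phi_{\star}\scrF_i)$ from Section~\ref{sec:potions} (Lemma~\ref{hloc-map} applied to $\Phi$) gives a morphism
\[
g_i\colon \spec\potion(\Phi_{\star}\scrF_i)\to \spec\potion(\scrF_i)\xrightarrow{\ \iota_i\ } \proj\scrF .
\]
We take $\proj\Phi$ to be the morphism induced by the family $(g_i)$ via \lstinline|Multicoequalizer.desc|. This requires checking compatibility on overlaps: for every pair $i,j$ we need
\[
f^{B}_{ij}\gg g_i \;=\; t^{B}_{ij}\gg f^{B}_{ji}\gg g_j
\]
as morphisms out of $\spec\potion(\Phi_{\star}\scrF_i\,\Phi_{\star}\scrF_j)$.

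To check this, I would insert the corresponding gluing relation for $\proj\scrF$, namely $f^{A}_{ij}\gg\iota_i = t^{A}_{ij}\gg f^{A}_{ji}\gg\iota_j$. This reduces the claim to two commutativity statements at the level of rings, after which one applies $\spec$. The first is the commutative square from Section~\ref{sec:potions} relating $\leanmath{potionToMul}$ and $\leanmath{potionToMap}$. The second is the identity $\Phi_{\star}(ST)=\Phi_{\star}S\,\Phi_{\star}T$ (\lstinline|map_mul|), which identifies $\potion(\Phi_{\star}(\scrF_i\scrF_j))$ with $\potion(\Phi_{\star}\scrF_i\,\Phi_{\star}\scrF_j)$ via $\leanmath{potionEquiv}$. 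It is also needed that $\leanmath{potionToMap}$ commutes with the swap isomorphism $\leanmath{potionEquiv}\colon\potion(ST)\cong\potion(TS)$. All of these are checked on representatives $\frac{a}{b}$ of homogeneous fractions, where both sides send $\frac ab$ to $\frac{\Phi(a)}{\Phi(b)}$.

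I expect the main obstacle to be type-theoretic rather than mathematical. The potions $\potion(\Phi_{\star}(ST))$ and $\potion(\Phi_{\star}S\,\Phi_{\star}T)$ are not definitionally equal, so the various $\leanmath{potionEquiv}$ transports have to be threaded through the diagram. Unfolding the glue data morphisms $f_{ij}$ and $t_{ij}$ through $\spec$ then produces equalities of ring homomorphisms. My first move would be to reduce to equality of ring maps with \lstinline|Spec.map| injectivity, or faithfulness of $\spec$. I would then prove pointwise equality by inducting on elements of the homogeneous localization via \lstinline|Quotient.inductionOn'|, which makes every map compute to $\frac{\Phi(a)}{\Phi(b)}$ by \lstinline|simp|.
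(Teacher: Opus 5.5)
Your proposal is correct and takes the same route as the paper: $\proj\Phi$ is obtained by gluing the composites $\spec\potion(\Phi_{\star}\scrF_i)\to\spec\potion(\scrF_i)\xrightarrow{\iota}\proj\scrF$ induced by \texttt{potionToMap}. The paper only asserts that these morphisms glue, so your overlap check fills in the verification it leaves implicit: the \texttt{potionToMul}/\texttt{potionToMap} square, the identity $\Phi_{\star}(ST)=\Phi_{\star}S\,\Phi_{\star}T$, and compatibility with the swap \texttt{potionEquiv}.
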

\begin{proof}{}
  For each $i \in \tau$, we have a ring homomorphism\footnote{see~\cref{sec:potions}} $\leanmath{potionToMap}: \potion(\scrF_i) \to \potion\left( \Phi_{\star}\scrF_i \right)$.
  Therefore, we have a family of morphisms of schemes 
  \[
\begin{tikzcd}
\spec{\potion\left( \Phi_{\star}\scrF_i \right)} \arrow[r] & \spec \potion\left( \scrF_i \right) \arrow[r, "\iota"] & \proj \scrF \\
\end{tikzcd}.
  \]
  We can glue these morphisms to form a morphism of schemes $\proj \Phi_{\star}\scrF \to \proj \scrF$.
\end{proof}

\begin{corollary}{}{}
  $\proj\id_A: \proj \scrF \to \proj \scrF$ is the identity and $\proj(\Psi \circ \Phi) = \proj \Phi \circ \proj \Psi$.
\begin{flushright}
\cite[\lstinline|Proj.map_id| and \lstinline|Proj.map_comp| file location: {\footnotesize\textsf{Proj/Functorial.lean}}]{projconstruction}
\end{flushright}
\end{corollary}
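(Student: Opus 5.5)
We need to prove that $\proj\id_A$ is the identity and that $\proj(\Psi\circ\Phi)=\proj\Phi\circ\proj\Psi$. The plan uses the universal property of the glued scheme. $\proj\scrF$ is a multicoequalizer of the $\spec\potion(\scrF_i)$, so two morphisms out of it agree as soon as they agree after precomposition with every chart inclusion $\iota_i:\spec\potion(\scrF_i)\to\proj\scrF$. In Lean this is the extensionality lemma for \lstinline|Multicoequalizer.desc| / \lstinline|(glueData _).hom_ext|. Every equality is therefore reduced to a chart-by-chart statement. By the construction of $\proj\Phi$ via \lstinline|Multicoequalizer.desc|, its restriction to the $i$-th chart is
\[
\spec\bigl(\leanmath{potionToMap}_\Phi\bigr)\gg\iota_i ,
\]
possibly conjugated by a \leanmath{potionEquiv}. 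Such a conjugation arises because the indexing family is $\Phi_\star\scrF$ rather than literally $\scrF$.

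For identity, we first note that $(\id_A)_\star S=S$ as homogeneous submonoids, since the map of a submonoid along the identity is itself. The families $(\id_A)_\star\scrF$ and $\scrF$ therefore agree, possibly only propositionally, and this is handled by transport along a \leanmath{potionEquiv}. On the $i$-th chart, $\leanmath{potionToMap}_{\id}$ is the map from Lemma \ref{hloc-map} induced by $\id_A$. It sends the class of $(j,a,b)$ to the class of $(j,a,b)$, so after the identification it is the identity ring map. Hence $\spec$ of it is the identity, and the restriction of $\proj\id_A$ to chart $i$ is $\iota_i$, as required.

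For composition, we observe that $(\Psi\circ\Phi)_\star S=\Psi_\star(\Phi_\star S)$, because the map of a submonoid along a composite is the composite of maps. On chart $i$ we compare the two sides. The left side gives $\spec(\leanmath{potionToMap}_{\Psi\circ\Phi})\gg\iota_i$. The right side gives $\spec(\leanmath{potionToMap}_\Psi)\gg\iota^{\Phi_\star\scrF}_i\gg\proj\Phi$, and by the defining property of the descended map this equals
\[
\spec(\leanmath{potionToMap}_\Psi)\gg\spec(\leanmath{potionToMap}_\Phi)\gg\iota_i .
\]
It then remains to check the ring identity
\[
\leanmath{potionToMap}_\Psi\circ\leanmath{potionToMap}_\Phi=\leanmath{potionToMap}_{\Psi\circ\Phi}.
\]
This holds up to the canonical \leanmath{potionEquiv}, and we verify it on generators by induction on a representative $(j,a,b)$: both sides send it to $(j,\Psi\Phi a,\Psi\Phi b)$. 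This follows from the pointwise definition in Lemma \ref{hloc-map} together with functoriality of $\spec$.

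The main obstacle is bookkeeping rather than mathematics. The index families $(\Psi\circ\Phi)_\star\scrF$ and $\Psi_\star\Phi_\star\scrF$ are only propositionally equal, so the two $\proj$ schemes being compared are identified through an \lstinline|eqToHom|-style isomorphism, or through the \leanmath{potionEquiv} inserted in the construction. The first thing to try is to prove a general lemma saying that \leanmath{potionEquiv} commutes with \leanmath{potionToMap} and evaluates to the identity on representatives. With that lemma the chart-wise equalities become \lstinline|rfl| after \lstinline|Quotient.inductionOn|. If that fails, we would state the corollary with an explicit transport isomorphism between the two $\proj$'s and prove compatibility on charts the same way.
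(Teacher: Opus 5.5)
Your proposal is correct and follows the paper's proof. You reduce each equality to the charts via \texttt{Multicoequalizer.hom\_ext}, where both sides are $\spec(\leanmath{potionToMap})\gg\iota_i$ and agree on representatives, and you absorb the non-definitional equalities $(\id_A)_\star\scrF=\scrF$ and $(\Psi\circ\Phi)_\star\scrF=\Psi_\star\Phi_\star\scrF$ by an explicit transport. In the paper this transport is exactly your fallback: $\proj\leanmath{le}$ for $\leanmath{le}$ the identity on $\tau$, whose chart maps are $\spec$ of \leanmath{potionEquiv}. The formal statements then become $\proj\id_A=\proj\leanmath{le}$ and $\proj(\Psi\circ\Phi)=\proj\leanmath{le}\gg\proj\Psi\gg\proj\Phi$, each closed chart-wise by \texttt{rfl}.
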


\begin{remark}{}{}
Since ${\id_A}_{\star}\scrF$ is not \emph{definitionally} equal to $\scrF$, \lstinline|Proj.map (.id 𝒜) ℱ  = 𝟙 _| will not type-check.
Hence, we use $\proj\leanmath{le}$ constructed in~\cref{projHomOfLE} to create a morphism $\proj {\id_A}_{\star}\scrF \to \proj \scrF$.
\begin{lstlisting}[caption={$\proj$ is a contravariant functor}]
lemma Proj.map_id (ℱ : τ → GoodPotionIngredient 𝒜) :
    Proj.map (.id 𝒜) ℱ  =
    projHomOfLE { le := { toFun := id, inj' _ _ h := h }, comp := ... } := by
  apply Multicoequalizer.hom_ext
  rintro i
  rfl
\end{lstlisting}
Similarly, since $(\Psi \circ \Phi)_{\star}\scrF$ is not definitionally equal to $\Psi_{\star}\left( \Phi_{\star}\scrF \right)$, we formalise 
$\proj(\Psi \circ \Phi) = \proj \Phi \circ \proj \Psi$ as follows:
\begin{lstlisting}
lemma Proj.map_comp (ℱ : τ → GoodPotionIngredient 𝒜) :
  Proj.map (Ψ.comp Φ) ℱ =
  projHomOfLE
  { le := { toFun := id, inj' _ _ h := h }, comp :=  ...] } ≫ 
  Proj.map Ψ _ ≫ Proj.map Φ ℱ := ...
\end{lstlisting}
In both cases, $\leanmath{le}$ is nothing but the identity function on $\tau$, but nevertheless, 
$\proj\leanmath{le}$ is not literally the identity map. Hence, set theoretically, we have checked that $\proj$ is a contravriant functor
from the category of graded ring to the category of schemes; however, type theoretically, we cannot write the functor down, at least not in the current formulation.
\end{remark}

\section{Dilatations of rings} \label{sec:dil-ring}
\begin{flushright}
\cite[file location: {\footnotesize\textsf{Dilatation}}]{projconstruction}
\end{flushright}

In this section, we formalize the notion of dilatations of rings as studied in \cite{stacks-project, M24, DMdS23}. 
We fix a commutative unital ring $A$. 
A multi-center in $A$ is a a collection $\{ [M_i, a_i]\}_{i \in I}$ where each $M_i$ is an ideal of $A$ and each $a_i$ is an element in $A$ \cite{M24}. 

\begin{lstlisting}
variable (A : Type*) [CommSemiring A]

structure Multicenter where
  (index : Type*)
  (ideal : index → Ideal A)
  (elem : index → A)
\end{lstlisting}

For each index, we define a larger ideal as $L_i= M_i + (a_i)$.

\begin{lstlisting}
def LargeIdeal (i : F.index) : Ideal A := F.ideal i + Ideal.span {F.elem i}
\end{lstlisting}

Let $\mathbb{N}_I $ be the monoid $\bigoplus _{i \in I} \mathbb{N} $. If $\nu = (\nu_1 , \ldots , \nu_i, \ldots ) \in \mathbb{N} _I $ we put $L^{\nu}= L_1 ^{\nu_1} \cdots L_i ^{\nu _i} \cdots  $ (product of ideals of $A$) and $a^{\nu}= a_1^{\nu_1 } \cdots a_i ^{\nu _i} \cdots$ (product of elements of $A$). Note that if $\nu \in \mathbb{N} _{I}$ is such that $\nu _i =0$ for all $i$, then $L_i^{\nu _i}= L^\nu =A$. We also put $a^{\mathbb{N} _I} = \{ a^\nu | \nu \in \mathbb{N} _I \} \subset A$.

\begin{definition} \label{generaldila} The dilatation of $A$ with multi-center $\{[M_i , a_i] \}_{i \in I}$ is the unital commutative ring $A[\big\{ \frac{M_i}{a_i}\big\}_{i \in I}]$ defined as follows:

$\bullet$  The underlying set of $A[\big\{ \frac{M_i}{a_i}\big\}_{i \in I}]$ is the set of equivalence classes of symbols $\frac{m}{a^{\nu}} $ where $ \nu \in \mathbb{N} _I$ and $m \in L^{\nu}$ under the equivalence relation 
\[ \frac{m}{a^\nu} \equiv \frac{p}{a^{\lambda}} \Leftrightarrow \exists \beta \in \mathbb{N} _I \text{ such that } ma^{\beta + \lambda }= p a^{\beta + \nu } \text{ in } A.\] From now on, we abuse notation and denote a class by any of its representative $\frac{m}{a^{\nu}}$ if no confusion is likely.

$\bullet$ The addition law is given by $\frac{m}{a^{\nu}}+ \frac{p}{a^{\beta}}= \frac{m a^{\beta } + p a^{\nu}}{a^{\beta + \nu}}$.

$\bullet$ The multiplication law is given by $\frac{m}{a^{\nu}}\times  \frac{p}{a^{\beta}} = \frac{ mp}{a^{\nu + \beta }}$.

$\bullet$ The additive neutral element is $\frac{0}{1}$ and the multiplicative neutral element is $\frac{1}{1}$.

We have a canonical morphism of rings $A \to A[\big\{ \frac{M_i}{a_i}\big\}_{i \in I}]$ given by $a \mapsto \frac{a}{1}$. We sometimes use the notations $A[\frac{M}{a}]$ or $A[\big\{ \frac{M_i}{a_i} :i \in I \big\}]$ to denote $A[\big\{ \frac{M_i}{a_i}\big\}_{i \in I}]$.
\end{definition}

In the formalization, we first introduce a structure encoding all symbolic fractions. We then introduce the equivalence class on fractions to define the dilatation as a set. 

\begin{lstlisting}


variable {A : Type*} [CommSemiring A]
variable {ι : Type*} (F : ι → Ideal A) (a : ι → A)


lemma familyPow_def (v : ι →₀ ℕ) : F^v = v.prod fun i k ↦ F i ^ k := rfl


structure PreDil where
  pow : F.index →₀ ℕ
  num : A
  num_mem : num ∈ F.LargeIdeal ^ pow

def r : F.PreDil → F.PreDil → Prop := fun x y =>
  ∃ β : F.index →₀ ℕ, x.num * F.elem^(β + y.pow) = y.num * F.elem^(β + x.pow)
  
  def setoid : Setoid (F.PreDil) where
  r := F.r
  iseqv :=
  { refl := r_refl
    symm {x y} := r_symm x y
    trans {x y z} := r_trans x y z }
    
    
variable (F) in
def Dilatation := Quotient F.setoid


scoped notation:max ring"["multicenter"]" => Dilatation (A := ring) multicenter


def mk (x : F.PreDil) : A[F] := Quotient.mk _ x
\end{lstlisting}

We then implement the addition and the multiplication on fractions and check compatibility with the equivalence relation. We thus obtain the formalization of the dilatation semiring. 

\begin{lstlisting}
instance instCommSemiring : CommSemiring A[F] 
\end{lstlisting}

We formalize that $A[F]$ is an $A$-algebra.

\begin{lstlisting}
def fromBaseRing : A →+* A[F] where
  toFun x := .mk
        { pow := 0
          num := x
          num_mem := by simp }
          
lemma algebraMap_eq : (algebraMap A A[F]) = fromBaseRing F
\end{lstlisting}

We prove that the image of $a^\nu$ in the dilatation is always a nonzerodivisor. 

\begin{lstlisting}
lemma nonzerodiv_image (v : F.index →₀ ℕ) : algebraMap A A[F] (F.elem^v) ∈ nonZeroDivisors A[F] 
\end{lstlisting}

As another fundamental property, we prove that $(a^\nu)= L^\nu$ in the dilatation. 

\begin{lstlisting}
lemma image_elem_LargeIdeal_equal  (v : F.index →₀ ℕ) :
 Ideal.span ({algebraMap A A[F] (F.elem^v)}) =
    Ideal.map (algebraMap A A[F]) (F.LargeIdeal^v)
\end{lstlisting}

We now implement that if $A$ is a ring (not only a semiring), then the dilatation is also a ring as it is compatible with opposite. 

\begin{lstlisting}
variable {A : Type _} [CommRing A] {F : Multicenter A}


instance : CommRing A[F] where

\end{lstlisting}

We now come to the universal property \cite{M24}. 

\begin{proposition} (Universal property) \label{univpropdilarings}
If $\chi : A \to B$ is a morphism of rings such that $\chi (a_i) $ is a non-zero-divisor and generates $\chi (L_i) B$ for all $i\in I$, then there exists a unique morphism $\chi '$ of $A$-algebras $A[\big\{ \frac{M_i}{a_i}\big\}_{i \in I}] \to B$. The morphism $\chi'$ sends $\frac{l}{a^\nu} $ to the unique element $b \in B $ such that $\chi ( a^\nu) b = \chi (l)$.
\end{proposition}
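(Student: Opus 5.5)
We construct $\chi'$ directly on representatives and then check it is well defined, a ring homomorphism, and unique. Fix a symbol $\frac{l}{a^\nu}$ with $\nu \in \mathbb{N}_I$ and $l \in L^\nu$.

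\textbf{Existence of $b$.} Since each $\chi(a_i)$ generates $\chi(L_i)B$, multiplicativity of ideal images gives
\[ \chi(L^\nu)B = \prod_i \bigl(\chi(L_i)B\bigr)^{\nu_i} = \bigl(\chi(a^\nu)\bigr). \]
Hence $\chi(l) = \chi(a^\nu)\, b$ for some $b \in B$.

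\textbf{Uniqueness of $b$.} The element $\chi(a^\nu)$ is a finite product of the non-zero-divisors $\chi(a_i)$, so it is itself a non-zero-divisor. Therefore $b$ is unique. Write $b = \chi'\bigl(\frac{l}{a^\nu}\bigr)$.

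\textbf{Well-definedness.} Suppose $\frac{m}{a^\nu} \equiv \frac{p}{a^\lambda}$, so that $m a^{\beta+\lambda} = p a^{\beta+\nu}$ for some $\beta$. Let $b, b'$ be the corresponding elements, so $\chi(m) = \chi(a^\nu) b$ and $\chi(p) = \chi(a^\lambda) b'$. Applying $\chi$ to the relation gives
\[ \chi(a^{\nu+\beta+\lambda})\, b = \chi(a^{\lambda+\beta+\nu})\, b'. \]
Cancelling the non-zero-divisor $\chi(a^{\nu+\beta+\lambda})$ yields $b = b'$.

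\textbf{Ring homomorphism.} We use the same cancellation principle throughout: two candidate values agree once they agree after multiplication by a suitable $\chi(a^\gamma)$.
\begin{itemize}
\item Addition: $\chi(a^{\nu+\beta})\,(b_1 + b_2) = \chi(m a^\beta + p a^\nu)$, so $\chi'$ respects sums.
\item Multiplication: $\chi(a^{\nu+\beta})\, b_1 b_2 = \chi(mp)$, so $\chi'$ respects products.
\item Units: $\chi'\bigl(\frac{x}{1}\bigr) = \chi(x)$, since $a^0 = 1$. This gives $\chi'(1) = 1$, and it shows the composite $A \to A[\{\frac{M_i}{a_i}\}] \to B$ equals $\chi$, so $\chi'$ is an $A$-algebra map.
\end{itemize}

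\textbf{Uniqueness of $\chi'$.} Let $\psi$ be any $A$-algebra morphism. Then
\[ \chi(a^\nu)\, \psi\Bigl(\frac{l}{a^\nu}\Bigr) = \psi\Bigl(\frac{a^\nu}{1}\Bigr)\psi\Bigl(\frac{l}{a^\nu}\Bigr) = \psi\Bigl(\frac{l}{1}\Bigr) = \chi(l), \]
where the middle step uses the identity $\frac{a^\nu}{1}\cdot\frac{l}{a^\nu} = \frac{l a^\nu}{a^\nu} = \frac{l}{1}$, checked with $\beta = 0$. By the uniqueness of $b$, we get $\psi = \chi'$.

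\textbf{Main obstacle.} The most delicate step is the first one: showing $\chi(L^\nu)B = (\chi(a^\nu))$. This requires $\mathrm{map}(I J) = \mathrm{map}(I)\,\mathrm{map}(J)$ together with the finitely supported product over $\mathbb{N}_I$. In Lean we would prove it by induction on the finsupp $\nu$, using \texttt{Ideal.map\_mul}, \texttt{Ideal.map\_pow} and \texttt{Ideal.span\_singleton\_pow}.

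The remaining steps are choice-based definitions followed by cancellation arguments. They are routine but require the quotient lifting API, \texttt{Quotient.lift} with the well-definedness proof above.
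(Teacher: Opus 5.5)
Your proposal is correct and follows essentially the same route as the paper's formalization. You first show that $\chi(a^\nu)b=\chi(l)$ has a unique solution $b$, which corresponds to \texttt{lemma\_exists\_in\_image}, \texttt{def\_unique\_elem} and \texttt{def\_unique\_elem\_unique}; you then define $\chi'$ by choice on representatives and check it is a well-defined $A$-algebra map (\texttt{desc}, \texttt{dsc\_spec}), and you deduce uniqueness of $\chi'$ by cancelling the non-zero-divisor $\chi(a^\nu)$ (\texttt{lemma\_exists\_unique\_morphism}).
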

We split the formalization as follows. 
\begin{lstlisting}
lemma  lemma_exists_in_image [Algebra A B]
    (non_zero_divisor : ∀ i : F.index, (algebraMap A B) (F.elem i) ∈ nonZeroDivisors B)
    (gen : ∀ i, Ideal.span {(algebraMap A B) (F.elem i)} = Ideal.map (algebraMap A B) (F.LargeIdeal i)):
    (∀(ν : F.index →₀ ℕ) (m : F.LargeIdeal^ν) ,  (∃! bm : B ,  (algebraMap A B) (F.elem^ν) *bm=(algebraMap A B) (m) ))
    
    def def_unique_elem [Algebra A B] (v : F.index →₀ ℕ) (m : F.LargeIdeal^v)
    (non_zero_divisor : ∀ i : F.index, (algebraMap A B) (F.elem i) ∈ nonZeroDivisors B)
    (gen : ∀ i, Ideal.span {(algebraMap A B) (F.elem i)} = Ideal.map (algebraMap A B) (F.LargeIdeal i)): B :=
  (lemma_exists_in_image  F  non_zero_divisor gen v m).choose
  
  lemma def_unique_elem_unique  [Algebra A B] (v : F.index →₀ ℕ) (m : F.LargeIdeal^v)
    (non_zero_divisor : ∀ i : F.index, (algebraMap A B) (F.elem i) ∈ nonZeroDivisors B)
    (gen : ∀ i, Ideal.span {(algebraMap A B) (F.elem i)} = Ideal.map (algebraMap A B) (F.LargeIdeal i)):
    ∀ bm : B, (algebraMap A B) (F.elem^v) * bm = (algebraMap A B) m →  def_unique_elem F v m non_zero_divisor gen =bm
    
    def desc [Algebra A B]
    (non_zero_divisor : ∀ i : F.index, (algebraMap A B) (F.elem i) ∈ nonZeroDivisors B)
    (gen : ∀ i, Ideal.span {(algebraMap A B) (F.elem i)} = Ideal.map (algebraMap A B) (F.LargeIdeal i)) :
    A[F] →ₐ[A] B where
    
    lemma dsc_spec [Algebra A B] (v : F.index →₀ ℕ) (m : F.LargeIdeal^v)
    (non_zero_divisor : ∀ i : F.index, (algebraMap A B) (F.elem i) ∈ nonZeroDivisors B)
    (gen : ∀ i, Ideal.span {(algebraMap A B) (F.elem i)} = Ideal.map (algebraMap A B) (F.LargeIdeal i)):
    (algebraMap A B) (F.elem^v) * desc F non_zero_divisor gen (m/.v)  = (algebraMap A B) m
    
    lemma  lemma_exists_unique_morphism [Algebra A B]
    (non_zero_divisor : ∀ i : F.index, (algebraMap A B) (F.elem i) ∈ nonZeroDivisors B)
    (gen : ∀ i, Ideal.span {(algebraMap A B) (F.elem i)} = Ideal.map (algebraMap A B) (F.LargeIdeal i))
    (χ':A[F]→ₐ[A] B)  : χ' = desc F non_zero_divisor gen 
    
    lemma  lemma_exists_unique_morphism' [Algebra A B]
    (non_zero_divisor : ∀ i : F.index, (algebraMap A B) (F.elem i) ∈ nonZeroDivisors B)
    (gen : ∀ i, Ideal.span {(algebraMap A B) (F.elem i)} = Ideal.map (algebraMap A B) (F.LargeIdeal i))
    (χ χ' :A[F]→ₐ[A] B)  : χ = χ'
\end{lstlisting}


\begin{thebibliography}{99.}



\bibitem{BV25} A. Bouthier, E. Vasserot, {\it On the geometric Satake equivalence for Kac-Moody groups}, arXiv:2510.11466 (2025)

 
\bibitem{BS03} H.\,Brenner, S.\,Schröer, {\it Ample families, multihomogeneous spectra, and algebraization of formal schemes} Pacific J. Math. 208 (2003), no. 2, 209–230.

\bibitem{Bual} K. Buzzard et al., Schemes in Lean, Exp. Math. {\bf 31} (2022), no.~2, 355--363.
                 

\bibitem{DMdS23} A.~Dubouloz, A.~Mayeux, and J.~P. dos Santos, \emph{A survey on algebraic dilatations}, Field Institute Monographs, to appear.


\bibitem{Gr61} A.\,Grothendieck, {\it EGA : II. Etude globale elementaire de quelques classes de morphismes} Publications mathématiques de l'I.H.E.S., tome 8 (1961), p. 5-222


\bibitem{KSU21}
A.~Kuronya, P.~Souza, and M.~Ulirsch, \emph{Tropicalization of toric prevarieties},  preprint arXiv:2107.03139.

\bibitem{jjaassoonn_dimensiontheory} F. Li and J. Zhang, { \it Dimension Theory in Lean4}, 2025, https://github.com/jjaassoonn/DimensionTheory


\bibitem{mathlib4docs} The mathlib Community, {\it Mathlib4 Documentation}

\bibitem{M24}A. Mayeux \textit{Multi-centered Dilatations, Congruent Isomorphisms and Rost Double Deformation Space.} Transf. Groups. Volume 31, pages 1801–1850 (2026).

\bibitem{MR24}A. Mayeux and S. Riche, \textit{\it On multi-graded Proj schemes}, Publ. Res. Inst. Math. Sci. (2026) 

\bibitem{MZ} A. Mayeux and J. Zhang, \textit{The mechanization of science illustrated by the Lean formalization of the multi-graded Proj construction}, Springer Proc. Math. Stat., to appear.

\bibitem{projconstruction}A. Mayeux and J. Zhang, \textit{\it Multi-graded Proj construction in Lean4}, https://github.com/ProjConstruction/Proj (2025)


\bibitem{Mou21} L. de~Moura and S. Ullrich, The Lean 4 theorem prover and programming language, in {\it Automated deduction---CADE 28}, 625--635, Lecture Notes in Comput. Sci., 12699, Springer, Cham.


\bibitem{stacks-project} The stacks project.


\bibitem{Ta26} L. Tang, The $P^1$-motivic Gysin map, arXiv:2604.24888 (2026).

\bibitem{Z23}J. Zhang, \textit{Formalising the Proj construction in Lean}, in {\it 14th International Conference on Interactive Theorem Proving}, Art. No. 35, LIPIcs. Leibniz Int. Proc. Inform., 268 (2023).

\bibitem{WZ22}
E.\,Wieser and J.\,Zhang, {\it Graded rings in Lean's dependent type theory,} in {\it Intelligent computer mathematics}, 122--137, Lecture Notes in Comput. Sci. Lecture Notes in Artificial Intelligence, 13467 , Springer, Cham, 2022.

\bibitem{Ya26} T. Yasuda, \emph{An algorithm for the minimal model program in dimension three}, (2026) arXiv:2603.13703





\end{thebibliography}
\end{document}